\newcommand{\R}{\mathbb{R}}
\newcommand{\N}{\mathbb{N}}
\newcommand{\E}{\mathbb{E}}
\newcommand{\F}{\mathbb{F}}
\renewcommand{\P}{\mathbb{P}}
\renewcommand{\L}{\mathbb{L}}
\newcommand{\C}{\mathbb{C}}
\newcommand{\Q}{\mathbb{Q}}
\renewcommand{\S}{\mathbb{S}}
\newcommand{\bone}{\mathbf 1}
\theoremstyle{plain}
\newtheorem{theorem}{Theorem}[section]
\newtheorem{lemma}[theorem]{Lemma}
\newtheorem{proposition}[theorem]{Proposition}
\newtheorem{corollary}[theorem]{Corollary}
\newtheorem{condition}{Condition}
\newtheorem{Assumption}{Assumption}
\newcommand{\settheoremtag}[1]{
	\let\oldtheAssumption\theAssumption
	\renewcommand{\theAssumption}{#1}
	\g@addto@macro\endAssumption{
		\global\let\theAssumption\oldtheAssumption}
}
\theoremstyle{remark}
\newtheorem{remark}[theorem]{Remark}
\newcommand{\calc}{{\cal C}}
\newcommand{\calu}{{\cal U}}
\newcommand{\calf}{{\cal F}}
\newcommand{\cale}{{\cal E}}
\newcommand{\call}{{\cal L}}
\newcommand{\calt}{{\cal T}}
\newcommand{\calg}{{\cal G}}
\newcommand{\caly}{{\cal Y}}
\newcommand{\calv}{{\cal V}}
\newcommand{\calz}{{\cal Z}}
\newcommand{\pf}{{\mathfrak p}}
\newcommand{\qf}{{\mathfrak q}}
\newcommand{\limd}{\stackrel{d}{\longrightarrow}}
\newcommand{\limst}{\stackrel{\mathcal{L}-s}{\longrightarrow}}
\newcommand{\limp}{\stackrel{\mathbb{P}}{\longrightarrow}}
\newcommand{\al}{{\alpha}}
\newcommand{\la}{{\lambda}}
\newcommand{\La}{{\Lambda}}
\newcommand{\eps}{{\epsilon}}
\newcommand{\ga}{{\gamma}}
\newcommand{\Ga}{{\Gamma}}
\newcommand{\vp}{{\varphi}}
\newcommand{\si}{{\sigma}}
\newcommand{\om}{{\omega}}
\newcommand{\Om}{{\Omega}}
\newcommand{\ov}{\overline}
\newcommand{\un}{\underline}
\newcommand{\wh}{\widehat}
\newcommand{\wt}{\widetilde}
\newcommand{\mm}{\mathrm{m}}
\newcommand{\cc}{\mathrm{c}}
\newcommand{\bj}{{\boldsymbol{j}}}
\newcommand{\bs}{{\boldsymbol{s}}}
\newcommand{\bz}{{\boldsymbol{z}}}
\newcommand{\by}{{\boldsymbol{y}}}
\newcommand{\bY}{{\boldsymbol{Y}}}
\newcommand{\uc}{\mathrm{uc}}
\newcommand{\Leb}{\mathrm{Leb}}
\newcommand{\Den}{\Delta_n}
\newcommand{\avar}{\mathrm{AVar}}
\newcommand{\bthm}{\begin{theorem}}
	\newcommand{\ethm}{\end{theorem}}
\newcommand{\bcor}{\begin{corollary}}
	\newcommand{\ecor}{\end{corollary}}
\newcommand{\blem}{\begin{lemma}}
	\newcommand{\elem}{\end{lemma}}
\newcommand{\bprop}{\begin{proposition}}
	\newcommand{\eprop}{\end{proposition}}
\newcommand{\bcond}{\begin{condition}}
	\newcommand{\econd}{\end{condition}}
\newcommand{\bdf}{\begin{definition}}
	\newcommand{\edf}{\end{definition}}
\newcommand{\bex}{\begin{example}}
	\newcommand{\eex}{\end{example}}
\newcommand{\brem}{\begin{remark}}
	\newcommand{\erem}{\end{remark}}
\newcommand{\bpr}{\begin{proof}}
	\newcommand{\epr}{\end{proof}}
\newcommand{\benu}{\begin{enumerate}}
	\newcommand{\eenu}{\end{enumerate}}
\newcommand{\beq}{\begin{equation}}
	\newcommand{\eeq}{\end{equation}}
\newcommand{\bit}{\begin{itemize}}
	\newcommand{\eit}{\end{itemize}}
\newcommand{\bass}{\begin{Assumption}}
	\newcommand{\eass}{\end{Assumption}}
\numberwithin{equation}{section}
\DeclareMathOperator\Log{Log}
\begin{document}
	
	\begin{frontmatter}
	\title{Asymptotic Expansions for High-Frequency Option Data}
		\runtitle{Asymptotic Expansions for High-Frequency Option Data}
		
		\begin{aug}
			\author[A]{\fnms{Carsten H.} \snm{Chong}\ead[label=e1]{carstenchong@ust.hk}}
			\and
			\author[B]{\fnms{Viktor} \snm{Todorov}\ead[label=e2]{v-todorov@kellogg.northwestern.edu}}
			
			\address[A]{Department of Information Systems, Business Statistics and Operations Management
				The Hong Kong University of Science and Technology,
				\printead{e1}}
			
			\address[B]{Department of Finance, Northwestern University,
				\printead{e2}}
		\end{aug}

			\begin{abstract}
			 We derive a nonparametric higher-order asymptotic expansion for small-time changes of conditional characteristic functions of  It\^o semimartingale increments. The asymptotics setup is of joint type: both the length of the time interval of the increment of the underlying process and the time gap between evaluating the conditional characteristic function are shrinking. The spot semimartingale characteristics of the underlying process as well as their spot semimartingale characteristics appear as leading terms in the derived asymptotic expansions. The analysis applies to a general class of It\^o semimartingales that includes in particular L\'{e}vy-driven SDEs and time-changed L\'{e}vy processes. The asymptotic expansion results are subsequently used to construct a test for whether volatility jumps are of finite or infinite variation. In an application to high-frequency data of options written on the S\&P 500 index, we find evidence for infinite variation volatility jumps.
		\end{abstract}
		
		\begin{keyword}[class=MSC]
			\kwd[Primary ]{60E10}
			\kwd{60G48}
			\kwd{62G10}
			\kwd[; secondary ]{62M99}
			\kwd{62P20}
		\end{keyword}
		
		\begin{keyword}
			\kwd{Characteristic function}\kwd{deep Itô semimartingale}\kwd{higher-order asymptotic expansion}\kwd{hypothesis testing}\kwd{infinite variation jumps}\kwd{stochastic volatility}\kwd{options}\kwd{volatility jumps}
		\end{keyword}
		
	\end{frontmatter}

	\section{Introduction}
	
	Trading in options have increased a lot over the past decade with stock option volume even exceeding underlying shares volume in 2020. The improved liquidity in options markets makes the use of option data sampled at high frequencies practically feasible. According to a result by \cite{CM01}, the conditional characteristic function of a future price increment can be constructed nonparametrically from a portfolio of options with different strikes. More precisely, if $x$ denotes the logarithmic price of an asset and $O_{t,T}(k)$ denotes the price at time $t$ of an European style out-of-the-money option expiring at $t+T$ and with log-strike $k$, that is, 
	\begin{equation}\label{eq:opt} 
		O_{t,T}(k)=\begin{cases} \E_t[(e^k- e^{x_{t+T}})\vee 0] &\text{if } k\leq x_t, \\ \E_t[( e^{x_{t+T}}-e^k)\vee 0] &\text{if } k> x_t, \end{cases}
	\end{equation}
	then
	\begin{equation}\label{eq:spanning}
		\E_t[e^{iu(x_{t+T}-x_t)/\sqrt{T}}] = 1-\left(\frac{u^2}{T}+i\frac{u}{\sqrt{T}}\right)e^{-x_t}\int_{\mathbb{R}}e^{(iu/\sqrt{T}-1)(k-x_t)}O_{t,T}(k)dk
	\end{equation}
	for all $u\in\R$. In \eqref{eq:opt} and \eqref{eq:spanning}, $\E_t $ denotes conditional expectation with respect to the risk-neutral probability measure  given  information up to time $t$. For simplicity, we have assumed in \eqref{eq:spanning} that both the dividend yield and the risk-free interest rate are equal to  zero.
	
	Since option portfolios at a single time $t$ can be used to obtain $\E_t[e^{iu(x_{t+T}-x_t)/\sqrt{T}}]$, high-frequency option data gives access to increments of the conditional characteristic function of the form
	\begin{equation}\label{eq:exp} 
		\mathbb{E}_t[e^{iu(x_{t+T}-x_t)/\sqrt{T}}]- \mathbb{E}_{t-\Delta}[e^{iu(x_{t+T}-x_{t-\Delta})/\sqrt{T-\Delta}}],~~~u\in\mathbb{R}~\textrm{and}~\Delta>0.
	\end{equation}  
	
	Motivated by this, the goal of this paper is to develop higher-order asymptotic expansions of \eqref{eq:exp} for some It\^o semimartingale process $x$, when both $\Delta\downarrow 0$ and $T\downarrow 0$ simultaneously. 
Such expansions relate to several strands of work in the literature. First,  asymptotic expansions to first order for $	\mathbb{E}_t[e^{iu(x_{t+T}-x_t)/\sqrt{T}}]$ have been derived by \cite{jacod2014efficient,jacod2018limit}. Second, \cite{T21} extends the work of \cite{jacod2014efficient,jacod2018limit} by deriving higher-order expansions for characteristic functions of It\^o semimartingales. The generalization of this expansion result to the case with rough volatility and jump intensity is provided by \cite{CT22}. Finally, there is a lot of earlier work on asymptotic expansions of conditional expectations of various transforms of It\^o semimartingale increments over short time intervals and/or of their conditional distribution, see e.g., \cite{ruschendorf2002expansion}, \cite{figueroa2009small}, \cite{MN11}, \cite{figueroa2012small}, \cite{bentata2012short}, among others. 
	
The main difference of the current paper from all of the above-cited work is that our interest is in deriving an asymptotic expansion result for an \emph{increment}   of the conditional characteristic function. Since $\Delta$ in \eqref{eq:exp} is typically much smaller than $T$ (e.g., several minutes versus several days), just characterizing the first few leading terms of $\mathbb{E}_t[e^{iu(x_{t+T}-x_t)/\sqrt{T}}]$ as $T$ shrinks is not sufficient in general for deriving the asymptotic behavior of \eqref{eq:exp}. Instead, in order to  derive a general expansion result for \eqref{eq:exp} in terms of $\Delta$ and $T$, with only mild restrictions on the relative speed with which $\Delta$ and $T$ shrink to zero, we first need to expand $\mathbb{E}_t[e^{iu(x_{t+T}-x_t)/\sqrt{T}}]$ up to a residual error term of asymptotic order $O_p(T^k)$ (where $k$ is connected to our assumptions on $x$ and, in general, can be arbitrarily large) and then further analyze its behavior as $t$ varies. To the best of our knowledge, such double asymptotics have not been investigated thus far.

The asymptotic expansion results of the paper are derived under a so-called deep It\^o semimartingale assumption for the process $x$. Deep It\^o semimartingales are It\^o semimartingales for which the spot semimartingale characteristics obey It\^o semimartingale dynamics, the spot semimartingale characteristics of the spot semimartingale characteristics obey It\^o semimartingale dynamics, etc., up to order $k$, for some integer $k$. The class of deep It\^o semimartingales is rather large. In particular, it covers the class of L\'{e}vy-driven SDE-s and time-changed L\'{e}vy processes, commonly used in applied work. Importantly, our results are derived for general specifications of the jump part of the process with arbitrary dependence between price and diffusive volatility jumps and without restricting the level of jump activity. We do rule out, however, processes with rough volatility paths. Asymptotic expansions of $\mathbb{E}_t[e^{iu(x_{t+T}-x_t)/\sqrt{T}}]$ derived in \cite{CT22} for such processes are quite different from that for an It\^o semimartingale, and the same will carry over to asymptotic expansions for increments of the conditional characteristic functions. We thus leave extensions of our results to a  rough setting to future work.
	
The theoretical expansions of this paper are of direct use for the analysis of volatility dynamics using high-frequency option data. In fact, they can be used to infer from the option data any feature of the volatility dynamics that is identifiable if one had access to observations of the true diffusive volatility over a fixed time span. Examples of such quantities include diffusive volatility of volatility as well as covariation between price and diffusive volatility (also known as leverage effect). These quantities can be estimated from high-frequency data of the underlying price process alone, see e.g.,  \cite{vetter2015estimation}, \cite{sanfelici2015high}, \cite{clinet2021estimation}, \cite{li2022volatility} and \cite{toscano2022volatility} for the volatility of volatility and \cite{WM14}, \cite{AFLWY17}, \cite{KX17} and \cite{C19} for the leverage effect. The rates of convergence of estimators constructed from the underlying price data are known, however, to be slow which is a natural consequence of the fact that volatility is not directly observable from the underlying price. By taking advantage of the option data, we can effectively treat volatility as directly observable. This makes estimation of such quantities from option data easier and significantly more precise. In  \cite{CT23_b}, we develop and implement on real data such volatility of volatility and leverage effect estimators using the asymptotic expansions derived in the current paper.        
	
As another application of our expansion results, for which in contrast to the previous ones there is no return-based equivalent known in the literature, we use high-frequency option data to construct a test for whether volatility has infinite variation jumps or not. A lot of parametric volatility models, considered in earlier work, differ in their ability to accommodate infinite variation jumps. For example, the popular exponentially affine class of asset pricing models of \cite{DPS00} and \cite{duffie2003affine} can have volatility jumps of finite variation only. The proposed test is nonparametric in nature and is based on the empirical characteristic function of high-frequency increments of option-based volatility estimates. In the absence of infinite variation volatility jumps, this empirical characteristic function is asymptotically determined by the diffusive volatility of volatility and potential option observation errors. Both of them have distinct scaling properties and can be eliminated effectively by taking (second-order) differences. The result is an asymptotically mixed normal random variable under the null hypothesis of no infinite variation volatility jumps, allowing for size control. If infinite variation volatility jumps are present, these  scale differently than noise and the Gaussian part of volatility and therefore survive the above-described differencing procedure. As a result, under the alternative hypothesis that volatility has  infinite variation  jumps, our test will have an asymptotic power of $1$.


	The rest of the paper is organized as follows. In Section~\ref{sec:model}, we introduce the model and assumptions needed for our main theorem, by first giving a precise definition of deep It\^o semimartingales in Section~\ref{sec:deep} and then stating and discussing our hypotheses for the dynamics of $x$ and its coefficients in Section~\ref{sec:ass}. In Section~\ref{sec:expansion-2}, we present our main expansion result, Theorem~\ref{thm:incr-L}, where we consider the conditional characteristic function $\E_t[e^{iu(x_{t+T}-x_t)/\sqrt{T}}]$ of a normalized change of an It\^o semimartingale over a short interval of length $T$ and develop an asymptotic expansion of its increments as  $t$ varies. We then use this result in Section~\ref{sec:app} to obtain in Theorem~\ref{thm:incr} a high-frequency expansion of increments of the volatility estimators introduced by \cite{T19}. Important special cases and examples are considered in Corollaries~\ref{cor:incr} and \ref{cor:finact}. These theoretical expansion results are used in Section~\ref{sec:test} to construct a test for whether volatility jumps are of finite or infinite variation. This test is implemented on simulations and on real data in Section~\ref{sec:num}. The proofs are given in Sections~\ref{sec:proofs} and \ref{sec:proof51}, with some technical proofs deferred to the appendix.
	
	\section{Model and Assumptions}\label{sec:model}
	
	On a filtered probability space $(\Om,\calf,\F=(\calf_t)_{t\geq0},\P)$, consider an It\^o semimartingale process of the form
	\begin{equation}\label{eq:x0}  
		x_t=x_0+\int_0^t\al_s ds +\int_0^t \si_s dW_s + \text{jumps}, 
	\end{equation}
	where $W$ is a Brownian motion, $\al$  and $\si$ are the drift and  volatility of $x$, respectively, and the jumps of $x$ will be specified later. In financial applications, $x$  typically models
	the logarithmic  price of an asset.  
	One motivation for this paper is the estimation of the dynamical properties of volatility from high-frequency option data. Under the assumption that
	\begin{equation}\label{eq:si0} \begin{split}
			\si_t&=\si_0+\int_0^t\al^\si_s ds + \int_0^t \si^\si_s dW_s + \int_0^t \ov\si^\si_s d\ov W_s +\text{jumps},
		\end{split} 
	\end{equation}
	where $\al^\si$, $\si^\si$ and $\ov\si^\si$ 
	are the drift and diffusive 
	coefficients of $\si$, respectively, and $\ov W$ is a Brownian motion independent of $W$,
	the expansions developed in this paper combined with high-frequency option data essentially allows us to use the results of \cite{AJ14}  to  
	develop feasible inference for volatility dynamics, \emph{as if volatility was directly observable at high frequency}. Examples of applications include, but are not limited to, the estimation of  volatility of volatility  and the leverage effect (which we demonstrate in our companion paper \cite{CT23_b}), testing for volatility jumps or testing for rough volatility.

	\subsection{Deep It\^o Semimartingales}\label{sec:deep}
	
	In order to formulate our main expansion result, we need the price process to have additional structural properties. 
	More precisely, we assume that $x$ is what we call a \emph{deep It\^o semimartingale with $N$ hidden layers}, where $N$ is an integer. Informally speaking, this means that $x$ is an It\^o semimartingale  whose drift $\al$, volatility $\si$, and jump coefficients are again It\^o semimartingales  (layer 1), which in turn have coefficients that are  It\^o semimartingales (layer 2) etc., repeated in this way until the $N$th layer. For example, it is a common hypothesis in high-frequency financial econometrics that one has \eqref{eq:x0} and \eqref{eq:si0}. If, in addition, $\al$ and the jump coefficients of $x$ (which will be introduced more rigorously below) are also It\^o semimartingales, then $x$ is a deep It\^o semimartingale with one hidden layer. A  second hidden layer, which includes the It\^o semimartingale assumption for the coefficients of $\si$, has also been partially assumed in previous work, for example, in the estimation of  volatility of volatility based on return data (see \cite{li2022volatility} and \cite{vetter2015estimation}) or of the leverage effect both from return and option data (see \cite{AFLWY17}, \cite{WM14} and  \cite{T21}). Let us note that most asset pricing models satisfy the deep  It\^o semimartingale assumption, in particular, if $\si_t$ (or $\si^2_t$ or $\log \si_t$) is modeled as the solution to a stochastic differential equation satisfying mild regularity conditions. A notable exception, however, are rough volatility models \citep{gatheral2018volatility}. To extend the results of this paper   to  a rough setting, the deep semimartingale assumption should be replaced by one where each coefficient in each layer follows a fractional-type process.  
	
	In order to state the formal definition of deep semimartingales, we
	fix two integers $d,d'\geq2$ (which may depend on $N$), a $d$-dimensional standard $\F$-Brownian motion $\mathbb{W}=(W^{(1)},\dots,W^{(d)})^\top$    such that $W^{(1)}=W$ and $W^{(2)}=\ov W$, and an $\F$-Poisson random measure $\pf$ on $[0,\infty)\times\R^{d'}\times\R$ (independent of $\mathbb{W}$) with intensity measure $\qf(dt,dz,dv)=dt F(dz) dv$ for some $d'$-dimensional measure $F$. By choosing $d$ and $d'$ sufficiently large (e.g., as the total number of processes appearing in all layers of $x$), the model allows for arbitrary dependence between the diffusive parts and between the jump parts. Starting from $\pf$, we construct an integer-valued random measure $\mu$ by setting 
	\begin{equation}\label{eq:mu} 
		\mu(dt,dz) =\int_{\R} \bone_{\{0\leq v\leq \la(t-,z)\}} \pf(dt,dz,dv),
	\end{equation}
	which has compensator $\nu(dt,dz)=\nu_{t-}(dz) dt =\la(t-,z) F(dz)dt$. In \eqref{eq:mu}, $\la(t,z)$ is a nonnegative intensity process which we assume is an It\^o semimartingale (for fixed $z$) of the form
	\begin{equation}\label{eq:la} \begin{split}
			\la(t,z)&=\la(0,z)+\int_0^t\al^\la(s,z)ds+\sum_{i=1}^d 	\int_0^t\si^{\la,(i)}(s,z) dW^{(i)}_s \\
			&\quad+ \iiint_0^t  \ga^\la(s,z,z',v') (\pf-\qf)(ds,dz',dv')
			,\end{split}
	\end{equation}
	where we used the notation $\iiint_s^t = \int_s^t\int_{\R^{d'}}\int_{\R}$. In what follows, we also use the notation $\iint_s^t = \int_s^t \int_{\R^{d'}}$.
	Having introduced $\mu$ and writing   $\wh\mu=\mu-\nu$, we specify the jumps of $x$ and $\si$ from \eqref{eq:x0} and \eqref{eq:si0} as follows:
	\begin{equation}\label{eq:x}   
		x_t=x_0+\int_0^t\al_s ds +\int_0^t \si_s dW_s +\iint_0^t \ga(s,z) \wh\mu(ds,dz) + \iint_0^t \Ga(s,z)\mu(ds,dz),
	\end{equation}
	and
	\begin{equation}\label{eq:si} \begin{split}
			\si_t&=\si_0+\int_0^t\al^\si_s ds + \int_0^t \si^\si_s dW_s + \int_0^t \ov\si^\si_s d\ov W_s +\iint_0^t  \ga^\si(s,z) \wh\mu(ds,dz) \\
			&\quad+ \iint_0^t \Ga^\si(s,z)\mu(ds,dz),
		\end{split} 
	\end{equation}
	where $\ga$ and $\ga^\si$ stand for the small jumps and $\Ga$ and $\Ga^\si$ for the big jumps of $x$ and $\si$, respectively.
	
	Next, we specify the $N$ different layers of $x$, and to do so in a concise way, we use matrix notation and define
	\begin{align*} 
		\vartheta(s,z)&= (\si_s,0,\dots,0,\al_s,\ga(s,z),\Ga(s,z)) \in \R^{1\times (d+3)},	 \\
		Y(ds,dz)&=(dW^{(1)}_s\delta_0(dz),\dots,dW^{(d)}_s \delta_0(dz), ds \delta_0(dz),\wh \mu(ds,dz),\mu(ds,dz))^\top, 
	\end{align*}
	where $\delta_0$ is the Dirac measure at $0$. Note that $Y$ is a measure with $d+3$ components and we can   compactly write
	\begin{equation}\label{eq:xshort} 
		x_t=x_0 +\iint_0^t \vartheta(s,z) Y(ds,dz). 
	\end{equation} 
	For every $i=1,\dots,N$, we now  define processes $\{\vartheta(t,z_1,\dots,z_i):t\geq0, z_1,\dots,z_i\in\R^{d'}\}$ with values in $\R^{1\times (d+3)\times\cdots\times (d+3)}=\R^{1\times (d+3)^{\times i}}$ recursively by setting
	\begin{equation}\label{eq:beta} 
		\vartheta(t,z_1,\dots,z_i) = \vartheta(0,z_1,\dots,z_i)+\iint_0^t \vartheta(s,z_1,\dots,z_i,z_{i+1}) Y(ds,dz_{i+1}),
	\end{equation}
	where for $A\in \R^{k_1\times\dots\times k_i}$ and $v\in\R^{k_i}$, the product $Av\in\R^{k_1\times\dots\times k_{i-1}}$ is given by
	\[ (Av)_{j_1,\dots,j_{i-1}} = \sum_{j_i=1}^{k_i} A_{j_1,\dots,j_i}v_{j_i},\quad j_\ell=1,\dots,k_\ell,\quad \ell=1,\dots,i-1. \]
	Using the notations 
	$$ \bs_i= ( s_1,\dots,s_i),\quad\bz_i= (  z_1,\dots,z_i),
	\quad \bY(d\bs_i,d\bz_i)=Y(ds_i,dz_i)\cdots Y(ds_1,dz_1),$$
	we  say that $x$ is a \emph{deep It\^o semimartingale with $N$ hidden layers} if there are $\calf_0\otimes \text{Borel}$-measurable random variables $\vartheta(0,\bz_i)\in\R^{1\times (d+3)^{\times i}}$ (for $i=1,\dots,N$) and $\la(0,z)$, a predictable $\R^{1\times (d+3)^{\times (N+1)}}$-valued process $(t,\bz_{N+1})\mapsto \vartheta(t,\bz_{N+1})$ and predictable real-valued processes $(t,z)\mapsto\al^\la(t,z)$, $(t,z)\mapsto\si^{\la,(i)}(t,z)$ and $(t,z,z',v')\mapsto\ga^\la(t,z,z',v')$ 
	such that we have \eqref{eq:xshort} and \eqref{eq:beta} as well as \eqref{eq:mu} and \eqref{eq:la} (which includes the requirement that the stochastic integrals in \eqref{eq:beta} and \eqref{eq:la} be well defined and $\la$ from \eqref{eq:la} be nonnegative). 

	In any dimension of the tensor/array $\vartheta(t,\bz_i)$, the first $d$ entries correspond to the diffusive coefficients while the entries $d+1$, $d+2$ and $d+3$ correspond to the drift,   the small jumps and the big jumps, respectively. For example, if $d=2$ (i.e., we only have two independent Brownian motions, $W$ and $\ov W$), then $\vartheta(t,\bz_5)_{12345}$ is the big jump part (evaluated at $z_5$) of the small jump part (evaluated at $z_4$) of the drift of $\ov\si^\si$ (which is the second diffusive coefficient of $\si$, which in turn is the first diffusive coefficient of $x$). If $\bj_i=(j_1,\dots,j_i)$ and $j_\ell\in\{1,\dots,d+1\}$ for some $\ell\in\{1,\dots,i\}$, then only the value of $\vartheta(t,\bz_i)_{\bj_i}$ at $z_\ell=0$ matters, because  the spatial variable $z_\ell$ only matters for the jump parts but not for drift or volatility. For instance, in the above example,  there is no loss of generality when assuming $\vartheta(t,\bz_5)_{12345}=\vartheta(t,0,0,0,z_4,z_5)_{12345}$. 
	
	As a second example, assume that $x$ and $\si$ are given by \eqref{eq:x} and \eqref{eq:si}, respectively. Then 
	\begin{equation}\label{eq:x-1} 
		\begin{aligned}
			\vartheta(t,z)_1&=\si_t, & \vartheta(t,z)_2&=\cdots=\vartheta(t,z)_{d}=0, & \vartheta(t,z)_{d+1}&=\al_t,\\
			\vartheta(t,z)_{d+2}&=\ga(t,z),&  \vartheta(t,z)_{d+3}&=\Ga(t,z),\quad &&
		\end{aligned}
	\end{equation}
	and
	\begin{equation}\label{eq:si-1} 
		\begin{aligned}
			\vartheta(t,z_1,z_2)_{21} &= \si^\si_t,~ \vartheta(t,z_1,z_2)_{22}=\ov \si^\si_t,~
			\vartheta(t,z_1,z_2)_{23}=\cdots=\vartheta(t,z_1,z_2)_{2,d}=0,\\
			\vartheta(t,z_1,z_2)_{2,d+1}	&=\al^\si_t,~ \vartheta(t,z_1,z_2)_{2,d+2}=\ga^\si(t,z_2),~ \vartheta(t,z_1,z_2)_{2,d+3}=\Ga^\si(t,z_2).
		\end{aligned} 
	\end{equation}
	Let us introduce two additional abbreviations to be used later:
	\begin{equation}\label{eq:notation} \begin{aligned}
			\si^\ga(t,z)&=\vartheta(t,z,0)_{d+2,1} &&\text{(the diffusive part of $\ga$ with respect to $W$)},\\
			\ga^\ga(t,z_1,z_2)&=\vartheta(t,z_1,z_2)_{d+2,d+2}&&\text{(the small jump part of $\ga$)}.
		\end{aligned}
	\end{equation}
	
	Finally, if $x$ is a deep It\^o semimartingale with $N$ hidden layers, we say that $x$ is \emph{special} if for any $i=1,\dots,N+1$, we have $\vartheta(t,\bz_i)_{j_1,\dots,j_i}=0$ as soon as $j_\ell=d+3$ for some $\ell\in\{1,\dots,i\}$. In other words, $x$ is special if its coefficients in all layers can be represented without a   $\Ga$-component. In this case, if we let
	\begin{equation}\label{eq:notation-2} 
		\begin{split}
			\theta(t,\bz_i)&= (\vartheta(t,\bz_i)_{j_1,\dots,j_i})_{j_1,\dots,j_i=1}^{d+2}\in \R^{1\times (d+2)^{\times i}},\qquad i=1,\dots,N+1,	 \\
			y(ds,dz)&=(dW^{(1)}_s\delta_0(dz),\dots,dW^{(d)}_s \delta_0(dz),ds \delta_0(dz), \wh \mu(ds,dz))^\top, \\
			\by(d\bs_i,d\bz_i)&=y(ds_i,dz_i)\cdots y(ds_1,dz_1),
		\end{split}
	\end{equation}
	we have
	\begin{equation}\label{eq:xshort-2} 
		x_t=x_0 + \iint_0^t \theta(s,z)y(ds,dz),\quad \theta(t,\bz_i) = \vartheta(0,\bz_i)+\iint_0^t \theta(s,\bz_{i+1}) y(ds,dz_{i+1}),\!\!
	\end{equation}
	for all $i=1,\dots,N$.
	
	\begin{remark}\label{rem:prm}
		The reader may wonder why we did not simply use $\pf$ (or even a Poisson random measure on $(0,\infty)\times \R$) to represent $x$ and its ingredients. It is true that every It\^o semimartingale can be represented as in \eqref{eq:x} with a Poisson random measure instead of $\mu$. However, in this case, the corresponding jump coefficients $\ga(t,z)$ and $\Ga(t,z)$ also have to be adjusted and may no longer be It\^o semimartingales in $t$ for fixed $z$. For example, if we used $\pf$ instead of $\mu$, then the new small jump coefficient would be $\ga(t,z,v)=\ga(t,z)\bone_{\{v\leq \la(t,z)\}}$. But it is evident that for fixed $z$ and $v$ the process $t\mapsto \ga(t,z,v)$ is typically not a semimartingale (as $\ga$ may jump infinitely often between $0$ and $\ga(t,z)$ on a bounded interval).
	\end{remark}
	
	\subsection{Assumptions for Main Theorem}\label{sec:ass}
	We are now in the position to state the  structural assumptions we need for our main expansion theorem. We let $\calg_t$ denote the $\si$-field generated by the increments of $\mathbb{W}$ and $\pf$ \emph{after} time $t$.  We also write 			let $x_+ = x\vee 0$.
	
	\begin{Assumption}\label{ass:main} 
		The logarithmic price process $x$ is a special deep It\^o semimartingale with $N\geq3$ hidden layers given by \eqref{eq:xshort-2} such that there exist a localizing sequence $(T_n)_{n\in\N}$ of stopping times, an exponent $r\in[1,2)$, deterministic nonnegative measurable functions $J_n(z)$, $j_n(z,v)$ and $\mathcal{J}_n(\bz_{N+1})$, and for all $0<t<t'<t+1$, an $\F$-predictable process $(s,z)\mapsto \la_{t,t'}(s,z)$ defined for $s\in [t,t+1]$ and $z\in\R^{d'}$ with the following properties:
		\begin{enumerate}
			\item The functions $J_n(z)$ and $j_n(z,v)$ are real-valued and   $\int_{\R^{d'}\times\R} j_n(z,v) F(dz)dv<\infty$ for each $n\in\N$. Moreover, for all $t<T_n$, $t'\in[t,t+1]$, $z,z'\in\R^{d'}$ and $v'\in\R$, 
			\begin{equation}\label{eq:bound} \begin{split}
			&	\lvert\la(t,z)\rvert+	 \lvert \al^\la(t,z)\rvert+\sum_{i=1}^{d} \lvert \si^{\la,(i)}(t,z)\rvert \leq J_n(z),\\
			&	 \lvert \ga^\la(t,z,z',v')\rvert^2
				\leq J_n(z)^2j_n(z',v')\end{split}
			\end{equation}
			and
			\begin{equation}\label{eq:cont}\begin{split}
					\sum_{i=1}^d   \E[(\si^{\la,(i)}(t',z)-\si^{\la,(i)}(t,z))^2\wedge1]^{1/2}& \leq \lvert t'-t\rvert^{1/2}J_n(z),	 \\
					\E[(\ga^\la(t',z,z',v')-\ga^\la(t,z,z',v'))^2\wedge1]^{1/2}	&\leq \lvert t'-t\rvert^{1/2}J_n(z)	j_n(z',v')^{1/2}.
			\end{split}\end{equation}
			\item The function $\mathcal{J}_n(\bz_{N+1})$ takes values in $\R^{(d+2)^{N+1}}$ and for all $n\in\N$, $j_1,\dots,j_{N+1}\in\{1,\dots,d+2\}$, $z_1,\dots,z_{N+1} \in\R^{d'}$ and $t<T_n$, we have that
			\begin{equation}\label{eq:theta-bound} 
				\lvert\theta(t,\bz_{N+1})_{j_1,\dots,j_{N+1}}\rvert^r \vee 	\lvert\theta(t,\bz_{N+1})_{j_1,\dots,j_{N+1}}\rvert^2 \leq \mathcal{J}_n(\bz_{N+1})_{j_1,\dots,j_{N+1}},
			\end{equation}
			\begin{equation}\label{eq:beta-L2} 
				\int_{(\R^{d'})^{N+1}} 
				\mathcal{J}_n(\bz_{N+1})_{j_1,\dots,j_{N+1}}
				\prod_{\ell: j_\ell \neq d+2} \delta_0(dz_{\ell})\prod_{\ell: j_\ell=d+2} J_n(z_\ell)  F(dz_{\ell}) < \infty. 
			\end{equation}
			\item For any $0<t<t',s<t+1$ and $z\in\R^{d'}$, the random variable $\la_{t,t'}(s,z)$ is $\calf_t\vee \calg_{t'}$-measurable. Moreover, if $s<T_n$, then
			\begin{equation}\label{eq:la-smooth} 
				\E[(\la(s,z)-\la_{t,t'}(s,z))^2\wedge1]^{1/2}\leq (\lvert t'-t\rvert^{1/2}+ [( s-t')_+]^2) J_n(z).
			\end{equation}
		\end{enumerate} 
		We also assume  \eqref{eq:x-1} and \eqref{eq:si-1} without loss of generality, so that $x$ and $\si$ are given by \eqref{eq:x} and \eqref{eq:si} (with $\Ga\equiv \Ga^\si\equiv0$), respectively.
	\end{Assumption}

	Note that under Assumption \ref{ass:main}, the integrals in \eqref{eq:xshort-2} are all well defined and the process $x$ and its coefficients in each layer are special It\^o semimartingales. 
	Part 1 and 2 of Assumption~\ref{ass:main} are mild and impose some classical  integrability and regularity conditions on the last layer of the deep It\^o semimartingale $x$ as well as the coefficients of the intensity process $\la$. Part 3 of Assumption~\ref{ass:main} is also mild and is satisfied, for example, if the intensity process $\la(t,z)$ is also a deep It\^o semimartingale with at least three hidden layers. For example, assume that $\la(s,z)=\la_s z$, where $
		\la_s=\la_0+\int_0^s \si^\la_r dW_r$,	$\si^\la_r=\si^\la_0+\int_0^r \si(\si^\la)_u dW_u$,
		$\si(\si^\la)_u=\si(\si^\la)_0+\int_0^u \si(\si(\si^\la))_v dW_v$ and $
		 \si(\si(\si^\la))_v=\si(\si(\si^\la))_0+\int_0^v \si(\si(\si(\si^\la)))_w dW_w
	$
	for some locally bounded $\si(\si(\si(\si^\la)))$.
	Then one can choose $\la_{t,t'}(s,z)$ as
	\begin{align*}
		&\biggl(\la_t+\si_t^\la(W_s-W_{t'})+\si(\si^\la)_t\int_{t'}^s\int_{t'}^r dW_udW_r +\si(\si(\si^\la))_t\int_{t'}^s\int_{t'}^r\int_{t'}^u dW_v dW_udW_r\\
		&\quad+\si(\si(\si(\si^\la)))_t\int_{t'}^s\int_{t'}^r\int_{t'}^u \int_{t'}^v dW_wdW_v dW_udW_r\biggr)z,
	\end{align*} 
	which yields $\la(s,z)-\la_{t',t'}(s,z)=O((s-t')^2)$ and $\la_{t',t'}(s,z)-\la_{t,t'}(s,z)=O(\sqrt{t'-t})$ and hence \eqref{eq:la-smooth}.
	Clearly, this example easily extends to   cases where $\la$ has jumps and/or where $\la$ does not have a product form.
		We also consider a strengthening of Assumption~\ref{ass:main}, under which price jumps are  
	summable.
	
	\begin{Assumption}\label{ass:main-1} In addition to Assumption~\ref{ass:main} (and with the notation from there), we have the following 
		for all $n\in\N$, $t<T_n$, $t'\in[t,t+1]$ and $z\in\R^{d'}$:
		\begin{equation}\label{eq:extra} 
			\lvert \ga(t,z)\rvert\leq J_n(z),\qquad
			\E[(\ga(t',z)-\ga(t,z))^2\wedge1]^{1/2}   \leq  \lvert t'-t\rvert^{1/2} J_n(z).
		\end{equation} 
	\end{Assumption}

	\section{The Main Expansion Theorem}\label{sec:expansion-2}
	
	Our main result is an expansion theorem for the $\calf_t$-conditional characteristic function 
	of the normalized price change from $t$ to $t+T$, that is, 
	\begin{equation}\label{eq:call} 
		\call_{t,T}(u)= \E_t[e^{iu(x_{t+T}-x_t)/\sqrt{T}}] = \E_t[e^{iu_T(x_{t+T}-x_t)}],
	\end{equation} 
	where $u_T=u/\sqrt{T}$ and $\E_t=\E[\cdot \mid \calf_t]$ as before. In what follows, we   use $o=o_p$ and $O=O_p$ to indicate order in probability. A superscript ``uc'' as in $O^\uc$ or $o^\uc$ indicates uniformity in $u\in\calu$, where $\calu$ is an arbitrary compact subset of $(0,\infty)$. Moreover, with a slight abuse of notation, we let
	\begin{equation}\label{eq:tT} 
		t^n_i=t-i\Den,\quad T^n_i = T+i\Den,\quad T^{\prime n}_i = T'+i\Den,\quad i=1,2,\dots,
	\end{equation} 
	for some $\Den\to0$. Increments of a process $F_t$ are denoted by $\Delta^n_i F_t = F_{t^n_{i-1}}-F_{t^n_i}$.
	\begin{theorem}\label{thm:incr-L}
		For $s,t,T>0$ and $u\in\R\setminus\{0\}$, define  
		\begin{equation}\label{eq:La} \begin{split}
				\La_{t,T}(s,u)&=e^{iu\al_t T-\frac12 u^2\si_t^2 T+T\vp_t(s,u) -\frac12iu^3\si_t^2\si^\si_t T^2 + T^2 ( \psi_t(s,u)+\ov\psi_t(s,u)+\wt\psi_t(s,u) )},
			\end{split}
		\end{equation}
		where
		\begin{equation}\label{eq:Theta123}\begin{split}
				\vp_t(s,u)&=\int_{\R^{d'}} (e^{iu\ga(t,z)}-1-iu\ga(t,z))\nu_s(dz),\\
				\psi_t(s,u) &= -\frac12 u^2\si_t\int_{\R^{d'}} (e^{iu\ga(t,z)}-1)(\si^\ga(t,z)+\ga^\si(t,z))\nu_s(dz),\\
				\ov\psi_t(s,u)&= \frac12iu\int_{\R^{d'}}\int_{\R^{d'}} (e^{iu\ga(t,z)}-1)(e^{iu\ga(t,z')}-1)\ga^\ga(t,z,z')\nu_s(dz)\nu_s(dz'),\\
				\wt\psi_t(s,u)&= \frac12\int_{\R^{d'}}(e^{iu\ga(t,z)}-1-iu\ga(t,z))\biggl(iu\si_t \si^{\la,(1)}(s,z)\\
				&\quad+\int_{\R^{d'}\times\R} (e^{iu\ga(t,z')}-1) \ga^\la(s,z,z',v')\bone_{\{0\leq v'\leq \la(s,z')\}} F(dz')dv'\biggr)F(dz).
		\end{split}\end{equation}
		Under Assumption~\ref{ass:main}, there is a finite number of It\^o semimartingales $v^{(k)}_t$ and $C^{(k)}_t(u)$, $k=1,\dots, K$,  such that $C^{(k)}_t(u)$ is uniformly bounded in $u$ on compacts and $\lvert \Delta^n_i v^{(k)}_t\rvert+\lvert \Delta^n_i C^{(k)}_t(u) \rvert= O^\uc (\sqrt{\Den})$, uniformly in $i$, and the following holds as $\Den\to0$, $T\to0$ and $\Den/T\to0$: writing 	  $\Delta^{n}_i \call_{t,T}(u)= \call_{t^n_{i-1},T^n_{i-1}}(u)-\call_{t^n_i,T^n_i}(u)$ and $\Delta^n_i \La_{t,T}(s,u)=\La_{t^n_{i-1},T^n_{i-1}}(s,u_{T^n_{i-1}}) - \La_{t^n_{i},T^n_{i}}(s,u_{T^n_{i}})$, we have 	for all $i$ with $i\Den=O(T)$ that
		\begin{equation}\label{eq:La-exp} \begin{split}
			\Delta^n_i \call_{t,T}(u)	&= \Delta^n_i \La_{t,T}(t^n_i,u)+ \sum_{k=1}^K \Delta^n_i v^{(k)}_t C^{(k)}_t(u) T^n_{i-1} \\&\quad+O^\uc(T^{N/2}) + o^\uc( \sqrt{\Den}T + \Den/\sqrt{T}).\end{split}
		\end{equation}
	\end{theorem}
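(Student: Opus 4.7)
My plan has two steps. First, I derive a pointwise expansion of $\call_{t,T}(u)$ at each fixed $t$, matching the exponential form $\La_{t,T}(s,u_T)$ for an appropriate anchor $s\leq t$ plus a finite collection of first-order semimartingale corrections $\sum_k v^{(k)}_t C^{(k)}_t(u)T$, with residual $O^\uc(T^{N/2})$. Second, I take the discrete increment in $(t,T)$ with both endpoints anchored at the common intensity time $s=t^n_i$, so that cancellations produce precisely $\Delta^n_i\La_{t,T}(t^n_i,u)$ plus the claimed first-order correction and manageable remainders.

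For the pointwise expansion, I apply It\^o's formula to $g_r=\exp(iu_T(x_{t+r}-x_t))$, $r\in[0,T]$, and take $\calf_t$-conditional expectation, obtaining an integral equation for $f(r)=\E_t[g_r]$ whose integrand involves $\al_{t+r}$, $\si^2_{t+r}$, $\ga(t+r,\cdot)$ and $\la(t+r,\cdot)$. The deep semimartingale structure \eqref{eq:beta} lets me freeze each coefficient at time $t$ by writing, e.g., $\si^2_{t+r}-\si^2_t=\int_t^{t+r}d\si^2_u$ and substituting back recursively. Each freezing contributes (i) a mean term that combines by Gronwall with the previous pieces into a factor of the exponential $\La_{t,T}$, and (ii) a higher-layer martingale residual whose $L^2$-size is $O(T^{1/2})$ by Burkholder--Davis--Gundy and the bounds \eqref{eq:bound}, \eqref{eq:theta-bound}--\eqref{eq:beta-L2}. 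After $N$ rounds the combined residual is $O^\uc(T^{N/2})$, and $N\geq3$ provides exactly the three explicit correction factors $\psi_t,\ov\psi_t,\wt\psi_t$: $\psi_t$ and the $-\tfrac12iu^3\si_t^2\si_t^\si T^2$ contribution come from expanding $\si$ by one layer; $\ov\psi_t$ comes from iterated jump--jump interactions via $\ga^\ga$; $\wt\psi_t$ collects the first-order intensity contributions via $\si^{\la,(1)}$ and $\ga^\la$ from \eqref{eq:la}. Terms of order $T$ that mix spot characteristics of two layers but do not fit into the exponential's Taylor content define the correction $\sum_k v^{(k)}_tC^{(k)}_t(u)T$, where $v^{(k)}$ are layer-$\leq 2$ semimartingales (hence $|\Delta^n_iv^{(k)}|=O^\uc(\sqrt{\Den})$) and $C^{(k)}(u)$ are continuous bounded functions of layer-$0$ spot coefficients and $u\in\calu$. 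Flexibility in the intensity anchor $s$ is supplied by Assumption~\ref{ass:main}.3, which yields an $L^2$ error of order $\sqrt{|t-s|}+r^2$ when replacing $\la(t+r,z)$ by $\la_{s,t}(t+r,z)$; this is compatible with $O^\uc(T^{N/2})$ when $s=t^n_i$ and $\Den/T\to0$.

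For the increment, I write
\begin{equation*}
\Delta^n_i\call_{t,T}(u)=\calr_{i-1}-\calr_i+\Delta^n_i\La_{t,T}(t^n_i,u)+\Delta^n_i\Bigl[\sum_{k=1}^K v^{(k)}_t C^{(k)}_t(u) T^n_{i-1}\Bigr],
\end{equation*}
where $\calr_j:=\call_{t^n_j,T^n_j}(u)-\La_{t^n_j,T^n_j}(t^n_i,u_{T^n_j})-\sum_k v^{(k)}_{t^n_j}C^{(k)}_{t^n_j}(u)T^n_j$ is the pointwise residual at epoch $j\in\{i-1,i\}$ anchored at the common intensity time $t^n_i$. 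Both $\calr_{i-1}$ and $\calr_i$ are $O^\uc(T^{N/2})$ by the pointwise step (anchoring at $t^n_i$ rather than $t^n_{i-1}$ costs only $O(\sqrt{\Den}\cdot T)$, absorbable when $\Den/T\to 0$ and $N\geq 3$). The last bracket expands by Leibniz into $\Delta^n_iv^{(k)}_t\cdot C^{(k)}_{t^n_i}(u)T^n_{i-1}$ plus $v^{(k)}_{t^n_i}\Delta^n_iC^{(k)}_t(u)T^n_{i-1}$ and $v^{(k)}_{t^n_i}C^{(k)}_{t^n_i}(u)\Den$; using the stated increment bounds $|\Delta^n_iv^{(k)}|+|\Delta^n_iC^{(k)}(u)|=O^\uc(\sqrt{\Den})$ and local boundedness of $v^{(k)},C^{(k)}$, these latter two pieces are $O^\uc(\sqrt{\Den}T)+O^\uc(\Den)$, which sits inside $o^\uc(\sqrt{\Den}T+\Den/\sqrt{T})$ under the regime $\Den=O(T)$.

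The hardest part will be the pointwise $N$-layer expansion in the presence of the time-varying jump intensity. Freezing $\la(t+r,z)$ at $t$ introduces a martingale driven by $\pf-\qf$ whose integrand $\ga^\la(s,z,z',v')$ is doubly parametric in $(z,z',v')$, so its $L^2$ norm must be controlled uniformly in $z$ and integrated against $F(dz')dv'$. The combined bounds \eqref{eq:bound} and \eqref{eq:beta-L2}, together with the smoothed anchor $\la_{t,t'}$ from Assumption~\ref{ass:main}.3, make this feasible but intricate. A secondary subtlety is uniformity in $u\in\calu$: each It\^o step brings a factor $u_T=u/\sqrt{T}$ that must be paired with either a $\ga$-factor (controlled by the jump integrability in \eqref{eq:beta-L2}) or a $\int_0^r dW$-factor (yielding a compensating $\sqrt{T}$); this is precisely the reason all surviving terms in $\La_{t,T}$ appear as polynomials in $u$ with spot-characteristic coefficients.
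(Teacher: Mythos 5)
Your two-step strategy---first a pointwise expansion of $\call_{t,T}(u)$, then differencing---is a genuinely different route from the paper's proof, which does the differencing \emph{first} (decomposing $\Delta^n_i\ov\call_{t,T}(u)$ into pieces $A^{n,i}$, $B^{n,i}$, $C^{n,i}$ reflecting changes of filtration, integration domain, and frequency argument) and then expands each piece. The paper singles out precisely this methodological point as the reason the argument is nontrivial: ``due to the double asymptotics $\Den\to0$ and $T\to0$ $\ldots$ we cannot simply use expansion results in $T$ only for $\call_{t,T}(u)$''.

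The central gap in your argument is the claim that the pointwise residual $\call_{t,T}(u)-\La_{t,T}(s,u_T)-\sum_k v^{(k)}_tC^{(k)}_t(u)T$ is $O^\uc(T^{N/2})$. This is not correct. When you freeze the coefficients layer by layer, layer $k$ produces \emph{mean} contributions to $\E_t[e^{iu_T(\cdot)}]$ of size $O^\uc(T^{k/2})$, not just a martingale residual. The exponential $\La$ captures only the layer-$0$ and layer-$1$ mean content (plus the intensity coefficients of $\la$ through $\wt\psi$), and $\sum_kv^{(k)}C^{(k)}T$ can by construction absorb only terms that are exactly of the product form $v_t\cdot C_t(u)\cdot T$. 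The mean terms coming from layers $3,\dots,N$ live at orders $T^{3/2},\dots,T^{(N-1)/2}$ and belong to neither bucket; so your pointwise residual is only $O^\uc(T^{3/2})$, not $O^\uc(T^{N/2})$. For $N\geq4$, $O^\uc(T^{3/2})$ fails to be $o^\uc(\sqrt{\Den}T+\Den/\sqrt{T})+O^\uc(T^{N/2})$ in the regime $\Den/T\to0$. Your sentence ``Each freezing contributes (i) a mean term that combines by Gronwall $\ldots$ into a factor of the exponential $\La_{t,T}$'' is therefore the misstep: only a few of the mean terms combine into $\La$, and a one-to-one correspondence between the $N$ hidden layers and the three factors $\psi,\ov\psi,\wt\psi$ does not hold ($\psi,\ov\psi,\wt\psi$ all draw on layer-$1$ coefficients and $\la$-coefficients, while the deeper layers enter the error). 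Your approach \emph{could} be made to work, but this would require explicitly organizing the remainder into a finite sum $\sum_{j\geq3}c_{j/2}(t,u_T)T^{j/2}+O^\uc(T^{N/2})$ with each $c_{j/2}$ an It\^o semimartingale in $t$ that is uniformly bounded in $u$ and smooth enough in the frequency argument; then one must verify for each piece that the $t$-increment contributes $O^\uc(\sqrt{\Den}T^{j/2})$, the $T$-increment contributes $O^\uc(\Den T^{j/2-1})$, and the $u_T$-increment contributes $O^\uc(\Den T^{(j-3)/2})$, all of which lie inside $o^\uc(\sqrt{\Den}T+\Den/\sqrt{T})$ for $j\geq3$. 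That bookkeeping is, in effect, a reorganization of the paper's Proposition~5.1; the paper gets the $\sqrt{\Den}$-factor automatically because it operates on the difference throughout, whereas your scheme produces it only after an additional and nontrivial classification of the remainder.

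A secondary remark: in the increment step you difference $v^{(k)}_{t}C^{(k)}_{t}(u)T^n_{j}$ using a different time index in the $T$-factor for $j=i-1$ and $j=i$; aligning both to $T^n_{i-1}$, as in the theorem, is harmless but changes the Leibniz bookkeeping slightly and one of your three pieces ($v^{(k)}_{t^n_i}C^{(k)}_{t^n_i}(u)\Den$) in fact must be checked against $\Den/\sqrt{T}$, not just against $\Den$, since $o^\uc(\Den/\sqrt{T})$ is the target; that check does pass, but it relies on $\Den/T\to0$ and should be stated.
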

	
	As we can see from \eqref{eq:La} and \eqref{eq:La-exp}, a high-frequency increment $\Delta^n_i\call_{t,T}(u)$ of the conditional characteristic function of the normalized price change until time to maturity is given, to first order, simply by the corresponding increment one would obtain if the process $x$ were a   Lévy process with characteristics given by those of $x$ frozen at $t^n_i$, the beginning of the high-frequency time interval. However, if one intends to use $\Delta^n_i\call_{t,T}(u)$ (or transforms thereof) in estimators typical of high-frequency statistics, then this first-order approximation is often not sufficient for proving central limit theorems. This is why higher-order terms have to be considered in both  \eqref{eq:La} and \eqref{eq:La-exp}. These terms are due to the time variation of drift, volatility, jump coefficient and jump intensity. In particular, we note that $\psi$, $\ov\psi$ and $\wt\psi$ in (\ref{thm:incr-L}) all depend on the time variation in the jump compensator of $x$ and the volatility jumps. The exact asymptotic order of these three components depends on the degree of jump activity (which is left unrestricted). 
	
	Several factors make the proof of \eqref{eq:La-exp}, which we detail in Section~\ref{sec:proofs},  nontrivial. First, when considering an increment from $t^n_i$ to $t^n_{i-1}$, the filtration to be conditioned on changes from $\calf_{t^n_i}$ to $\calf_{t^n_{i-1}}$. Second, due to the double asymptotics $\Delta_n\to0$ and $T\to0$ (typically such that $\Den=o(T)$), we cannot simply use expansion results in $T$ only for $\call_{t,T}(u)$. In particular, as can be seen in \eqref{eq:La-exp}, pure powers of $T$ only show up in the residual term $O^\uc(T^{N/2})$, in contrast to the expansion results of \cite{T21}, for example. This is because for all powers  up to $T^{N/2}$, we explicitly take the high-frequency differencing into account, in order to obtain an additional factor that is at least of order $\sqrt{\Den}$. 
	
	To illustrate this more clearly, consider the following expansion result of $\call_{t,T}(u)$ at a fixed time $t$:
		\begin{lemma}\label{lem:exp}
		Under Assumption~\ref{ass:main}, we have 
		\begin{equation}\label{eq:L-exp-2} 
			\call_{t,T} (u) =\Theta_{t,T}(u_T)(1-\eta_{t,T}(u_{T}))+C_{t}(u)T+o^\uc(T)
		\end{equation}
		as $T\to0$, where
			\begin{equation}\label{eq:phi}
			\Theta_{t,T}(u)= e^{iu \al_{t }T  - \frac12u^2 \si_{t}^2T +T  \vp_{t}(u)}
		\end{equation}
	and
	\begin{equation}\label{eq:eta}\begin{split}
			\eta_{t,T}(u)
			&=\frac{1}{2}iu^3 T^2\si_{t}^2\si^\si_{t}- T^2\Bigl(\psi_{t}(t,u)+\ov\psi_t(t,u)+\wt\psi_t(t,u)\Bigr)\\
			&=\frac{1}{2}\biggl(iu^3 T^2\si_{t}^2\si^\si_{t}+ u^2 \si_{t}T^2 \chi_{t}^{(1)}(u)\\
			&\qquad\qquad\qquad\qquad\qquad-  i uT^2\chi_{t}^{(2)}(u)-  iuT^2\si_t\chi^{(3)}_t(u)-  T^2\chi^{(4)}_t(u)\biggr).
	\end{split}\end{equation}
The processes $\chi_t^{(i)}(u)$, $i=1,\dots,4$, are introduced in \eqref{eq:chi} below 
		and $C_t(u)$ is an It\^o semimartingale in $t$ and a polynomial in $u$.
	\end{lemma}

Lemma~\ref{lem:exp}, which we prove in the appendix, is very similar in spirit to \cite{CT22} (specialized to $H=\frac12$) and to previous fixed-$t$ expansions mentioned in the Introduction. If we difference $\call_{t,T}(u)$ at high frequency, we clearly recover the  leading-order terms in Theorem~\ref{thm:incr-L}. The fact that Theorem~\ref{thm:incr-L} does not follow from the lemma is due to the nonexplicit $o^\uc(T)$-terms in \eqref{eq:L-exp-2}, which are too big for the expansions in Theorem~\ref{thm:incr-L} (and for the applications we have in mind). Other expansions in the literature are sometimes explicit up to a slightly higher order (e.g., $T^{3/2}$ in \cite{T21}), but this is still not enough. Therefore, the novelty of Theorem~\ref{thm:incr-L} really rests in showing bounds on the increments of the higher-order terms in $o^\uc(T)$ \emph{without} having access to their explicit form.
	
	\begin{remark}
		Our expansion results in this paper are for conditional characteristic functions of the \emph{normalized} price change $(x_{t+T}-x_t)/\sqrt{T}$. The normalization makes the diffusive component of the price be the dominant component in these expansions. As we show in the next section, such results are useful for studying the diffusive volatility and its dynamics.  The counterpart of Theorem~\ref{thm:incr-L} for the conditional characteristic function of the \emph{raw} price increment $x_{t+T}-x_t$ will be very different. In particular, the jumps in $x$ will play a more important role in such an expansion. 
	\end{remark}
	
	\section{Application to Option-based Volatility Estimators}\label{sec:app}
	
	Building upon the fact that $\call_{t,T}(u)=e^{-\frac12 u^2\si_t^2} + o(1)$ as $T\downarrow 0$ under mild assumptions on $x$ and $\si$, \cite{T19} constructs an option-based estimator of the spot variance $\si_t^2$   by defining
	\begin{equation}\label{eq:vol-est} 
		\si^2_{t,T}(u)=-\frac{2}{u^2} \log \lvert \call_{t,T}(u)\rvert
	\end{equation} 
	and using an option-based estimate of $\call_{t,T}(u)$. In order to remove biases of higher asymptotic order,  \cite{todorov2021bias} introduce a bias-corrected version of \eqref{eq:vol-est} by considering a second time-to-maturity
	\begin{equation}\label{eq:Tprime} 
		T'=\tau T,
	\end{equation}
	for some $\tau>1$ and setting
	\begin{equation}\label{eq:vol-est-20} 
		\si^2_{t,T,T'}(u)=\frac{T'\si^2_{t,T}(u)-T\si^2_{t,T'}(u)}{T'-T}.
	\end{equation}
	The main sources of errors of both $\si^2_{t,T}(u)$ and $\si^2_{t,T,T'}(u)$ are jump risks in price and volatility and the dynamics of their semimartingale characteristics. 
	
	The expansion result in Theorem~\ref{thm:incr-L} can be transformed into an expansion of increments of $\si^2_{t,T}(u)$ and $\si^2_{t,T,T'}(u)$, that is, of
	\begin{equation}\label{eq:si-incr} \begin{split}
			\Delta^n_i \si^2_{t,T}(u)	&=\si^2_{t^n_{i-1},T^n_{i-1}}(u)-\si^2_{t^n_i,T^n_i}(u),\\
			\Delta^n_i \si^2_{t,T,T'}(u)&=\si^2_{t^n_{i-1},T^n_{i-1}, T^{\prime n}_{i-1}}(u)-\si^2_{t^n_{i},T^n_{i}, T^{\prime n}_{i}}(u).
		\end{split}
	\end{equation} 
	The result for $\si^2_{t,T}(u)$ is stated in the following theorem:
	\begin{theorem}\label{thm:incr} 
		Recall the notation introduced in Theorem~\ref{thm:incr-L} and further define,	for $s,t,T>0$ and $u\in\R\setminus\{0\}$,   
		\begin{equation}\label{eq:Theta}  
			\Phi_{t}(s,u)=-\frac{2}{u^2}\Re\bigl( \vp_t(s,u)  \bigr),\quad
			\Psi_{t,T}(s,u)=-\frac{2T}{u^2}\Re\bigl( \psi_t(s,u)+\ov\psi_{t}(s,u)+\wt\psi_{t}(s,u)\bigr).
		\end{equation} 
		Then, under Assumption~\ref{ass:main},  there is a finite number of It\^o semimartingales $v^{(k)}_t$ and $C^{(k)}_t(u)$, $k=1,\dots, K$,  such that $C^{(k)}_t(u)$ is uniformly bounded in $u$ on compacts, $\lvert \Delta^n_i v^{(k)}_t\rvert+\lvert \Delta^n_i C^{(k)}_t(u) \rvert= O^\uc (\sqrt{\Den})$, uniformly in $i$, and the following holds as $\Den\to0$, $T\to0$ and $\Den/T\to0$:  
		\begin{equation}\label{eq:incr-si-0} \begin{split}
				\Delta^{n}_i \si^2_{t,T}(u)&=\Delta^n_i \si^2_t +\Delta^n_i \Phi_{t,T}(t^n_i,u) + \Delta^n_i \Psi_{t,T}(t^n_i,u)+ \sum_{k=1}^K \Delta^n_i v^{(k)}_t C^{(k)}_{t^n_i}(u) T^n_{i-1}\\
				&\quad +O^\uc(T^{N/2}) + o^\uc( \sqrt{\Den}T + \Den/\sqrt{T}),
			\end{split}
		\end{equation}
		for all $i$ such that $i\Den=O(T)$,	where  
		\begin{equation}\label{eq:not} \begin{split}
				\Delta^n_i \Phi_{t,T}(s,u) &=\Phi_{t^n_{i-1}}(s,u_{T^n_{i-1}})-\Phi_{t^n_i}(s,u_{T^n_i}),\\
				\Delta^n_i \Psi_{t,T}(s,u) &=\Psi_{t^n_{i-1},T^n_{i-1}}(s,u_{T^n_{i-1}})-\Psi_{t^n_i,T^n_i}(s,u_{T^n_i}).
			\end{split}
		\end{equation} 
		Under Assumption~\ref{ass:main-1}, we further have
		\begin{equation}\label{eq:incr-si-2} \begin{split}
				\Delta^{n}_i \si^2_{t,T}(u)&=\Delta^n_i \si^2_t + \sum_{k=1}^K \Delta^n_i v^{(k)}_t C^{(k)}_{t^n_i}(u) T^n_{i-1} +O^\uc(T^{N/2}+\sqrt{\Den T}) + o^\uc(  \Den/\sqrt{T}).
			\end{split}
		\end{equation}
	\end{theorem}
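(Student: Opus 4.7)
The plan is to deduce Theorem~\ref{thm:incr} from Theorem~\ref{thm:incr-L} by applying the map $z \mapsto -\tfrac{2}{u^2}\log|z|$ to $\call_{t,T}(u)$. A key preliminary computation shows that substituting $u \mapsto u_T=u/\sqrt{T}$ in the exponent of $\La_{t,T}$ in \eqref{eq:La} makes the purely imaginary contributions $iu\al_t T^{1/2}$ and $-\tfrac{1}{2}iu^3\si_t^2\si_t^\si T^{1/2}$ vanish upon taking the real part, leaving the clean identity
\begin{equation*}
-\tfrac{2}{u^2}\log\bigl|\La_{t,T}(s,u_T)\bigr| = \si_t^2 + \Phi_t(s,u_T) + \Psi_{t,T}(s,u_T).
\end{equation*}
After differencing in $i$ with $s=t^n_i$, this reproduces exactly the leading piece $\Delta^n_i\si_t^2 + \Delta^n_i\Phi_{t,T}(t^n_i,u) + \Delta^n_i\Psi_{t,T}(t^n_i,u)$ of \eqref{eq:incr-si-0}.

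To lift this identity from $\La$ to $\call$, I would first observe that on the localizing set $|\La_{t,T}(t^n_i,u_T)|$ is bounded away from zero uniformly in $u\in\calu$, since its log has real part close to $-\tfrac{1}{2}u^2\si_{t^n_i}^2$ and $\si$ is locally bounded by Assumption~\ref{ass:main}. Writing $\call_{t,T}(u)=\La_{t,T}(t^n_i,u_T)\bigl(1+\xi_{t,T}(u)\bigr)$ via a pointwise companion to Theorem~\ref{thm:incr-L} --- extractable from its proof, where pointwise estimates on $\call$ appear as intermediate steps, or available from \cite{T21} --- with $\xi_{t,T}(u)=o^\uc(1)$, and then applying the Taylor expansion $\log(1+\xi)=\xi-\tfrac{1}{2}\xi^2+\cdots$ and taking real parts, I would obtain after differencing
\begin{equation*}
\Delta^n_i\si^2_{t,T}(u) = \Delta^n_i\si_t^2 + \Delta^n_i\Phi_{t,T}(t^n_i,u) + \Delta^n_i\Psi_{t,T}(t^n_i,u) - \tfrac{2}{u^2}\Re\bigl[\Delta^n_i\xi_{t,T}(u)\bigr] + \cdots.
\end{equation*}
Applying Theorem~\ref{thm:incr-L} through the product rule to $\xi=(\call-\La)/\La$ --- with $1/\La$ being a smooth It\^o semimartingale factor in $t$ that can be absorbed into the $C^{(k)}$ coefficients since those only need to be uniformly bounded on compacts --- then reproduces the $\sum_k\Delta^n_i v^{(k)}_t C^{(k)}_{t^n_i}(u)T^n_{i-1}$ term and the residual $O^\uc(T^{N/2})+o^\uc(\sqrt{\Den}T+\Den/\sqrt{T})$, yielding \eqref{eq:incr-si-0}.

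For \eqref{eq:incr-si-2}, Assumption~\ref{ass:main-1} lets me split $e^{iu\ga}-1-iu\ga = (e^{iu\ga}-1)-iu\ga$ with both pieces absolutely convergent against $\nu$; elementary estimates give $\lvert\int(e^{iu_T\ga}-1)\nu_s(dz)\rvert=O(u_T)=O(T^{-1/2})$, which propagates through \eqref{eq:Theta123} and \eqref{eq:Theta} to $\Phi_t(t^n_i,u_T)=O(\sqrt{T})$ and $\Psi_{t,T}(t^n_i,u_T)=O(\sqrt{T})$ with smooth It\^o semimartingale $t$-dependence. Their high-frequency increments are therefore $O(\sqrt{\Den T})$, so $\Delta^n_i\Phi_{t,T}(t^n_i,u)$ and $\Delta^n_i\Psi_{t,T}(t^n_i,u)$ are absorbed into the $O^\uc(\sqrt{\Den T})$ residual of \eqref{eq:incr-si-2}, with their drift components contributing to the $\sum_k\Delta^n_i v^{(k)}_t C^{(k)}_{t^n_i}(u)T^n_{i-1}$ sum after regrouping.

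The step I expect to be most delicate is controlling the quadratic Taylor term $\xi_{t,T}(u)^2$ and its increment: although $\xi$ is pointwise of high asymptotic order, its high-frequency increment $\Delta^n_i\xi$ is only smaller than $\xi$ by a factor $\sqrt{\Den}$, so the cross product $\xi\cdot\Delta^n_i\xi$ inside $\Delta^n_i\xi^2$ must be shown to remain below the retained residual $o^\uc(\sqrt{\Den}T+\Den/\sqrt{T})$. This calls for a sharp pointwise analogue of Theorem~\ref{thm:incr-L} at the same level of precision, not merely the incremental statement, and is where the bulk of the technical work will go.
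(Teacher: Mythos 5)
Your overall strategy is sound and leads to the result, but it is a genuinely different route from the one in the paper. The paper does not factor out $\La$: it writes $\Delta^n_i\si^2_{t,T}(u)=-\frac{2}{u^2}\Re\Log\bigl(1-\Delta^n_i\call_{t,T}(u)/\call_{t^n_{i-1},T^n_{i-1}}(u)\bigr)$, so the argument of $\Log$ is the increment ratio $\Delta^n_i\call/\call_{t^n_{i-1}}$, which is already of size $O^\uc(\sqrt{\Den}+\Den/T)$; it then substitutes the pointwise Lemma~\ref{lem:exp} for $\call_{t^n_{i-1}}$ and the technical Proposition~\ref{prop:incr} for $\Delta^n_i\call$, and the residual series $\sum_{j\ge2}(\call^{\prime n,i})^j/j$ resums to an explicit $-\Log(1-\call^{\prime n,i})$. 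Your route instead peels off $\La$ first, so the leading block $\Delta^n_i\si_t^2+\Delta^n_i\Phi+\Delta^n_i\Psi$ falls out as an exact identity from $-\frac{2}{u^2}\Re\Log\La_{t,T}(t^n_i,u_T)$; the paper arrives at the same block only after some ``reverse engineering'' of its intermediate formula~\eqref{eq:incr-si-1}. That is a real structural advantage of your approach: the identification of the $\Phi$ and $\Psi$ pieces is transparent.

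Where I would push back is on your assessment of the difficulty. You flag the cross term $\xi\cdot\Delta^n_i\xi$ and conclude that a sharp pointwise analogue of Theorem~\ref{thm:incr-L} with $O^\uc(T^{N/2})$ accuracy is required. It is not. The paper's Lemma~\ref{lem:exp} already gives $\call_{t,T}(u)=\Theta_{t,T}(u_T)(1-\eta_{t,T}(u_T))+C_t(u)T+o^\uc(T)$, and $\La$ agrees with $\Theta(1-\eta)$ up to $O^\uc(T)$ (this is~\eqref{eq:help-4}); hence $\xi=(\call-\La)/\La=O^\uc(T)$ pointwise, nothing sharper. The delicate object is $\Delta^n_i\xi$, and for it you do \emph{not} need a pointwise statement at all: write $\Delta^n_i\xi=(\Delta^n_i\call-\Delta^n_i\La)/\La^{(1)}-\xi_{t^n_i,T^n_i}\,\Delta^n_i\La/\La^{(1)}$ (with $\La^{(1)}=\La_{t^n_{i-1},T^n_{i-1}}(t^n_i,u_{T^n_{i-1}})$), apply Theorem~\ref{thm:incr-L} to the numerator of the first fraction, and use $\xi_{t^n_i,T^n_i}=O^\uc(T)$ together with $\Delta^n_i\La=O^\uc(\sqrt{\Den}+\Den/T)$ in the second. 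This yields $\Delta^n_i\xi=O^\uc(\sqrt{\Den}T+T^{N/2}+\Den/\sqrt{T})$, and the quadratic term is then $\xi\cdot\Delta^n_i\xi=O^\uc(T)\cdot O^\uc(\sqrt{\Den}T+T^{N/2}+\Den/\sqrt{T})=o^\uc(\sqrt{\Den}T+T^{N/2}+\Den/\sqrt{T})$, safely inside the residual. So the ``bulk of the technical work'' you anticipate collapses: the paper's modest pointwise lemma plus the incremental Theorem~\ref{thm:incr-L} are together sufficient, exactly the two ingredients the paper itself deploys (just in a different arrangement).

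Two smaller points worth nailing down. First, your factorization $\call_{t,T}(u)=\La_{t,T}(t^n_i,u_T)(1+\xi_{t,T}(u))$ fixes the auxiliary argument $s=t^n_i$ for both endpoints $t^n_{i-1}$ and $t^n_i$, so $\xi_{t^n_{i-1},T^n_{i-1}}(u)$ secretly depends on $t^n_i$ as well; $\Delta^n_i\xi$ is then not an It\^o increment of a single process but a genuine two-time object, which is why the quotient-rule route above (re-expressing everything through $\Delta^n_i\call$ and $\Delta^n_i\La$) is the right way to control it. Second, when you absorb $1/\La^{(1)}$ into the new $C^{(k)}$'s, observe that the resulting coefficient depends on $t^n_{i-1}$; to land exactly in the advertised form $C^{(k)}_{t^n_i}(u)$ you should replace $\La^{(1)}$ by $e^{-\frac12u^2\si^2_{t^n_i}}$, which introduces an additional $O^\uc(\sqrt{T}+\sqrt{\Den})$ factor against the already small $\Delta^n_i v^{(k)}T^n_{i-1}=O^\uc(\sqrt{\Den}T)$, hence negligible. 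With these clarifications, your outline, including the sketch for~\eqref{eq:incr-si-2} via summability of jumps, is a correct alternative proof.
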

	
	Theorem~\ref{thm:incr} shows that   an increment of $\si^2_{t,T}(u)$ is given by the corresponding increment of spot variance plus several bias terms. Two of them, $\Delta^n_i \Phi_{t,T}(t^n_i,u)$ and $\Delta^n_i \Psi_{t,T}(t^n_i,u)$, depend on the intensity and symmetry of jumps in price, volatility and the jump intensity. Under the stronger Assumption~\ref{ass:main-1},  these two terms can be bounded by $O^\uc(\sqrt{\Den T})$. 
	
	In addition to $\Delta^n_i \Phi_{t,T}(t^n_i,u)$ and $\Delta^n_i \Psi_{t,T}(t^n_i,u)$, there are also several bias terms of exact order $\sqrt{\Den} T$, which are due to the time variation of the characteristics of $x$. These biases, even though they can be absorbed into $O^\uc(\sqrt{\Den T})$, can be nontrivial in applications, which is why we singled them out. By using the estimator $\si^2_{t,T,T'}(u)$, we shall see in Corollary~\ref{cor:incr} that these bias terms can be effectively canceled out. Also, as Corollary~\ref{cor:finact} below shows, if both price and volatility have jumps of finite variation and price jumps are furthermore like those in a (possibly time-changed) Lévy process, then the bias terms  of exact order $\sqrt{\Den} T$ are the leading ones, so in this case  using $\si^2_{t,T,T'}(u)$ not only reduces bias in finite    samples but also   asymptotically.
	

	In practice, one is often interested in estimating volatility or variance of a \emph{transform} of volatility, that is, of $V_t=F(\si^2_t)$, where $F$ is a $C^2$-function on $(0,\infty)$. Typical functions of interest include $F(x)=x$ (volatility of variance), $F(x)=\sqrt{x}$ (volatility of volatility), $F(x)=\log x$ (volatility of log-variance) and $F(x)= \log \sqrt{x}$ (volatility of log-volatility). For any fixed $0\leq\underline t < t<\overline t<\infty$, 
	since $\si$ is an It\^o semimartingale, $V$ is again an It\^o semimartingale on $[\underline t, \overline t]$ on the event $\{\inf_{s\in[\underline t,\overline t]}\si_s^2>0\}$. 
	As natural estimators of $V_t$, we consider
	\begin{equation}\label{eq:V} 
		V_{t,T}(u) = F(\si^2_{t,T}(u)),\qquad 	V_{t,T,T'}(u)=\frac{T'V_{t,T}(u)-TV_{t,T'}(u)}{T'-T},
	\end{equation}
	where $\si^2_{t,T}(u)$ is defined in \eqref{eq:vol-est}. We have the following result for the increments of $V_{t,T}(u)$ and $V_{t,T,T'}(u)$:
	\begin{corollary}\label{cor:incr} Suppose that $F$ is a $C^2$-function on $(0,\infty)$ and that Assumption~\ref{ass:main-1} holds. Writing $\Delta^{n}_i V_{t,T}(u)= V_{t^n_{i-1},T^n_{i-1}}(u)-V_{t^n_i,T^n_i}(u)$,  $\Delta^n_i V_{t,T,T'}(u)=V_{t^n_{i-1},T^n_{i-1}, T^{\prime n}_{i-1}}(u)-V_{t^n_{i},T^n_{i}, T^{\prime n}_{i}}(u)$ and $\Delta^n_i V_t = V_{t^n_{i-1}}-V_{t^n_i}$, we  have on the set $\{\inf_{s\in[\underline t,\overline t]}\si_s^2>0\}$ that
		\begin{equation}\label{eq:incr-V} \begin{split}
				\Delta^{n}_i V_{t,T}(u)&=\Delta^n_i V_t+ \sum_{k=1}^K \Delta^n_i v^{(k)}_t C^{(k)}_t(u) T^n_{i-1} +O^\uc(T^{N/2}+\sqrt{\Den T}) + o^\uc(  \Den/\sqrt{T}),\\ 
				\Delta^{n}_i V_{t,T,T'}(u)&=\Delta^n_i V_t+O^\uc(T^{N/2}+\sqrt{\Den T}) + o^\uc(  \Den/\sqrt{T}),
			\end{split}
		\end{equation}
		as $\Den\to0$, $T\to0$ and $\Den/T\to0$
		for some $v^{(k)}_t$ and $C^{(k)}_t(u)$ with the same properties as in Theorem~\ref{thm:incr}.
	\end{corollary}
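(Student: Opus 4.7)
The plan is to lift the expansion for $\si^2_{t,T}(u)$ in Theorem~\ref{thm:incr} to one for $V_{t,T}(u)=F(\si^2_{t,T}(u))$ via a Taylor expansion of $F$, and then to exploit an algebraic cancellation of the leading bias term when forming $V_{t,T,T'}(u)$.

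First, on the set $\{\inf_{s\in[\un t,\ov t]}\si_s^2>0\}$, the process $\si_s^2$ stays in a fixed compact $K\subset(0,\infty)$, and from the preliminary consistency $\si^2_{t,T}(u)=\si^2_t+O(T)$ (which under Assumption~\ref{ass:main-1} is essentially the one-point version of Theorem~\ref{thm:incr}, obtainable from the first-order expansion of $\call_{t,T}(u)$), $\si^2_{t,T}(u)$ lies in a slightly enlarged compact of $(0,\infty)$ with probability tending to one. After standard localization, $F,F',F''$ are bounded on the relevant range.

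Second, for $\Delta^n_i V_{t,T}(u)$ I would apply Taylor's theorem to write
\begin{equation*}
\Delta^n_i V_{t,T}(u) = F'(\si^2_{t^n_i,T^n_i}(u))\Delta^n_i\si^2_{t,T}(u) + \tfrac12 F''(\xi_i)(\Delta^n_i\si^2_{t,T}(u))^2,
\end{equation*}
and similarly $\Delta^n_i V_t = F'(\si^2_{t^n_i})\Delta^n_i\si^2_t + \tfrac12 F''(\eta_i)(\Delta^n_i\si^2_t)^2$. By \eqref{eq:incr-si-2}, $\Delta^n_i\si^2_{t,T}(u)=O(\sqrt{\Den})$, so the quadratic remainders are $O(\Den)=o^\uc(\Den/\sqrt{T})$ (since $T\to0$). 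Using the preliminary bound $F'(\si^2_{t^n_i,T^n_i}(u))-F'(\si^2_{t^n_i})=O(T)$, the mixed term is $O(T\sqrt{\Den})=o^\uc(\sqrt{\Den T})$. Substituting \eqref{eq:incr-si-2}, I obtain the first claim of the corollary with $C^{(k)}_t(u)$ replaced by $F'(\si^2_t)C^{(k)}_t(u)$; the latter inherits the Itô semimartingale property, boundedness in $u$, and increments of order $\sqrt{\Den}$, since $F'\circ\si^2$ is a bounded Itô semimartingale on the localized event.

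Third, for $\Delta^n_i V_{t,T,T'}(u)$ I would use the identity $T^{\prime n}_i-T^n_i=T'-T$ (independent of $i$) to decompose
\begin{equation*}
\Delta^n_i V_{t,T,T'}(u) = \frac{T^{\prime n}_{i-1}\Delta^n_i V_{t,T}(u)-T^n_{i-1}\Delta^n_i V_{t,T'}(u)}{T'-T} + \frac{\Den(V_{t^n_i,T^{\prime n}_i}(u)-V_{t^n_i,T^n_i}(u))}{T'-T}.
\end{equation*}
The second piece is $O(\Den)=o^\uc(\Den/\sqrt{T})$ since both $V_{t,T}(u)$ and $V_{t,T'}(u)$ estimate $V_t$ with $O(T)$ bias. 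Substituting the expansion from the previous step into the first piece, the coefficient of $\Delta^n_i V_t$ equals $(T^{\prime n}_{i-1}-T^n_{i-1})/(T'-T)=1$, while the bias $\sum_k\Delta^n_i v^{(k)}_t\,F'(\si^2_{t^n_i})C^{(k)}_{t^n_i}(u)$ is multiplied by $(T^{\prime n}_{i-1}T^n_{i-1}-T^n_{i-1}T^{\prime n}_{i-1})/(T'-T)=0$ and therefore drops out entirely. Since $T^{\prime n}_{i-1}/(T'-T)$ and $T^n_{i-1}/(T'-T)$ are both $O(1)$, the residual terms remain in the same class.

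The main obstacle I expect is not any single step but rather securing the \emph{quantitative} preliminary rate $\si^2_{t,T}(u)-\si^2_t=O(T)$ uniformly in $t\in[\un t,\ov t]$ and $u\in\calu$: it is stronger than mere consistency but weaker than the full increment expansion, and it is the only ingredient not already delivered by Theorem~\ref{thm:incr}. This bound, however, follows by specializing the chain of estimates used in the proof of Theorem~\ref{thm:incr-L} to a single time point, so it is a routine byproduct of the same machinery. Once this is in place, the two Taylor expansions, the bookkeeping of orders $\Den,\sqrt{\Den},T,\sqrt{\Den T},\Den/\sqrt{T}$ (all comparable under $\Den/T\to0$, $T\to0$), and the algebraic cancellation exhibited above deliver both displays of the corollary.
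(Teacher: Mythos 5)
Your proposal is correct and follows essentially the same route as the paper: apply a Taylor expansion of $F$, replace $F'(\si^2_{t^n_i,T^n_i}(u))$ by $F'(\si^2_{t^n_i})$ using a preliminary consistency rate, invoke \eqref{eq:incr-si-2}, and then exploit the exact algebraic cancellation $(T^{\prime n}_{i-1}T^n_{i-1}-T^n_{i-1}T^{\prime n}_{i-1})/(T'-T)=0$ for the bias-corrected estimator. One small quibble: the preliminary rate you assert, $\si^2_{t,T}(u)-\si^2_t=O^\uc(T)$, is likely too optimistic in general under Assumption~\ref{ass:main-1}; the paper instead cites Theorem~1 of \cite{todorov2021bias}, which gives only $\si^2_{t^n_i,T^n_i}(u)=\si^2_{t^n_i}+o^\uc(T^{1/2})$, but this weaker rate is all the argument needs since the resulting mixed term $o^\uc(T^{1/2})\cdot O^\uc(\sqrt{\Den})=o^\uc(\sqrt{\Den T})$ is still absorbed into the stated error bound.
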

	
	We note that $\Delta^{n}_i V_{t,T,T'}(u)$ is void of the terms $\sum_{k=1}^K \Delta^n_i v^{(k)}_t C^{(k)}_t(u) T^n_{i-1}$, which are due to the dynamics of the spot variance and which can be nontrivial in practice as illustrated in \cite{CT23_b}. Furthermore, in the case where price jumps  come from a (possibly time-changed) finite variation Lévy process, as is the case for many models used in applications, then the next corollary  shows that $\Delta^n_i V_{t,T,T'}(u)$ is a less biased approximation of $\Delta^n_i V_t$ than $\Delta^n_i V_{t,T}(u)$, not only in finite samples but also asymptotically.  
	\begin{corollary}\label{cor:finact}
		Besides the assumptions of Corollary~\ref{cor:incr}, suppose that $\ga(t,z)=z_1$ (where $z_1$ is the first coordinate of $z$) and $F(dz)=dz$ and assume that \eqref{eq:extra} remains true with $\ga^\si$ in place of $\ga$. Then
		on the set $\{\inf_{s\in[\underline t,\overline t]}\si_s^2>0\}$,
		\begin{equation}\label{eq:incr-V-2} \begin{split}
				\Delta^{n}_i V_{t,T}(u)&=\Delta^n_i V_t+ \sum_{k=1}^K \Delta^n_i v^{(k)}_t C^{(k)}_t(u) T^n_{i-1} -T^n_{i-1}\si_{t^n_i}\int_{\R^{d'}} \Delta^n_i \ga^\si(t,z)\la(t^n_i,z)dz \\
				&\quad+O^\uc(T^{N/2}) + o^\uc(\sqrt{\Den}T+  \Den/\sqrt{T}),\\ 
				\Delta^{n}_i V_{t,T,T'}(u)&=\Delta^n_i V_t+O^\uc(T^{N/2}) + o^\uc(\sqrt{\Den}T+  \Den/\sqrt{T}).
			\end{split}\raisetag{-2.5\baselineskip}
		\end{equation}
		In particular, if $\inf_{s\in[\un t,\ov t]} \{(\si^\si_s)^2+(\ov\si^\si_s)^2\}>0$, $T^{N/2}= o^\uc(\sqrt{\Den}T)$ and $\Den/\sqrt{T} =O^\uc(\sqrt{\Den}T)$,   the bias of $	\Delta^{n}_i V_{t,T,T'}(u)$ is of strictly smaller asymptotic order than the bias of $	\Delta^{n}_i V_{t,T}(u)$.
	\end{corollary}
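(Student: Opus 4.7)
The strategy is to start from the fully explicit expansion \eqref{eq:incr-si-0} of Theorem~\ref{thm:incr} (rather than the coarser \eqref{eq:incr-si-2}), sharpen the analysis of $\Delta^n_i\Phi_{t,T}(t^n_i,u)$ and $\Delta^n_i\Psi_{t,T}(t^n_i,u)$ under the additional structure of Corollary~\ref{cor:finact}, and then propagate through the $C^2$-function $F$ and the bias-correction formula \eqref{eq:vol-est-20} exactly as in the proof of Corollary~\ref{cor:incr}. Under $\ga(t,z)=z_1$ and $F(dz)=dz$, the quantities in \eqref{eq:Theta123} collapse substantially: since $\ga$ is deterministic in $t$, we have $\si^\ga\equiv0$ and $\ga^\ga\equiv0$, so $\ov\psi_t(s,u)\equiv0$, and moreover $\vp_t(s,u)$ and hence $\Phi_t(s,u)$ lose their $t$-dependence entirely. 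Consequently $\Delta^n_i\Phi_{t,T}(t^n_i,u)=\Phi(t^n_i,u_{T^n_{i-1}})-\Phi(t^n_i,u_{T^n_i})$, and a mean-value expansion in $u_T$ combined with a Riemann--Lebesgue estimate for $\int z_1\sin(vz_1)\la(s,z)dz$ (using that $z_1\la(s,z)\in L^1(dz)$, which follows from \eqref{eq:bound} together with $|z_1|=|\ga(t,z)|\le J_n(z)$) yields $\Delta^n_i\Phi_{t,T}(t^n_i,u)=o^\uc(\Den/\sqrt{T})$, absorbed into the residual.

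The heart of the proof is the analysis of $\Delta^n_i\Psi_{t,T}(t^n_i,u)$. Under our assumptions, $\psi_t(s,u)$ reduces to $-\tfrac12 u^2\si_t\int(e^{iuz_1}-1)\ga^\si(t,z)\la(s,z)dz$, and a Riemann--Lebesgue argument (now invoking the hypothesis that \eqref{eq:extra} holds for $\ga^\si$), together with analogous estimates for $\wt\psi_t(s,u_T)$, gives
\[
	\Psi_{t,T}(s,u)=-T\si_t\int_{\R^{d'}}\ga^\si(t,z)\la(s,z)\,dz+o^\uc(T).
\]
Writing the leading term as $H(t,s,T)=-T\si_t\int\ga^\si(t,z)\la(s,z)dz$ and applying the product rule to $H(t^n_{i-1},t^n_i,T^n_{i-1})-H(t^n_i,t^n_i,T^n_i)$ splits the increment into three pieces: (i) the $T$-change, of size $\Den$ times a bounded quantity, absorbed into $o^\uc(\Den/\sqrt{T})$; (ii) the $\si_t$-change, producing $-T^n_{i-1}\Delta^n_i\si_t\cdot\int\ga^\si(t^n_i,z)\la(t^n_i,z)dz$, which has exactly the form $\Delta^n_i v^{(k)}_t C^{(k)}_{t^n_i}(u)T^n_{i-1}$ with $v^{(k)}_t=\si_t$ and $C^{(k)}_t(u)=-\int\ga^\si(t,z)\la(t,z)dz$ (trivially bounded in $u$) and so is absorbed into the $\sum_k$ sum; and (iii) the $\ga^\si$-change, producing precisely the explicit bias $-T^n_{i-1}\si_{t^n_i}\int\Delta^n_i\ga^\si(t,z)\la(t^n_i,z)dz$. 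Cross terms are of order $\sqrt{\Den}\cdot\sqrt{\Den}\cdot T=\Den T=o^\uc(\sqrt{\Den}T)$. A second-order Taylor expansion of $F$ around $\si^2_{t^n_i}$, valid on $\{\inf_{[\un t,\ov t]}\si^2>0\}$ as in Corollary~\ref{cor:incr}, then transfers the expansion to $V_{t,T}(u)$ and delivers the first line of \eqref{eq:incr-V-2}.

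For the bias-corrected estimator, I plug the refined expansion into $\si^2_{t,T,T'}(u)=(T'\si^2_{t,T}(u)-T\si^2_{t,T'}(u))/(T'-T)$ evaluated at both endpoints $(t^n_{i-1},T^n_{i-1},T^{\prime n}_{i-1})$ and $(t^n_i,T^n_i,T^{\prime n}_i)$. Crucially, both the $\sum_k\Delta^n_i v^{(k)}_t C^{(k)}_t(u)T^n_{i-1}$ term and the explicit bias $-T^n_{i-1}\si_{t^n_i}\int\Delta^n_i\ga^\si\la dz$ are linear in the time-to-maturity $T^n_{i-1}$ (their analogues for $T'$ being linear in $T^{\prime n}_{i-1}$). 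The telescoping combination underlying the bias correction causes these contributions to cancel exactly, the only slack being $O(\Den)$ from the mismatch $T^n_{i-1}-T^n_i=T^{\prime n}_{i-1}-T^{\prime n}_i=-\Den$ multiplied by the $O(T)$ size of $\si^2_{t,T}(u)-\si^2_t$, which is absorbed into $o^\uc(\sqrt{\Den}T+\Den/\sqrt{T})$. Applying $F$ as before gives the second line of \eqref{eq:incr-V-2}. For the strictly-smaller-order assertion, the hypothesis $(\si^\si_s)^2+(\ov\si^\si_s)^2>0$ forces $\Delta^n_i\si_t$ to be of exact order $\sqrt{\Den}$, so the $\sum_k$ contribution in $\Delta^n_i V_{t,T}(u)$ is of exact order $\sqrt{\Den}T$, while the rate hypotheses $T^{N/2}=o^\uc(\sqrt{\Den}T)$ and $\Den/\sqrt{T}=O^\uc(\sqrt{\Den}T)$ ensure that the residual of $\Delta^n_i V_{t,T,T'}(u)$ is of strictly smaller order.

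The main obstacle lies in the Riemann--Lebesgue sharpening for $\wt\psi_t(s,u_T)$ and for $\int\cos(u_Tz_1)\ga^\si(t,z)\la(s,z)dz$ at $u_T=u/\sqrt{T}\to\infty$: we must upgrade the coarse $O^\uc(\sqrt{\Den T})$ bound implicit in \eqref{eq:incr-si-2} to the precise $o^\uc(\sqrt{\Den}T)$ rate stated here, and do so uniformly in $u\in\calu$. This is where the clean Fourier structure supplied by $\ga(t,z)=z_1$ and the $\ga^\si$-analogue of \eqref{eq:extra} become essential; without them, the oscillatory integrals would not decay fast enough to isolate the explicit bias $-T^n_{i-1}\si_{t^n_i}\int\Delta^n_i\ga^\si(t,z)\la(t^n_i,z)dz$ as the single surviving leading-order correction.
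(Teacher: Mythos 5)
Your overall strategy --- specialize \eqref{eq:incr-si-0} under $\ga(t,z)=z_1$, note that this kills the $t$-dependence of $\vp$ and of $\xi^{(1)},\xi^{(2)},\xi^{(3)},\xi^{(5)},\dots,\xi^{(8)}$, invoke Riemann--Lebesgue on what survives, and let the two-maturity telescoping cancel anything linear in the time-to-maturity --- is exactly the paper's approach, and your $H$-decomposition is a sensible conceptual repackaging of the paper's $\chi^{(j)}/\xi^{(j)}$ bookkeeping that underlies \eqref{eq:1}--\eqref{eq:2}.

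However, the intermediate claim $\Psi_{t,T}(s,u)=-T\si_t\int_{\R^{d'}}\ga^\si(t,z)\la(s,z)\,dz+o^\uc(T)$ is incorrect. After substituting $u\mapsto u_T$, the $\si^{\la,(1)}$-part of $\wt\psi$ contributes
\[
-\frac{2T^2}{u^2}\Re\Bigl(\tfrac12\int(e^{iu_Tz_1}-1-iu_Tz_1)\,iu_T\si_t\si^{\la,(1)}(s,z)\,dz\Bigr)
=-T\si_t\int_{\R^{d'}} z_1\si^{\la,(1)}(s,z)\,dz + o^\uc(T),
\]
since $\Re\bigl[iu_T(e^{iu_Tz_1}-1-iu_Tz_1)\bigr]=u_T^2z_1-u_T\sin(u_Tz_1)$ and only the $\sin$-part is killed by Riemann--Lebesgue; this is an $O^\uc(T)$ piece that your $H$ omits. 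Likewise, the $\ga^\la$-part of $\wt\psi$, after the $u\mapsto u_T$ scaling, can be asymptotically \emph{larger} than $T$ (its size depends on the jump activity through exponents $r,\rho<2$), so it is not $o^\uc(T)$ either. Your argument as written therefore does not establish that the $\ga^\si$-piece is the only surviving leading term. The omission happens to be harmless for the final conclusion --- the missing $-T\si_t\int z_1\si^{\la,(1)}(s,z)\,dz$ has the same $T\cdot\si_t\cdot(\text{$t$-independent factor})$ product structure as your $H$, so its increment is just another $\Delta^n_iv^{(k)}_tC^{(k)}_{t^n_i}(u)T^n_{i-1}$-term absorbed into the $\sum_k$ sum, and the $t$-independent $\ga^\la$-piece only enters $\Delta^n_i\Psi$ through the $T/u_T$-change, which the paper controls via the $O^\uc(\Den/T^2)$ residual in the expansion of $\Delta^n_i\wt\psi_t(t^n_i,\cdot)$ after \eqref{eq:eta} (giving $O^\uc(\Den)=o^\uc(\Den/\sqrt{T})$) --- but to close the gap you should either correct the leading-order expansion of $\Psi$ and note why the extra pieces are absorbed, or argue directly from the $\chi^{(1)}/\chi^{(3)}/\xi^{(4)}$-expansion of $\Delta^n_i\Psi_{t,T}(t^n_i,u)$ as the paper does, which isolates precisely the $\Delta^n_i\si_t$- and $\Delta^n_i\ga^\si$-contributions and bounds the remainder uniformly by $o^\uc(\sqrt{\Den}T+\Den/\sqrt{T})$.
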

	Using suitable bias-reduction techniques, the residual terms in \eqref{eq:incr-V} can be further reduced in size (in an asymptotic sense) under mild assumptions on the jump measure of $x$, see e.g., Theorem 5 in \cite{todorov2021bias}. With such corrections, one should be able to achieve the strong result of Corollary~\ref{cor:finact} even in the infinite variation jump case. To keep the analysis short, we do not consider such extensions here.

	\begin{remark}\label{rem:PQ}
		The results of this and the previous section hold as long as $x$ satisfies the conditions outlined in Assumptions~\ref{ass:main} or \ref{ass:main-1} with respect to a given  probability space $(\Om,\calf,\P)$. In financial applications, there are  typically two probability measures of interest: the physical probability measure $\P$ and the risk-neutral probability measure $\Q$, which is used for pricing of derivatives and is equivalent to $\P$. Even though we have used the notation $\P$ so far (to denote a generic probability measure and to stress the fact that the expansions obtained so far can be applied to any probability measure), in a financial context where $O_{t,T}(k)$ has the interpretation of an option price, \eqref{eq:opt} and consequently \eqref{eq:spanning} and \eqref{eq:vol-est} are to be understood under the risk-neutral measure $\Q$. Consequently, in order to apply the theorems and corollaries so far in this setting, we have to assume Assumptions~\ref{ass:main} or \ref{ass:main-1} under $\Q$. This distinction becomes important in the next section as we need assumptions under both $\P$ and $\Q$.
	\end{remark}

	\section{Testing whether volatility jumps have finite or infinite variation}\label{sec:test}
	
	In this section, we show how our previous results can be used to construct a test between
	\begin{equation}\label{eq:H} \begin{split}
		&H_0\colon \text{``Volatility jumps are of finite variation.''}\\
		 \text{and}\quad & H_1\colon \text{``Volatility jumps are of infinite variation.''}
		 \end{split}
	\end{equation}
From now on, we use $\P$ to denote the physical probability measure and $\Q$ to denote the equivalent  risk-neutral probability measure. We further write $\E$ and $\E_t$ for (conditional) expectation under $\P$ and put a superscript $\Q$ to denote the same under $\Q$.
	Before specifying the details of the test, let us remark that none of the estimators studied  in the previous section is feasible in practice. This is because the  estimators so far hinge on $\call_{t,T}(u)=\E^\Q_t[e^{iu(x_{t+T}-x_t)/\sqrt{T}}]$, which by \eqref{eq:spanning} is only a statistic if the option prices $O_{t,T}(k)$ were continuously observed for every log-strike $k$. In practice, of course, option prices are only  available on a  finite log-strike grid
	\begin{equation}\label{eq:logmoney}
		\underline{k}_{t,T} ~\equiv~ k_{1,t,T} \, < \, k_{2,t,T} \, < \, \cdots \, < \, k_{N_{t,T},t,T}~\equiv~ \overline{k}_{t,T},\qquad N_{t,T}\in\mathbb{N}_+,
	\end{equation}
	which may be random and vary in $t$ and $T$. We denote the gap between consecutive log-strikes  by $\delta_{j,t,T} = k_{j,t,T} - k_{j-1,t,T}$, for $j=2,\dots,N_{t,T}$. In addition, option prices may be observed with  errors, that is, we only observe
	\begin{equation}\label{eq:obs}
		\widehat{O}_{t,T}(k_{j,t,T}) = O_{t,T}(k_{j,t,T})+\epsilon_{j,t,T},\qquad j=1,\dots,N.
	\end{equation}
	We assume that the errors $\epsilon_{j,t,T}$ are defined on an auxiliary space $(\Omega^{(1)},\mathcal{F}^{(1)})$ equipped with a transition probability $\mathbb{P}^{(1)}(\omega,d\omega^{(1)})$ from $\Omega$, the probability space on which $x$ is defined, to $\Omega^{(1)}$. We further define
	\begin{equation}\label{eq:ext} 
		\ov\Omega \,=\, \Omega\times \Omega^{(1)},\quad\ov{\mathcal{F}} \,=\, \mathcal{F} \otimes \mathcal{F}^{(1)},\quad\ov{\mathbb{P}}(d\omega,d\omega^{(1)}) \,=~ \mathbb{P}(d\omega) \, \mathbb{P}^{(1)}(\omega,d\omega^{(1)}).
	\end{equation}   
	With $\widehat{O}_{t,T}(k_{j,t,T})$, we can now  form a Riemann sum approximation of the integral in (\ref{eq:spanning}) and obtain a feasible estimator of $\call_{t,T}(u)$ via
	\begin{equation}\label{eq:L_hat}
		\widehat{\mathcal{L}}_{t,T}(u) = 1 - \left(\frac{u^2}{T}+i\frac{u}{\sqrt{T}}\right)e^{-x_t}\sum_{j=2}^{N_{t,T}}e^{(iu/\sqrt{T}-1)(k_{j-1,t,T}-x_t)}\widehat{O}_{t,T}(k_{j-1,t,T})\delta_{j,t,T},
	\end{equation}
	for $u\in\R$. This in turn yields feasible versions of  \eqref{eq:V} via
	\begin{equation}\label{eq:V_hat} 
		\wh V_{t,T}(u)=F(\widehat{\si}^2_{t,T}(u)),	\qquad \widehat{\si}^2_{t,T}(u) = -\frac{2}{u^2}\log|\widehat{\mathcal{L}}_{t,T}(u)|,
	\end{equation}
	and
	\begin{equation}\label{eq:V_12}
		\wh V_{t,T,T'}(u)= \frac{T'\widehat{V}_{t,T}(u) - T\widehat{V}_{t,T'}(u)}{T'-T}.
	\end{equation}
	
	In order that $\wh\call_{t,T}(u)$  be a sufficiently good estimator of $\call_{t,T}(u)$, we work under the following  assumptions in the remainder of this section. They concern the existence of conditional moments of $x$ under $\Q$, the option observation scheme as well as the observation errors, and are similar to those  in \cite{CT22}.   
	
	\bass\label{ass:C} The observed option prices $\wh O_{t,T}(k_{j,t,T})$ from \eqref{eq:obs} are defined on $(\ov\Om,\ov\calf,\ov\P)$ and there is an $\F$-adapted c\`{a}dl\`{a}g process  $C_t$   such that the following holds:
	\benu
	\item  For all $0<t<u<\infty$, 
	\begin{equation}\label{a3:1}\begin{split}
			&\E^\Q_t\biggl[ \al_u^4+\sigma_u^6+e^{4\lvert x_u\rvert}+\biggl(\int_\R (e^{3\lvert\Ga(u, z)\rvert}-1) \nu_u(dz)\biggr)^4\\
			&\qquad+\biggl(\int_\R (e^{3\lvert\ga(u, z)\rvert}-1-3\lvert\ga(u,z)\rvert) \nu_u(dz)\biggr)^4\biggr]<C_t.\end{split}
	\end{equation}
	\item The number of strikes $N_{t,T}$ and the log-strike grid $\{k_{j,t,T}\}_{j=1}^{N_{t,T}}$ are $\mathcal{F}_{{t}}$-measurable  and  
	\begin{equation}\label{a4:1}
		C^{-1}_{{t}}\delta\leq \delta_{j,t,T}\leq C_{{t}}\delta,\qquad j=2,\dots, N_{t,T},
	\end{equation}
	for a deterministic sequence $\delta=\delta(T)$. Moreover, there is a random function $\rho_{t,\tau}(k)$ such that $t\mapsto \rho_{t,\tau}(k)$ is $\F$-adapted,  locally bounded (uniformly in $k$) and satisfies
	\begin{equation}\label{eq:rho} 
		\E_t[(\rho_{u,\tau}(k)-\rho_{t,\tau}(k))^2\wedge 1]^{1/2} \leq C_t\sqrt{u-t},\quad 0<t<u<\infty,\ \tau>0,\ k\in\R;
	\end{equation}
	$\tau\mapsto \rho_{t,\tau}(k)$ is continuous in $\tau>0$ (uniformly in $k$ and locally uniformly in $t$) and $k\mapsto \rho_{t,\tau}(k)$ is differentiable with a derivative that is locally bounded in $t$ (uniformly in $k$ and locally uniformly in  $\tau$); and
	for any $\tau>0$,
\begin{equation}\label{eq:gridsize} 
	\sup_{j:\lvert k_{j,t,\tau T}-x_t\rvert < C_t^{-1}}  \lvert \delta_{j,t,\tau T}/\delta-\rho_{t,\tau}(k_{j-1,t,\tau T}-x_t) \rvert \leq C_t\delta.
\end{equation}
	\item For some $\iota>0$,
\begin{equation}\label{eq:kminmax} 
	\liminf_{T\to0} \frac{\inf_{n\in\N, i=1,\dots,k_n} ( \lvert\underline  k_{t^n_i,T^n_i} \rvert\wedge\lvert \underline k_{t^n_i,T^{\prime n}_i}\rvert\wedge \overline  k_{t^n_i,T^n_i}  \wedge  \overline k_{t^n_i,T^{\prime n}_i}  )}{(\delta/\sqrt{T})^\iota} = \infty.  
\end{equation}
	\item For $t,\tau>0$ and $j=1,\dots, N_{t,\tau T}$, we have
	\begin{equation}\label{eq:epsilon} 
		\epsilon_{j,t,\tau T}(k_{j,t,\tau T}) = \zeta_{t,\tau}(k_{j,t,\tau T}-x_t)O_{t,\tau T}(k_{j,t,\tau T})\overline{\epsilon}_{j,t,\tau T},
	\end{equation}
	where $t\mapsto \zeta_{t,\tau}(k)$ is $\F$-adapted, locally bounded (uniformly in $k$) and satisfies
	\begin{equation}\label{eq:zeta} 
		\E_t[(\zeta_{u,\tau}(k)-\zeta_{t,\tau}(k))^2\wedge 1]^{1/2} \leq C_t\sqrt{u-t},\quad 0<t<u<\infty,\ \tau>0,\ k\in\R;
	\end{equation}  $\tau\mapsto \zeta_{t,\tau}(k)$ is continuous in $\tau>0$ (uniformly in $k$ and locally uniformly in $t$) and $k\mapsto \zeta_{t,\tau}(k)$ is   differentiable with a derivative that is locally bounded in $t$ (uniformly in $k$ and locally uniformly in  $\tau$); and $\overline{\epsilon}_{j,t,T}$ is   $\ov\calf$-measurable, independent of $\calf$ under $\ov\P$ and i.i.d.\ as $j$, $t$ and $T$ vary. Moreover,
	\begin{equation}\label{eq:mom} 
		\mathbb{E}^{\ov \P}[\overline{\epsilon}_{j,t,T}\mid \mathcal{F}]= 0,\quad \mathbb{E}^{\ov \P}[(\overline{\epsilon}_{j,t,T})^2\mid \mathcal{F}] = 1,\quad \mathbb{E}^{\ov \P}[| \overline{\epsilon}_{j,t,T}|^{p}\mid\mathcal{F}] <\infty  \text{ for all } p>2.
	\end{equation} 
	\eenu
	\eass

	We now turn to  the details of our option-based test for infinite variation volatility jumps. 
	In order to describe the intuition behind our test, let us assume that we have access to $\{\Delta^n_i V_t : i=1,\dots, k_n, t\in\calt\}$, where $k_n$ increases to $\infty$ such that $k_n\Den\to0$ and  $\calt=\{t^{(i)}:i=1,\dots, N_n\}$ is a collection of time points such that $t^{(i)}+k_n\Den = t^{(i+1)}$ for all $i$ and $n$  and $k_n\lvert \calt\rvert \sim \tau/\Den$ for some $\tau>0$. Note that $\calt$ in principle depends on $n$ but we refrain from indicating this in the notation for the sake of  simplicity. Furthermore, let us assume
	that $V=F(\si^2_t)$ is a Lévy procress of the form $V_t=V_0+\al^V t +\si^V W_t + J^V_t$, where $J^V$ is a symmetric $\beta$-stable process with scale parameter $A$. The idea is now to look at the empirical characteristic function of normalized volatility increments given by
	\begin{equation}\label{eq:L1} 
		\L^n_1(U)= \frac1{k_n\lvert \calt\rvert}\sum_{t\in\calt} \sum_{i=1}^{k_n} e^{iU_n \Delta^n_i V},\quad U>0, \quad U_n=U/\sqrt{\Den}.
	\end{equation}
	By the CLT and the Lévy--Khintchine formula, we have
	\[ \log \lvert \L^n_1(U)\rvert \sim -\frac12 U^2(\si^V)^2 -\lvert UA\rvert^\beta \Den^{1-\beta/2}, \]
	where the approximation holds at a rate of $1/\sqrt{\Den}$.
	Therefore, if $A=0$ or $\beta<1$ (i.e., $V$ has no jumps or only finite variation jumps),
	\begin{equation}\label{eq:aux1} 
	\Den^{-1/2}\bigl(\log \lvert \L^n_1(\sqrt{2} U)\rvert -2\log \lvert \L^n_1(U)\rvert \bigr) \limd N(0,\mathrm{Var}) 
	\end{equation}
	for some $\mathrm{Var}>0$, while if $\beta>1$ and $A\neq 0$, we have 
	\begin{equation}\label{eq:aux2} 
		\Den^{-1/2}\bigl(\log \lvert \L^n_1(\sqrt{2} U)\rvert -2\log \lvert \L^n_1(U)\rvert \bigr)\sim (2-2^{\beta/2})AU^\beta\Den^{1/2-\beta/2}\to\infty,
	\end{equation}
	which allows us to construct a test with asymptotic power converging to $1$. 
	
	In reality, of course, $V$ may not be a Lévy process (but we still know how to expand  \eqref{eq:L1} in this case thanks to the results of the previous sections) and we do not have access to $\Delta^n_j V$ but only to $\Delta^n_j \wh V_{t,T}(u)$ (or $\Delta^n_j \wh V_{t,T,T'}(u)$). As a result, we have to consider, for each $t$, a feasible version of \eqref{eq:L1} given by 
	\begin{equation}\label{eq:L1-feas} 
		\wh	\L^n_1(u,U)_{t,T}= \frac1{k_n }   \sum_{i=1}^{k_n} e^{iU_n \Delta^n_i\wh V_{t,T}(u)},\quad u,U>0. 
	\end{equation}
	Since $\wh	\L^n_1(u,U)_{t,T}$ contains  a contribution from the option observation errors, we cannot simply base our test on a feasible version of \eqref{eq:aux1} and \eqref{eq:aux2}. Instead, we consider a statistic similar to \eqref{eq:L1-feas} but computed at one half of the original frequency, that is,
	\begin{equation}\label{eq:L2-feas} 
		\wh	\L^n_2(u,U)_{t,T}= \frac2{k_n }   \sum_{i=1}^{\lfloor k_n/2\rfloor} e^{iU_n (\Delta^n_{2i-1} \wh V_{t,T}(u)+\Delta^n_{2i} \wh V_{t,T}(u))},\quad u,U>0, 
	\end{equation}
	and form the ratio $\wh	\L^n_2(u,U)_{t,T}/\wh	\L^n_1(u,U)_{t,T}$ first. As we will see below, this effectively removes the   contribution of the noise to \eqref{eq:L1-feas}. 
	
	To operationalize this idea, we consider for $j=1,2$ local estimators of the diffusive volatility of $V$ given by
	\begin{equation}\label{eq:Cnhat} 
		\wh \C^n_j(u,U)_{t,T}=\ov \C^n_j(u,U)_{t,T}-\wt \C^n_j(u,U)_{t,T},
	\end{equation}
where
\begin{equation} \ov \C^n_j(u,U)_{t,T}=-\frac{2}{U^2} \log \lvert \wh\L^n_j(u,U)_{t,T}\rvert,~ \wt \C^n_j(u,U)_{t,T}=\frac{2^{\bone_{\{j=2\}}}}{U^2k_n} \bigl(\lvert \wh \L^n_j(u,U)_{t,T}\rvert^2-1\bigr),\end{equation}
with $\ov \C^n_j(u,U)_{t,T}$ being the main estimator and $\wt \C^n_j(u,U)_{t,T}$ being a bias correction term. As $\ov \C^n_2(u,U)_{t,T}-\ov \C^n_1(u,U)_{t,T}=-2U^{-2}\log \lvert \wh\L^n_2(u,U)_{t,T}/\wh\L^n_1(u,U)_{t,T}\rvert$, taking   differences  such as $\wh \C^n_2(u,U)_{t,T}-\wh \C^n_1(u,U)_{t,T}$ mitigates the effect of noise due to the option observation errors.
	Our test statistic based on a single time-to-maturity $T$ is now given by
	\begin{equation}\label{eq:test} \begin{split}
			R^n_T(u,U)&=\sqrt{\frac{k_n\lvert \calt\rvert}{2\wh\avar^{n}_{T}(u,U)}}\frac{1}{\lvert\calt\rvert}\sum_{t\in\calt}\Bigl(\bigl[\wh \C^n_2(u,\sqrt{2}U)_{t,T}-\wh \C^n_1(u,\sqrt{2}U)_{t,T}\bigr]\\
			&\quad-  \bigl[\wh \C^n_2(u,U)_{t,T}-\wh \C^n_1(u,U)_{t,T}\bigr]\Bigr).
		\end{split}
	\end{equation}
Here,  the estimator $\wh\avar^{n}_{T}(u,U)$ of the asymptotic variance is defined by  
	\begin{equation}\label{eq:Avar} 
		\wh\avar^{n}_{T}(u,U)=\frac{1}{\lvert\calt\rvert} \sum_{t\in\calt}\caly^n_{t,T}(u,U)'\avar^{n}_{t,T}(u,U) \caly^n_{t,T}(u,U),\\
\end{equation}
where
\begin{equation}\label{eq:Y}\begin{split}
			\caly^n_{t,T}(u,U)&=\begin{pmatrix} -2\wt\caly^n_{t,T}(u,U)/U^2\\ \wt\caly^n_{t,T}(u,\sqrt{2}U)/U^2,\end{pmatrix}\\
			\wt\caly^n_{t,T}(u,U)&=\biggl(-\frac{\Re(\wh	\L^n_1(u,U)_{t,T})}{\lvert 	\wh	\L^n_1(u,U)_{t,T}\rvert^2},\frac{\Im(\wh	\L^n_1(u,U)_{t,T})}{\lvert 	\wh	\L^n_1(u,U)_{t,T}\rvert^2},\\
			&\qquad\qquad\qquad\qquad\qquad\frac{\Re(\wh	\L^n_2(u,U)_{t,T})}{\lvert 	\wh	\L^n_2(u,U)_{t,T}\rvert^2},-\frac{\Im(\wh	\L^n_2(u,U)_{t,T})}{\lvert 	\wh	\L^n_2(u,U)_{t,T}\rvert^2}\biggr)'
\end{split}\end{equation}
and
\begin{equation}\label{eq:Avar-2}\begin{split}
					\avar^{n}_{t,T}(u,U)&= \avar^{n,0}_{t,T}(u,U)+ \avar^{n,1}_{t,T}(u,U), \\
		\avar^{n,0}_{t,T}(u,U)&=\frac{2}{k_n }\sum_{i=1}^{\lfloor k_n/2\rfloor} \ov \calz^n_i(u,U)_{t,T}\ov \calz^n_i(u,U)'_{t,T},\\
	\avar^{n,1}_{t,T}(u,U)&=\frac{2}{k_n }\sum_{i=1}^{\lfloor k_n/2\rfloor}\bigl( \ov \calz^n_{i-1}(u,U)_{t,T}\ov \calz^n_i(u,U)'_{t,T}\\
	&\quad+\frac{2}{k_n } \sum_{i=1}^{\lfloor k_n/2\rfloor} \ov \calz^n_{i}(u,U)_{t,T}\ov \calz^n_{i-1}(u,U)'_{t,T}
\end{split}\end{equation}
with
	\begin{equation}\label{eq:Z}\begin{split}
			\wt \calz^n_i(u,U)_{t,T}&=\begin{pmatrix} \tfrac12 (\cos(U_n\Delta^n_{2i-1}\wh V_{t,T}(u))+\cos(U_n\Delta^n_{2i}\wh V_{t,T}(u)))\\ \tfrac12(\sin(U_n\Delta^n_{2i-1}\wh V_{t,T}(u))+\sin(U_n\Delta^n_{2i}\wh V_{t,T}(u)))\\ \cos(U_n(\Delta^n_{2i-1}\wh V_{t,T}(u)+\Delta^n_{2i}\wh V_{t,T}(u)))\\ \sin(U_n(\Delta^n_{2i-1}\wh V_{t,T}(u)+\Delta^n_{2i}\wh V_{t,T}(u)))\end{pmatrix},\\
			\calz^n_i(u,U)_{t,T} &= \begin{pmatrix} \wt \calz^n_i(u,U)_{t,T}\\ \wt \calz^n_i(u,\sqrt{2}U)_{t,T}\end{pmatrix},\\
			\ov \calz^n_i(u,U)_{t,T} &= \calz^n_i(u,U)_{t,T}- \frac{2}{k_n}\sum_{\ell=1}^{\lfloor k_n/2\rfloor} \calz^n_\ell(u,U)_{t,T}.
	\end{split}\end{equation}
For any of the quantities introduced in \eqref{eq:L1-feas}--\eqref{eq:Z} of the form $X_{t,T}$ or $X_T$, we define $X_{t,T,T'}$ or $X_{T,T'}$ as the same quantity but with $\Delta^n_{i} \wh V_{t,T}(u)$ replaced by $\Delta^n_{i} \wh V_{t,T,T'}(u)$ everywhere in the definition.

Next, we formally introduce the null and alternative hypotheses from \eqref{eq:H}, which are stated in terms of the coefficients of the canonical decomposition of $V_t=F(\si^2_t)$ under $\P$:
\begin{equation}\label{eq:V-dec}\begin{split}
	V_t	&=V_0 + \int_0^T \al^V_s ds +  \int_0^t   \si^V_s d  W^\P_s +\int_0^t\int_{\R} z\bone_{\{\lvert z\rvert\leq 1\}}(\mu^V-\nu^V)(ds,dz)\\
		&\quad + \int_0^t\int_\R z\bone_{\{\lvert z\rvert>1\}}\mu^V(ds,dz),
\end{split}\end{equation}
where $W^\P$ is an $\F$-Brownian motion under $\P$, $\mu^V$ is the jump measure of $V$ whose $\F$-compensator under $\P$ is assumed to have the form $\nu^V(dt,dz)=F^V_t(dz)dt$ for some transition kernel $F^V_t(\om;dz)$ from $\Om\times[0,\infty)$, equipped with the predictable $\si$-field, into $\R$.    Of course, if the density process $(\frac{d\Q}{d\P}|_{\calf_t})_{t\geq0}$ is sufficiently regular, It\^o's formula and Girsanov's theorem imply that conditions imposed on $\si$ under $\Q$ can be translated into  properties of $V$ under $\P$. But to keep things simple, below we impose  conditions on $x$ (including $\si$) under $\Q$ and on $V$ under $\P$ separately. Let us also stress  that since $\P$ and $\Q$ are equivalent, both the diffusive volatility $\si^V$ and the degree of jump activity of $V$ are measure-independent. 

\settheoremtag{H$_0$}
\begin{Assumption}\label{ass:H0} We have $\int_0^\tau \int_\R (1\wedge \lvert z\rvert)F^V_t(dz) dt<\infty$ a.s., where $\tau$ is the constant in $k_n\lvert\calt\rvert\sim \tau/\Den$.
\end{Assumption}

\settheoremtag{H$_1$}
\begin{Assumption}\label{ass:H1} We have $F^V_t(dz)=F'_t(dz)+F''_t(dz)$, where
\begin{equation}\label{eq:Ft}
F'_t(dz)	=\frac{  (\log \lvert z\rvert^{-1})^{\beta'_t}}{\lvert z\rvert^{1+\beta_t}}(c^+_t\bone_{\{0<z\leq\eps^+_t\}} + c^-_t\bone_{\{-\eps^-_t\leq z<0\}})dz
\end{equation}
and the following holds for all $t\in[0,\tau]$ and some locally bounded process $C_t\geq1$:
\begin{enumerate}
	\item The processes $\eps^+$, $\eps^-$, $c^+$ and $c^-$ are predictable and c\`agl\`ad and satisfy
\begin{equation}\label{eq:cond1} 
	C_t^{-1}\leq \eps^+_t, \eps_t^-\leq 1, \quad 0\leq c^+_t,c_t^-  \leq C_t,  \quad \Leb(A_\tau)>0,
\end{equation}
where $\Leb$ denotes the Lebesgue measure and $A_\tau= \{t\in[0,\tau]: c^+_t + c^-_t > 0\}$.
	\item  Both $\beta$ and $\beta'$ are  c\`agl\`ad predictable processes such that
	\begin{equation}\label{eq:cond2} 
C_t^{-1}\leq \beta_t\leq r,\quad \lvert \beta'_t\rvert \leq C_t,\quad \beta^\ast=\max_{t\in[0,\tau]} \beta_t\bone_{A_\tau}(t) >1\quad\text{a.s.},
	\end{equation}
where $r$ is the number from Assumption~\ref{ass:main}.
\item $F''_t(dz)=F''_t(\om,dz)$ is a signed transition kernel from $\Om\times[0,\infty)$, equipped with the predictable $\si$-field into $\R$, such that 
\begin{equation}\label{eq:cond3} 
	\int_\R (1\wedge \lvert z\rvert^{\rho\beta_t}) \lvert F''_t\rvert(dz) \leq C_t
\end{equation}
for some $\rho\in(0,1)$.
\end{enumerate}
\end{Assumption}

 Assumption~\ref{ass:H0} is equivalent to $V$ having only finite variation jumps on $[0,\tau]$. Assumption~\ref{ass:H1} is essentially Assumption 3 in \cite{AJ11}, adapted to the case where $\beta$, the instantaneous Blumenthal--Getoor index of $V$, exceeds $1$ on a set of positive Lebesgue measure. Apart from the assumption $\beta^\ast>1$, which leads to infinite variation volatility jumps (at least locally on a set of positive measure),  the other regularity conditions in Assumption~\ref{ass:H1} are very mild and satisfied by virtually all parametric jump models used in practice; see \cite{AJ11}.
	
	\begin{theorem}\label{thm:test}
		Suppose that Assumption~\ref{ass:main-1} is satisfied under $\Q$ and that we have the rate conditions
			\begin{equation}\label{eq:rates}
				\frac{\delta}{\sqrt{T}}=o(\Delta_n^{1+\iota'}),\quad k_n^2\Den\to0,\quad k_n^{2+\iota''}\Den\to\infty,\quad k_n\lvert \calt\rvert \sim \tau/\Den,\quad \sqrt{\Den}\asymp T
			\end{equation}
		for some $\tau,\iota'>0$ and all $\iota''>0$.
		Further suppose that $(\si_t)_{t\geq0}$ is locally bounded away from zero and that $V=F(\si^2_t)$, for some $C^2$-function $F$, is an It\^o semimartingale of the form \eqref{eq:V-dec} under $\P$ with two hidden layers. Further assume that Assumption~\ref{ass:C} is in force. Then,
	under Assumption~\ref{ass:H0}, we have
	\begin{equation}\label{eq:null} 
		R^n_T(u,U) \limst N(0,1),\quad R^n_{T,T'}(u,U) \limst N(0,1),
	\end{equation}
while under Assumption~\ref{ass:H1}, we have
	\begin{equation}\label{eq:alt} 
	R^n_T(u,U) \limp  +\infty,\quad R^n_{T,T'}(u,U) \limp  +\infty.
\end{equation}
	\end{theorem}

Note that Assumption~\ref{ass:main-1} implies that the jumps of the log-price process $x$ are of finite variation. It is well known from \cite{JR14} that the minimax rate at which jumps and volatility can be separated from high-frequency return data  deteriorates as the jump activity of $x$ increases. As a similar result should hold for volatility estimation from option price data, an extension of Theorem~\ref{thm:test} to cover the case of infinite variation price jumps seems hopeless, at least in the nonparametric case. That said, even in the presence of infinite variation price jumps, we do believe that the test in Theorem~\ref{thm:test} retains its properties if used to decide whether $F(\si^2_t -2 u^{-2}T\vp^\Q_t(u_T))$ has finite or infinite variation jumps, where $\vp^\Q_t(u)$ is the local characteristic exponent of the small jumps of $x$ under $\Q$. We leave such an extension to future research.

\section{Numerical Experiments}\label{sec:num}
In this section we illustrate with simulated and real data the performance of the test for presence of infinite variation volatility jumps. 
\subsection{Monte Carlo Study}

We evaluate the finite sample performance of the test on simulated data from the following diffusive volatility model:
\begin{equation}\label{eq:v}
V_t = \exp\left(-3.94+\int_0^te^{-8(t-s)}dL_s\right),
\end{equation}
where $L_t$ is a L\'{e}vy process given by a standard Brownian motion under the null hypothesis and  by the sum of a Brownian motion with variance $0.4$ and a tempered stable jump process with L\'{e}vy density  
\begin{equation}
\nu^{\textrm{ts}}(x) = c\frac{e^{-\lambda|x|}}{|x|^{1+\alpha}},\quad\alpha = 1.5,\quad\lambda = 3, \quad c = 0.3\frac{\lambda^{2-\alpha}}{\Gamma(2-\alpha)},
\end{equation}
under the alternative hypothesis. In both considered cases for $L_t$, we have $\mathbb{E}[L_t] = 0$ and $\mathbb{E}[L_t^2] = t$. Under the alternative hypothesis, the Blumenthal--Getoor index of the jumps (captured by the parameter $\alpha$ above) is $1.5$.  The scale parameter in equation (\ref{eq:v}) is chosen such that the average value of $V_t$ in the model matches roughly  the one in our real application (our unit of time is one year). 

The above model for volatility is an exponential L\'{e}vy-driven Ornstein–Uhlenbeck process. Such parametric specification has been found to provide good fit empirically to volatility, see e.g., \cite{andersen2002empirical} and \cite{chernov2003alternative}. Unfortunately, closed-form solutions for option prices for such volatility models do not exist. For this reason, in our Monte Carlo experiment, we work with 
\begin{equation}
\widehat{V}_{t,T,T'}(u,U)= \wh V_{t,T,T'} = V_t + \sigma_{\epsilon}\times \epsilon_t^{v},
\end{equation} 
where $(\epsilon_t^{v})_{t\geq 1}$ is an i.i.d. sequence of standard normal random variables that are independent of $V_t$. According to our expansion results, this should hold asymptotically. We calibrate the value of $\sigma_{\epsilon}$ to match the noise in estimates of $\widehat{V}_{t,T,T'}$ in our application to real data. We do this on the basis of sample autocovariance estimates for lags 0 and 1 of the high-frequency increments of $\widehat{V}_{t,T,T'}$. This leads to $\sigma_{\epsilon} = \sqrt{0.18}\times \sqrt{\textrm{Var}(\Delta_i^n V_t))}$. 

The sampling frequency and the choice of our local block size matches exactly that in the real data application. Mainly, we sample every five minutes during regular hours of the trading day. This means $\Delta_n = (1/252)\times (1/77)$ (we use business day convention in which one year has 252 business days). We set $k_n = 77$ which results in a block of length one day and let $\mathcal{T}$ correspond to $20$ consecutive days. Finally, we set $U$ in the following data-adaptive way: 
\[\widehat{U}_t = \sqrt{\Delta_n}\times \frac{\sqrt{-2\log(0.75)}}{\sqrt{2 BV^n_t/k_n}},~~BV_t^n = \frac{\pi}{2}\sum_{i=2}^{k_n}|\Delta_{i-1}^n\widehat{V}_{t,T,T'}(u,U)||\Delta_i^n\widehat{V}_{t,T,T'}(u,U)|. \]

In Figure~\ref{fig:test_mc}, we plot the realization of the test statistic $R^n_{t,T,T'}(u,U)$ in the two simulation scenarios. Consistent with the asymptotic theory derived in the previous section, $R^n_{t,T,T'}(u,U)$ appears centered at zero under the null hypothesis and it appears centered above zero under the alternative hypothesis. The test has good finite sample properties. Indeed, the rejection rate of a one-sided test based on $R^n_{t,T,T'}(u,U)$ of size $5\%$ is $4.5\%$ under the null hypothesis and it is $55.5\%$ under the alternative hypothesis. In other words, the test can separate relatively well the null from the alternative hypothesis in finite samples. 

\begin{figure}[htbp]
\begin{center}
\includegraphics[width=140mm,height=70mm]{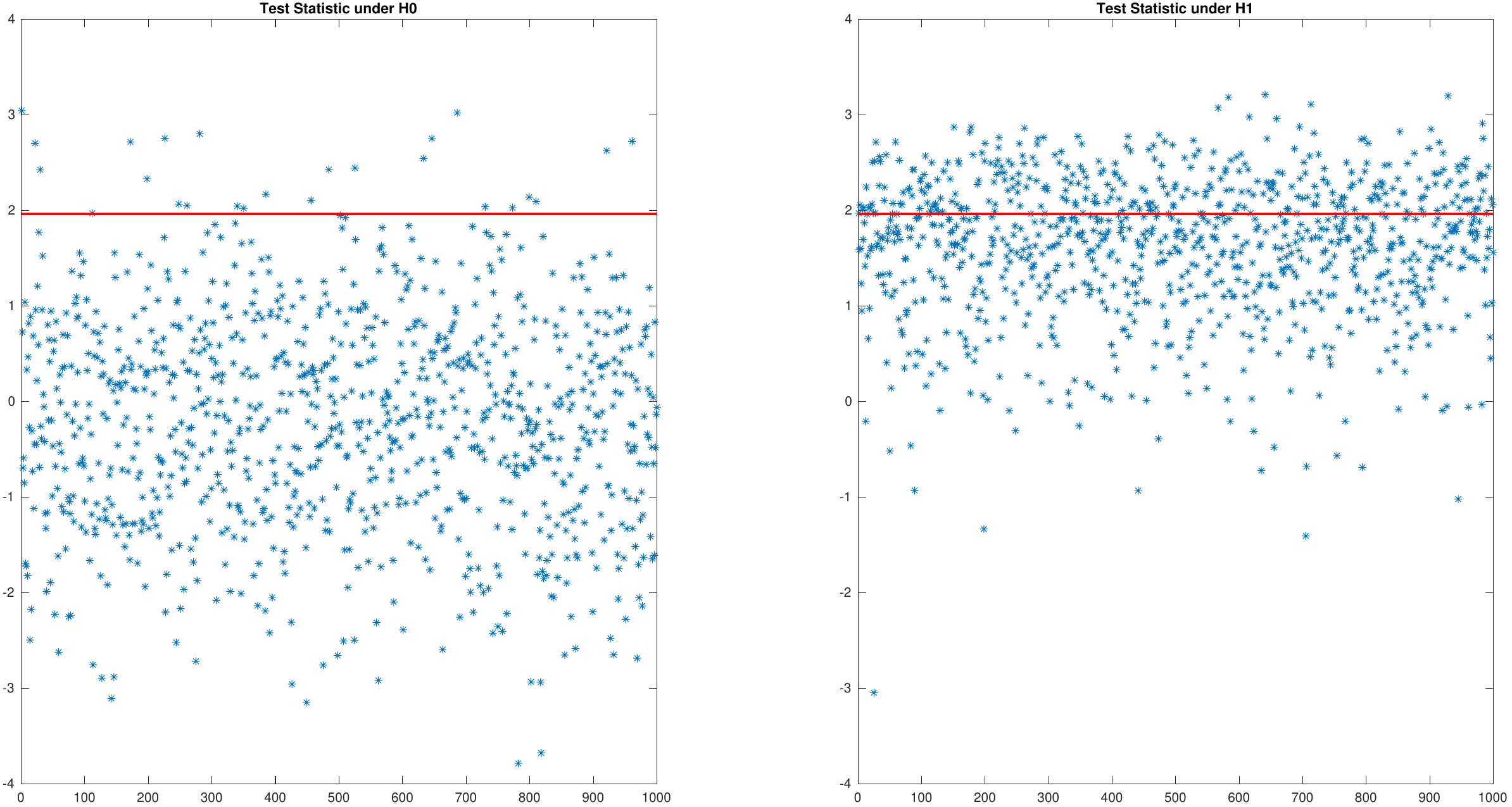}
\end{center}
\vspace{-4ex}
\caption{Values of $R^n_{t,T,T'}(u,U)$ in the Monte Carlo. Red line corresponds to $5\%$ critical value for a one-sided test of the null hypothesis.}
\label{fig:test_mc}
 \end{figure}

\subsection{Empirical Illustration}

In our empirical application, we study the activity of market volatility jumps using short-dated options written on the S\&P 500 index. Our sample covers the period 2016--2020 and we sample option prices during the trading days at five minute sampling frequency. The data is cleaned exactly as in \cite{CT23_b} and we refer to that paper for the details. Our choice of $k_n$, $\mathcal{T}$ and $U$ is exactly as in the Monte Carlo study, while the value of $u$ for estimating the spot volatility from the option data is set exactly as in \cite{CT23_b}.

In Figure~\ref{fig:test_emp}, we plot the time series of the test statistic $R^n_{t,T,T'}(u,U)$. As seen from the figure, the statistic is on average positive, which is consistent with the alternative hypothesis of infinite variation volatility jumps. The empirical rejection rate for a one-sided test of size $5\%$ of the null hypothesis is 23.3\%. This value is above the nominal size of the test by a nontrivial amount and thus provides evidence for the presence of infinite variation jumps in volatility.
	
\begin{figure}[htbp]
\begin{center}
\includegraphics[width=70mm,height=50mm]{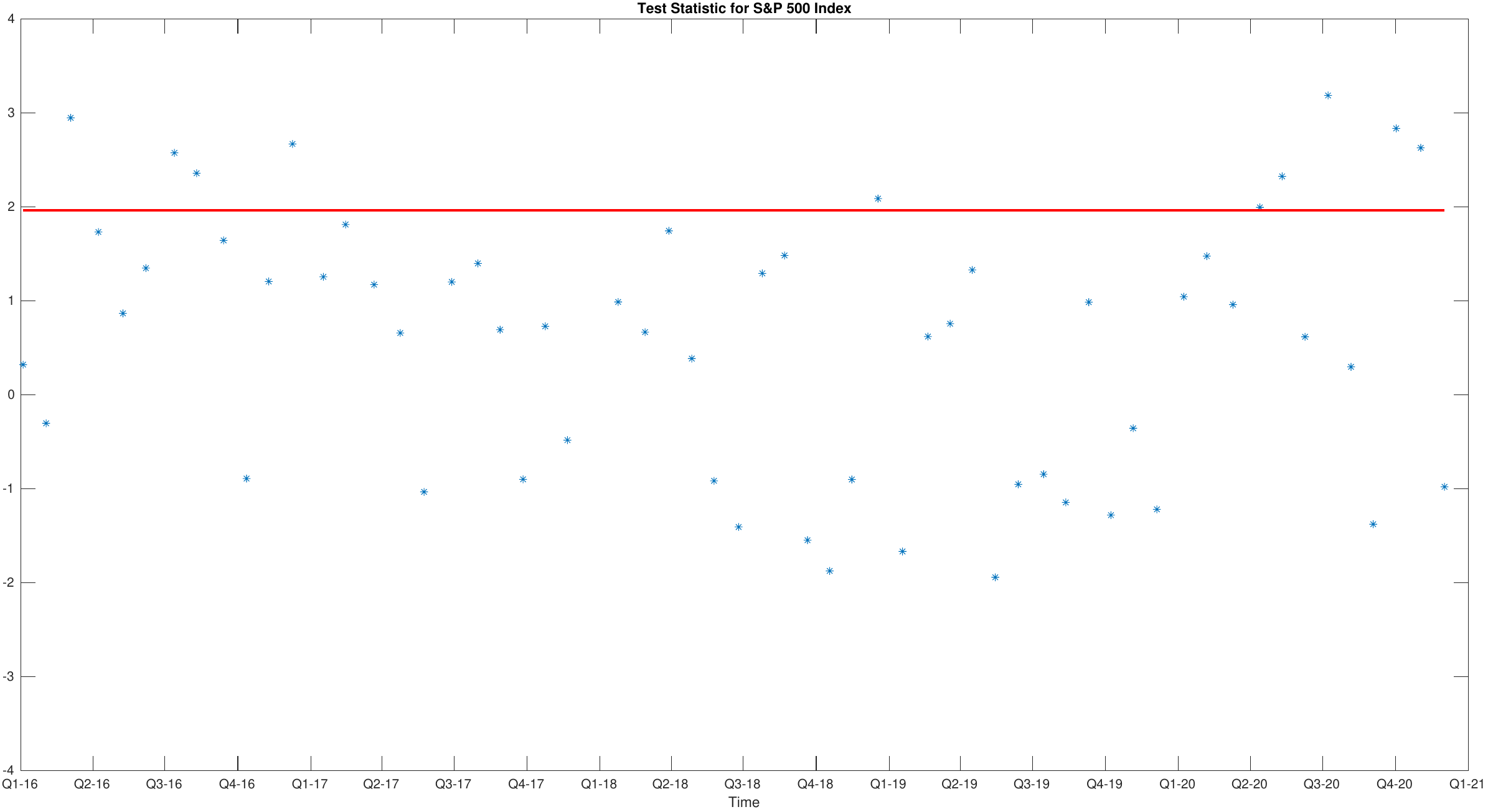}
\end{center}
\vspace{-4ex}
\caption{Values of $R^n_{t,T,T'}(u,U)$ for the S\&P 500 index. Red line corresponds to $5\%$ critical value for a one-sided test of the null hypothesis.}
\label{fig:test_emp}
 \end{figure}

\section{Proofs for Sections~\ref{sec:expansion-2} and \ref{sec:app}}\label{sec:proofs}

	In a first step, we show a technical version of Theorem~\ref{thm:incr-L}, in which $\Delta^n_i \La_{t,T}(t^n_i,u)$ is broken into its constituent pieces. 
	To this end, 
	we let  (recall that $\nu_t(dz)=\la(t,z)F(dz)$)
	\begin{equation}\label{eq:chi}
		\begin{split}
			\chi^{(1)}_t(u)	&=\int_{\R^{d'}}  (e^{iu\ga(t,z)}-1)(\si^\ga(t,z)+\ga^\si(t,z))\nu_t(dz),\\
			\chi^{(2)}_{t}(u)	&=  \int_{\R^{d'}}  \int_{\R^{d'}}   (e^{iu\ga(t,z)}-1)(e^{iu\ga(t,z')}-1)  \ga^\ga(t,z,z')\nu_t(dz)\nu_t(dz'),\\
			\chi^{(3)}_t(u)&= \int_{\R^{d'}} (e^{iu\ga(t,z)}-1-iu\ga(t,z))\si^{\la,(1)}(t,z)F(dz),\\
			\chi^{(4)}_t(u)&=\int_{\R^{d'}\times\R^{d'}\times\R} (e^{iu\ga(t,z)}-1-iu\ga(t,z))(e^{iu\ga(t,z')}-1)\\
			&\qquad\qquad\qquad\qquad\qquad\times\ga^\la(t,z,z',v')\bone_{\{0\leq v'\leq \la(t,z')\}}F(dz)F(dz')dv'
		\end{split}
	\end{equation}
	and 
	\begin{align*}
		\xi^{(1)}_t(s,u)&=\int_{\R^{d'}} (e^{iu\ga(s,z)}-1)\ga(t,z)\nu_s(dz),\\
		\xi^{(2)}_t(s,u)&=\int_{\R^{d'}}  (e^{iu\ga(s,z)}-1)(e^{iu\ga(s,z')}-1)\ga^\ga(t,z,z')\nu_s(dz) \nu_s(dz'),\\
		\xi^{(3)}_t(s,u)	&=\int_{\R^{d'}\times\R^{d'}}  (\ga^\ga(s,z,z')+\ga^\ga(s,z',z))\\
		&\qquad\qquad\qquad\qquad \qquad\times(e^{iu\ga(s,z)}-1)e^{iu\ga(s,z')}\ga(t,z')\nu_s(dz)\nu_s(dz'),\\
		\xi^{(4)}_t(s,u)&=\int_{\R^{d'}}  (e^{iu\ga(s,z)}-1)(\si^\ga(t,z)+\ga^\si(t,z))\nu_s(dz),\\
		\xi^{(5)}_t(s,u)&=\int_{\R^{d'}}  e^{iu\ga(s,z)}  (\si^\ga(s,z)+\ga^\si(s,z))\ga(t,z)\nu_s(dz),\\
		\xi^{(6)}_t(s,u)&=\int_{\R^{d'}} (e^{iu\ga(s,z)}-1)\si^{\la,(1)}(s,z)\ga(t,z)F(dz),\\
		\xi^{(7)}_t(s,u)&=  \int_{\R^{d'}\times\R^{d'}\times\R}(e^{iu\ga(s,z)}-1)  (e^{iu\ga(s,z')}-1)\\
		&\qquad\qquad\qquad\qquad \qquad\times	\ga^\la(s,z,z',v')\bone_{\{0\leq v'\leq \la(s,z')\}}\ga(t,z)F(dz)F(dz')dv',\\
		\xi^{(8)}_t(s,u)&=  \int_{\R^{d'}\times\R^{d'}\times\R}(e^{iu\ga(s,z)}-1-iu\ga(s,z))e^{iu\ga(s,z')}\\
		&\qquad\qquad\qquad\qquad \qquad\times \ga(t,z') \ga^\la(s,z,z',v')\bone_{\{0\leq v'\leq \la(s,z')\}}F(dz)F(dz')dv'.
	\end{align*}
	Note that
	\begin{align*}
		\psi_t(t,u)&=-\frac12u^2\si_t\chi^{(1)}_t(u)	, \qquad\ov\psi_t(t,u)= \frac12iu\chi^{(2)}_{t}(u),\\
		 \wt\psi_t(t,u)&=\frac12iu\si_t\chi^{(3)}_t(u) +\frac12\chi^{(4)}_t(u).
	\end{align*} 
	We also use  the notation
	\begin{equation}\label{eq:incr} \begin{split}
			&\Delta^n_i	F_{t}=F_{t^n_{i-1}}-F_{t^n_{i}},~~	\Delta^n_i F(t)=F(t^n_{i-1})-F(t^n_i)	, ~~ \Delta^n_i F_t(u)=F_{t^n_{i-1}}(u)-F_{t^n_i}(u),\\
			&\Delta^n_i F_{t}(s,u)=F_{t^n_{i-1}}(s,u)-F_{t^n_{i}}(s,u),	~~ \Delta^n_i	F(t,z)=F(t^n_{i-1},z)-F({t^n_{i}},z) 
		\end{split}
	\end{equation}  
	for a general function $F_t$, $F(t)$, $F_t(u)$, $F_t(s,u)$ or $F(t,z)$ and define
	\begin{equation}\label{eq:vp-var} 
		\vp_t(u)=\vp_t(t,u).
	\end{equation}

	\begin{proposition}\label{prop:incr}
		Under Assumption~\ref{ass:main}, there is a finite number of It\^o semimartingales $v^{(k)}_t$ and $C^{(k)}_t(u)$, $k=1,\dots, K$,  such that $C^{(k)}_t(u)$ is uniformly bounded in $u$ on compacts and $\lvert \Delta^n_i v^{(k)}_t\rvert+\lvert \Delta^n_i C^{(k)}_t(u) \rvert= O^\uc (\sqrt{\Den})$, uniformly in $i$, and
		\begin{equation}\label{eq:incr2} \begin{split}
				\Delta^n_i \call_{t,T}(u)&= \Theta_{t^n_i,T^n_{i-1}}(u_{T^n_{i-1}})\bigl(\call^{n,i}_{t,T}(u) + \call^{\prime n,i}_{t,T}(u)\bigr) + \sum_{k=1}^K \Delta^n_i v^{(k)}_t C^{(k)}_t(u) T^n_{i-1}\\
				&\quad +O^\uc(T^{N/2}) + o^\uc( \sqrt{\Den}T + \Den/\sqrt{T})	\end{split}	\end{equation}
		for all $i$ such that $i\Den=O(T)$. In \eqref{eq:incr2}, $\Theta_{t,T}(u)$ is defined in \eqref{eq:phi}
		and $\call^{n,i}_{t,T}(u)=\sum_{j=1}^5 \call^{n,i,j}_{t,T}(u)$ where
		\begin{align*} 
			\call^{n,i,1}_{t,T}(u)&=  iu{\textstyle\sqrt{T^n_{i-1}}} \Bigl(\Delta^n_i \al_t+ \bigl(1+\tfrac12 (T^n_{i-1})^2 \chi^{(4)}_{t^n_i}(u_{T^n_{i-1}})\bigr) \Delta^n_i \xi^{(1)}_t(t^n_i, u_{T^n_{i-1}}) \\
			&\quad +\frac12 T^n_{i-1} \chi^{(3)}_{t^n_i}(u_{T^n_{i-1}})\Delta^n_i \si_t  \\
			&\quad  +\frac12 T^n_{i-1} \bigl(\Delta^n_i\xi^{(2)}_t(t^n_i,u_{T^n_{i-1}})+\Delta^n_i\xi^{(7)}_t(t^n_i,u_{T^n_{i-1}})+\Delta^n_i\xi^{(8)}_t(t^n_i,u_{T^n_{i-1}})\bigr) \Bigr),\\
			\call^{n,i,2}_{t,T}(u)&= - \frac12 u^2  \Bigl(\bigl(2\si_{t^n_i}(1+\tfrac12(T^n_{i-1})^2\chi^{(4)}_{t^n_i}(u_{T^n_{i-1}}))  + T^n_{i-1}\chi^{(1)}_{t^n_i}(u_{T^n_{i-1}})\bigr)\Delta^n_i \si_t  \\
			&\quad +  (T^n_{i-1})^2 \bigl(\chi^{(2)}_{t^n_i}(u_{T^n_{i-1}})+\si_{t^n_i}\chi^{(3)}_{t^n_i}(u_{T^n_{i-1}})\bigr)\Delta^n_i  \xi^{(1)}_t(t^n_i,u_{T^n_{i-1}})   \\
			& \quad   +T^n_{i-1}\Delta^n_i \xi^{(3)}_t(t^n_i,u_{T^n_{i-1}}) +T^n_{i-1}\si_{t^n_i}(\Delta^n_i\xi^{(4)}_t(t^n_i,u_{T^n_{i-1}})+\Delta^n_i\xi^{(6)}_t(t^n_i,u_{T^n_{i-1}}))\Bigr),\\
			\call^{n,i,3}_{t,T}(u)&= - \frac12iu^3{\textstyle\sqrt{T^n_{i-1}}}\si_{t^n_i} \Bigl(\bigl(2\si^\si_{t^n_i}+T^n_{i-1}\chi^{(2)}_{t^n_i}(u_{T^n_{i-1}})+T^n_{i-1}\chi^{(3)}_{t^n_i}(u_{T^n_{i-1}})\si_{t^n_i}\bigr)\Delta^n_i \si_t  \\
			& \quad  +\si_{t^n_i}\Delta^n_i \si^\si_t + T^n_{i-1}\chi^{(1)}_{t^n_i}(u_{T^n_{i-1}}) \Delta^n_i  \xi^{(1)}_t(t^n_i,u_{T^n_{i-1}})   + \Delta^n_i \xi^{(5)}_t(t^n_i, u_{T^n_{i-1}}) \Bigr), \\
			\call^{n,i,4}_{t,T}(u)&  = \frac12 u^4 T^n_{i-1} \si^2_{t^n_i}\Bigl( \chi^{(1)}_{t^n_i}(u_{T^n_{i-1}})\Delta^n_i \si_t + \si^\si_{t^n_i}\Delta^n_i  \xi^{(1)}_t(t^n_i, u_{T^n_{i-1}}) \Bigr),  \\
			\call^{n,i,5}_{t,T}(u)& =  \frac12 i u^5{\textstyle\sqrt{T^n_{i-1}}}\si_{t^n_i}^3\si^\si_{t^n_i}\Delta^n_i \si_t, \\
			\call^{\prime n,i}_{t,T}(u)&=  1-\frac{\Theta_{t^n_i, T^n_i}(u_{T^n_{i}})}{\Theta_{t^n_i,T^n_{i-1}}(u_{T^n_{i-1}})}  +\frac14iu^3\si^2_{t^n_i}\si^\si_{t^n_i}\frac{\Den}{\sqrt{T^n_{i-1}}}. 
		\end{align*}
		Furthermore, 
		the $O^\uc$- and $o^\uc$-terms in \eqref{eq:incr2} are also uniform in  $i$.
	\end{proposition}

The proposition is proved in the appendix.

		\begin{proof}[Proof of Theorem~\ref{thm:incr-L}]
		We start by writing
		\begin{align*}
			\Delta^n_i \La_{t,T}(t^n_i,u_{T^n_{i-1}})&=\La_{t^n_{i-1},T^n_{i-1}}(t^n_i,u_{T^n_{i-1}}) - \La_{t^n_{i},T^n_{i-1}}(t^n_i,u_{T^n_{i-1}}) \\
			&\quad+\La_{t^n_{i},T^n_{i-1}}(t^n_i,u_{T^n_{i-1}}) -\La_{t^n_{i},T^n_{i}}(t^n_i,u_{T^n_{i}})	 \\
			&=D^{n,i,1}_{t,T}(u)+D^{n,i,2}_{t,T}(u).
		\end{align*}
		By a first-order approximation, we have as a direct consequence of \eqref{eq:La} that 
\begin{equation}\label{eq:help-5} \begin{split}
		D^{n,i,1}_{t,T}(u)	&=\La_{t^n_i,T^n_{i-1}}(t^n_i,u_{T^n_{i-1}})\bigg( iu{\textstyle\sqrt{T^n_{i-1}}}\Delta^n_i\al_t-u^2\si_{t^n_i}\Delta^n_i\si_t + T^n_{i-1}\Delta^n_i\vp_t(t^n_i,u_{T^n_{i-1}}) \\
	&\qquad-\frac12iu^3{\textstyle\sqrt{T^n_{i-1}}}(2\si_{t^n_i}\si^\si_{t^n_i}\Delta^n_i\si_t+\si^2_{t^n_i}\Delta^n_i \si^\si_t)\\
	&\qquad+ (T^n_{i-1})^2(\Delta^n_i\psi_t(t^n_i,u_{T^n_{i-1}})+\Delta^n_i\ov\psi_t(t^n_i,u_{T^n_{i-1}})+\Delta^n_i\wt\psi_t(t^n_i,u_{T^n_{i-1}}))\biggr) \\
	&\quad+O^\uc(\Den). 
\end{split}\raisetag{-3.5\baselineskip}\end{equation} 
		By another first-order  approximation of $\La_{t^n_i,T^n_{i-1}}(t^n_i,u_{T^n_{i-1}})$, we have that
		\begin{equation}\label{eq:help-4} 
			\La_{t^n_i,T^n_{i-1}}(t^n_i,u_{T^n_{i-1}})=\Theta_{t^n_i,T^n_{i-1}}(t^n_i,u_{T^n_{i-1}})(1-\eta_{t^n_i,T^n_{i-1}}(u_{T^n_{i-1}}))+O^\uc(T), 
		\end{equation}
		where $\eta_{t,T}(u)$ is defined in \eqref{eq:eta} and satisfies $\eta_{t^n_i,T^n_{i-1}}(u_{T^n_{i-1}})=O^\uc(\sqrt{T})$. Next, by similar arguments, we have the expansions
		\begin{align*}
			\Delta^n_i \psi_t(t^n_i,u_{T^n_{i-1}}) 
			&=-\frac12 u_{T^n_{i-1}}^2\Bigl(\chi^{(1)}_{t^n_i}(u_{T^n_{i-1}})\Delta^n_i \si_t +\si_{t^n_i}\Delta^n_i \xi^{(4)}_t(t^n_i,u_{T^n_{i-1}})\\
			&\quad+ iu_{T^n_{i-1}}\si_{t^n_i} \Delta^n_i\xi^{(5)}_t(t^n_i,u_{T^n_{i-1}})\Bigr)+O^\uc(\Den/T^2),\\
			\Delta^n_i \ov\psi_t(t^n_i,u_{T^n_{i-1}})	&=\frac12 iu_{T^n_{i-1}}\Bigl( \Delta^n_i \xi^{(2)}_t(t^n_i,u_{T^n_{i-1}}) + iu_{T^n_{i-1}}\Delta \xi^{(3)}_t(t^n_i,u_{T^n_{i-1}}) \Bigr)+O^\uc(\Den/T^2),\\
			\Delta^n_i \wt\psi_t(t^n_i,u_{T^n_{i-1}})	&= \frac12 iu_{T^n_{i-1}}\Bigl(\chi^{(3)}_{t^n_i}(u_{T^n_{i-1}})\Delta^n_i\si_t + \Delta^n_i\xi^{(8)}_t(t^n_i,u_{T^n_{i-1}}) \\
			&\quad+iu_{T^n_{i-1}}\si_{t^n_i}   \Delta^n_i \xi^{(6)}_t(t^n_i,u_{T^n_{i-1}})+ \Delta^n_i \xi^{(7)}_t(t^n_i,u_{T^n_{i-1}})\Bigr)+O^\uc(\Den/T^2).
		\end{align*}
		Combining this with \eqref{eq:help-5} and \eqref{eq:help-4}, we can verify after some  algebraic manipulations that 
		\begin{equation}\label{eq:Dni1} \begin{split}
			D^{n,i,1}_{t,T}(u)	&=\Theta_{t^n_i,T^n_{i-1}}(t^n_i,u_{T^n_{i-1}})\call^{ n,i}_{t,T}(u) +  \text{``$\Delta^n_i v_t C_{t^n_i}(u) T^n_{i-1}$''}\\
			&\quad + o^\uc(\sqrt{\Den}T+\Den/\sqrt{T}),\end{split}
		\end{equation}
		where $\call^{ n,i}_{t,T}(u)$ was defined in Proposition~\ref{prop:incr} and, as in the proof of Proposition~\ref{prop:incr}, the quotation marks indicate a sum of terms of the form $\Delta^n_i v_t C_{t^n_i}(u) T^n_{i-1}$.
		
		Next, as $\Den/T\to0$ by hypothesis,
		\begin{align*}
			&D^{n,i,2}_{t,T}(u)\\
				&\quad=-\La_{t^n_i,T^n_{i-1}}(t^n_i,u_{T^n_{i-1}})\biggl(\frac12 iu\al_{t^n_i} \frac{\Den}{\sqrt{T^n_{i-1}}}-\frac14iu^3\si^2_{t^n_i}\si^\si_{t^n_i}\frac{\Den}{\sqrt{T^n_{i-1}}} +\Bigl(T^n_i\vp_{t^n_i}(u_{T^n_i})\\
			&\qquad-T^n_{i-1}\vp_{t^n_i}(u_{T^n_{i-1}})\Bigr) + \Bigl( (T^n_i)^2(\psi_{t^n_i}(t^n_i,u_{T^n_i})+\ov\psi_{t^n_i}(t^n_i,u_{T^n_i})+\ov\psi_{t^n_i}(t^n_i,u_{T^n_i})) \\
			&\qquad-(T^n_{i-1})^2(\psi_{t^n_i}(t^n_i,u_{T^n_{i-1}})+\ov\psi_{t^n_i}(t^n_i,u_{T^n_{i-1}})+\ov\psi_{t^n_i}(t^n_i,u_{T^n_{i-1}}))\Bigr)\biggr)+ o^\uc(\Den/\sqrt{T})\\
			&\quad=-\La_{t^n_i,T^n_{i-1}}(t^n_i,u_{T^n_{i-1}})\biggl(\frac12 iu\al_{t^n_i} \frac{\Den}{\sqrt{T^n_{i-1}}}-\frac14iu^3\si^2_{t^n_i}\si^\si_{t^n_i}\frac{\Den}{\sqrt{T^n_{i-1}}} \\
			&\qquad+\Bigl(T^n_i\vp_{t^n_i}(u_{T^n_i})-T^n_{i-1}\vp_{t^n_i}(u_{T^n_{i-1}})\Bigr) + o^\uc(\Den/\sqrt{T}).
		\end{align*}
		By \eqref{eq:help-4} and the fact that $T^n_i\vp_{t^n_i}(u_{T^n_i})-T^n_{i-1}\vp_{t^n_i}(u_{T^n_{i-1}})=o^\uc(\Den/T)$, we may exchange $\La_{t^n_i,T^n_{i-1}}(t^n_i,u_{T^n_{i-1}})$ for $\Theta_{t^n_i,T^n_{i-1}}(t^n_i,u_{T^n_{i-1}})$ in the last line (at a cost of $o^\uc(\Den/\sqrt{T})$). This shows that $D^{n,i,2}_{t,T}(u)=\Theta_{t^n_i,T^n_{i-1}}(t^n_i,u_{T^n_{i-1}})\call^{\prime n,i}_{t,T}(u)$, which in conjunction with \eqref{eq:help-5} and Proposition~\ref{prop:incr} completes the proof of Theorem~\ref{thm:incr-L}.
	\end{proof}
	
	\begin{proof}[Proof of Theorem~\ref{thm:incr}]
		%
		Recall the notation introduced at the beginning of Section~\ref{sec:proofs}.
		We first show that
		\begin{equation}\label{eq:incr-si-1}\begin{split}
				\Delta^{n}_i \si^2_{t,T}(u)&= -\frac{2}{u^2} \Re\biggl\{iu  {\textstyle\sqrt{T^n_{i-1}} }\Bigl(\Delta^n_i \xi^{(1)}_t(t^n_i, u_{T^n_{i-1}})+\frac12T^n_{i-1}\chi^{(3)}_{t^n_i}(u_{T^n_{i-1}})\Delta^n_i\si_t\\
				&\quad\quad +\frac12 T^n_{i-1}(\Delta^n_i \xi^{(2)}_t(t^n_i,u_{T^n_{i-1}})+\Delta^n_i \xi^{(7)}_t(t^n_i,u_{T^n_{i-1}})+\Delta^n_i \xi^{(8)}_t(t^n_i,u_{T^n_{i-1}})) \Bigr) \\
				& \quad- \frac12u^2\Bigl((2\si_{t^n_i}  + T^n_{i-1}\chi^{(1)}_{t^n_i}(u_{T^n_{i-1}}))\Delta^n_i \si_t +T^n_{i-1}\Delta^n_i \xi^{(3)}_t(t^n_i,u_{T^n_{i-1}})  \\
				& \quad\quad   +T^n_{i-1}\si_{t^n_i}(\Delta^n_i\xi^{(4)}_t(t^n_i,u_{T^n_{i-1}})+\Delta^n_i\xi^{(6)}_t(t^n_i,u_{T^n_{i-1}}))\Bigr)\\
				&\quad-\frac12  iu^3\textstyle \sqrt{T^n_{i-1}}\si_{t^n_i}   \Delta^n_i \xi^{(5)}_t(t^n_i, u_{T^n_{i-1}}) 
				-\Bigl(T^n_i\vp_{t^n_i}(u_{T^n_i})-T^n_{i-1}\vp_{t^n_i}(u_{T^n_{i-1}})\Bigr)\biggr\}\\
				&\quad + \text{``$\Delta^n_i v_t C_{t^n_i}(u) T^n_{i-1}$''}+O^\uc(T^{N/2})+o^\uc(\sqrt{\Den}T+\Den/\sqrt{T}),
		\end{split}\raisetag{-5\baselineskip}\end{equation}
		where, as before, the quotation marks signify a sum of terms of the form $\Delta^n_i v_t C_{t^n_i}(u) T^n_{i-1}$.
		Once \eqref{eq:incr-si-1} is established, we can
		reverse engineer \eqref{eq:incr-si-1} into \eqref{eq:incr-si-0} by using the expansions derived after \eqref{eq:eta} and noting that (recall the notation \eqref{eq:vp-var})
		\begin{align*}
			&	\Delta^n_i \Phi_{t,T}(t^n_i,u)\\
			&	\quad	=-\frac{2}{u^2}\Re\Bigl(T^n_{i-1}\Delta^n_i \vp_t(t^n_i,u_{T^n_{i-1}})  -\bigl(T^n_i\vp_{t^n_i}(u_{T^n_i})-T^n_{i-1}\vp_{t^n_i}(u_{T^n_{i-1}})\bigr)\Bigr)\\
			&	\quad	=-\frac{2}{u^2}\Re \Bigl(iu{\textstyle{\sqrt{T^n_{i-1}}}}\Delta^n_i \xi^{(1)}_t(t^n_i,u_{T^n_{i-1}})  -(T^n_i\vp_{t^n_i}(u_{T^n_i})-T^n_{i-1}\vp_{t^n_i}(u_{T^n_{i-1}})) \Bigr)+O^\uc(\Den),\\
			&\Delta^n_i \Psi_{t,T}(t^n_i,u)	\\
			&\quad=-\frac{2T^n_{i-1}}{u_{T^n_{i-1}}^2}\Re\Bigl( \Delta^n_i\psi_{t}(t^n_i,u_{T^n_{i-1}})+\Delta^n_i\ov\psi_{t}(t^n_i,u_{T^n_{i-1}})+\Delta^n_i\wt\psi_{t}(t^n_i,u_{T^n_{i-1}}) \Bigr)  +o^\uc(\Den/\sqrt{T}).
		\end{align*}

		Proving \eqref{eq:incr-si-1} is therefore our main task.
		By Lemma~\ref{lem:exp},
		\begin{equation}\label{eq:L-exp}\begin{split}
				\call_{t^n_{i},T^n_{i}} (u ) &=\Theta_{t^n_i,T^n_i}(u_{T^n_i})(1-\eta_{t^n_i,T^n_i}(u_{T^n_i}))+C_{t^n_i}(u)T^n_i+o^\uc(T)\\
				&=e^{- \frac12u^2\si_{t^n_i}^2}+o^\uc(1),
			\end{split}
		\end{equation}
		where $C_t(u)$ is an It\^o semimartingale in $t$ and a polynomial in $u$ and $\eta_{t,T}(u)$ is defined in \eqref{eq:eta}.
		The last equality in \eqref{eq:L-exp} implies that $\lvert \call_{t^n_{i},T^n_{i}} (u )\rvert<1$ uniformly in $u$, $n$ and $i$ for all sufficiently small values of $T$. 
		
		Next, let $\Log$ be the principal branch of the complex logarithm, which for  $\lvert z\rvert<1$ has the series representation $\Log (1+z) =\sum_{j=1}^{\infty} \frac{(-1)^{j+1}}{j}z^j$.
		Therefore,
		\begin{equation}\label{eq:Delta-si} 
			\begin{split}
				\Delta^n_i \si^2_{t,T}(u)&=-\frac{2}{u^2} \Bigl(\log \bigl\lvert \call_{t^n_{i-1},T^n_{i-1}} (u )\bigr\rvert-\log\bigl \lvert \call_{t^n_{i},T^n_{i}} (u  )\bigr\rvert\Bigr)\\
				&=\frac{2}{u^2} \Re\Bigl(   \Log \call_{t^n_{i},T^n_{i}} (u )-\Log   \call_{t^n_{i-1},T^n_{i-1}} (u  )\Bigr) \\
				&=\frac{2}{u^2} \Re\Log \biggl(1-\frac{ \Delta^n_i \call_{t,T}(u)}{   \call_{t^n_{i-1},T^n_{i-1}}(u)} \biggr) =-\frac{2}{u^2}\Re \sum_{j=1}^{\infty} \frac{1}{j}\biggl(\frac{ \Delta^n_i \call_{t,T}(u)}{   \call_{t^n_{i-1},T^n_{i-1}} (u)}\biggr)^j .
			\end{split}	\raisetag{-2.5\baselineskip}
		\end{equation}
		
		By \eqref{eq:chi}, we have that $\eta_{t,T}(u_T)=O^\uc(T^{1/2})$ and thus,
		\begin{align*}
			\frac{ \Delta^n_i \call_{t,T}(u)}{   \call_{t^n_{i-1},T^n_{i-1}} (u )} &= \frac{ \Delta^n_i \call_{t,T}(u)}{\Theta_{t^n_{i-1},T^n_{i-1}}(u_{T^n_{i-1}})}+\frac{\Delta^n_i \call_{t,T}(u)\eta_{t^n_{i-1},T^n_{i-1}}(u_{T^n_{i-1}})}{ \Theta_{t^n_{i-1},T^n_{i-1}}(u_{T^n_{i-1}}) }\\
			&\quad+ \frac{\Delta^n_i \call_{t,T}(u)(\eta_{t^n_{i-1},T^n_{i-1}}(u_{T^n_{i-1}}))^2}{ \Theta_{t^n_{i-1},T^n_{i-1}}(u_{T^n_{i-1}}) }+o^\uc(\sqrt{\Den}T)\\
			&=\frac{ \Delta^n_i \call_{t,T}(u)}{\Theta_{t^n_{i},T^n_{i-1}}(u_{T^n_{i-1}})}+\frac{\Delta^n_i \call_{t,T}(u)\eta_{t^n_{i-1},T^n_{i-1}}(u_{T^n_{i-1}})}{ \Theta_{t^n_{i},T^n_{i-1}}(u_{T^n_{i-1}}) }\\
			&\quad-\frac14 u^6T^n_{i-1}\si^4_{t^n_i}(\si^\si_{t^n_i})^2e^{\frac12 u^2\si^2_{t^n_i}}\Delta^n_i \call_{t,T}(u)+o^\uc(\sqrt{\Den}T).
		\end{align*}
		By Proposition~\ref{prop:incr}, it follows that 
		\begin{align*}
			\frac{ \Delta^n_i \call_{t,T}(u)}{   \call_{t^n_{i-1},T^n_{i-1}} (u )} 
			&=(\call^{n,i}_{t,T}(u)+ \call^{\prime n,i}_{t,T}(u)) (1+\eta_{t^n_{i-1},T^n_{i-1}}(u_{T^n_{i-1}}))\\
			&\quad + \text{``$\Delta^n_i v_t C_{t^n_i}(u) T^n_{i-1}$''} +O^\uc(T^{N/2})+o^\uc(\sqrt{\Den}T+\Den/\sqrt{T}).\end{align*}
		Furthermore, 
		$
		\call^{n,i}_{t,T}(u) = O^\uc(\sqrt{\Den})$ and $\call^{\prime n,i}_{t,T}(u)=o^\uc(\Den/T)$, 
		so
		the last   display is $\call^{\prime n,i}_{t,T}(u)+O^\uc(\sqrt{\Den}+T^{N/2}+ \Den/\sqrt{T})$. Hence, by \eqref{eq:Delta-si},
		\begin{equation}\label{eq:incr-si}\begin{split}
				\Delta^n_i \si^2_{t,T}(u)&=-\frac{2}{u^2}\Re\biggl((\call^{n,i}_{t,T}(u)+ \call^{\prime n,i}_{t,T}(u)) (1+\eta_{t^n_{i-1},T^n_{i-1}}(u_{T^n_{i-1}})) + \sum_{j=2}^{\infty} \frac{(\call^{\prime n,i}_{t,T}(u))^j}{j}\biggr)\\
				&\quad+ \text{``$\Delta^n_i v_t C_{t^n_i}(u) T^n_{i-1}$''} +O^\uc(T^{N/2})+o^\uc(\sqrt{\Den}T+\Den/\sqrt{T}). \end{split}\raisetag{-2.5\baselineskip}
		\end{equation}
		
		Next, 	
		by definition (see Proposition~\ref{prop:incr}), we have that
		\begin{equation}\label{eq:help}
			\begin{split}
				\call^{n,i}_{t,T}(u)\eta_{t^n_{i-1},T^n_{i-1}}(u_{T^n_{i-1}}) 	&= \call^{n,i}_{t,T}(u)\eta_{t^n_{i},T^n_{i-1}}(u_{T^n_{i-1}}) +O^\uc(\Den)	\\
				&=\bigl(iu{\textstyle\sqrt{T^n_{i-1} }}\Delta^n_i \xi^{(1)}_t(t^n_i,u_{T^n_{i-1}})-u^2 \si_{t^n_i}\Delta^n_i \si_t\bigr)\eta_{t^n_{i},T^n_{i-1}}(u_{T^n_{i-1}}) \\
				&\quad+ \text{``$\Delta^n_i v_t C_{t^n_i}(u) T^n_{i-1}$''}+o^\uc(\sqrt{\Den}T + \Den/\sqrt{T}) 
			\end{split}\raisetag{-2.15\baselineskip}\end{equation}
		and $	\call^{\prime n,i}_{t,T}(u)\eta_{t^n_{i-1},T^n_{i-1}}(u_{T^n_{i-1}}) =o^\uc(\Den/\sqrt{T})$.
		Expanding the product in the second line of \eqref{eq:help}, removing  pure imaginary terms and dropping all expressions that are negligible, we obtain \eqref{eq:incr-si-1} from  the definition of $\call^{n,i}_{t,T}(u)$, \eqref{eq:incr-si} and the fact that 
		\begin{align*} \call^{\prime n,i}_{t,T}(u) +\sum_{j=2}^\infty\frac{(\call^{\prime n,i}_{t,T}(u))^j}{j} &= -\Log\bigl(1-\call^{\prime n,i}_{t,T}(u)\bigr) \\
			&=- iu\al_{t^n_i}(\sqrt{T^n_i}-\textstyle\sqrt{T^n_{i-1}}) - \Bigl( T^n_i\vp_{t^n_i}(u_{T^n_i})-T^n_{i-1}\vp_{t^n_i}(u_{T^n_{i-1}})\Bigr)\\
			&\quad- \frac14iu^3\si^2_{t^n_i}\si^\si_{t^n_i}\frac{\Den}{\sqrt{T^n_{i-1}}}+o^\uc(\Den/\sqrt{T}).\end{align*}

		It remains to verify \eqref{eq:incr-si-2} under Assumption~\ref{ass:main-1}. Recall the expansions we derived right after \eqref{eq:eta} and \eqref{eq:incr-si-1}. Under Assumption~\ref{ass:main}, it is easy to show that
		\begin{align*}
			\bigl\lvert  \chi^{(1)}_{t^n_i}(u_{T^n_{i-1}})\bigr\rvert 	 & \leq u_{T^n_{i-1}} \biggl(\int_{\R^{d'}}   \lvert\ga(t^n_i,z)\rvert^2\ov F(dz)\int_{\R^{d'}}   \bigl \lvert\si^\ga(t^n_i,z)+\ga^\si(t^n_i,z)\bigr\rvert^2  \ov F(dz)\biggr)^{1/2} \\
			&=O^\uc(1/\sqrt{T}),\\
			\bigl\lvert  \chi^{(3)}_{t^n_i}(u_{T^n_{i-1}})\bigr\rvert 	 & \leq \frac12u^2_{T^n_{i-1}} \int_{\R^{d'}}     \lvert\ga(t^n_i,z)\rvert^2 \ov F(dz) =O^\uc(1/T)
		\end{align*}
		as well as
		\begin{align*}
			\E_{t^n_i}\bigl[\bigl\lvert\Delta^n_i \xi^{(1)}_t(t^n_i,u_{T^n_{i-1}})\bigr\rvert\bigr] & \leq      \int_{\R^{d'}}   \E_{t^n_i}\bigl[ \lvert \Delta^n_i \ga(t,z) \rvert\bigr] \ov F(dz) =O^\uc(  \sqrt{\Den}),\\
			\E_{t^n_i}\bigl[\bigl\lvert\Delta^n_i \xi^{(2)}_t(t^n_i,u_{T^n_{i-1}})\bigr\rvert\bigr]& \leq u^2_{T^n_{i-1}}  \int_{\R^{d'}}   \lvert \ga(t^n_i,z)\rvert^2\ov F(dz) \\ 
			&\qquad\qquad \times\biggl(\int_{\R^{d'}}  \int_{\R^{d'}}    \E_{t^n_i}\bigl[\lvert\Delta^n_i \ga^\ga(t,z,z') \rvert^2\bigr]\ov F(dz)\ov F(dz') \biggr)^{1/2}\\
			&=O^\uc(  \sqrt{\Den}/T),\\
			\E_{t^n_i}\bigl[\bigl\lvert\Delta^n_i \xi^{(3)}_t(t^n_i,u_{T^n_{i-1}})\bigr\rvert\bigr] & \leq u_{T^n_{i-1}}\biggl(\int_{\R^{d'}}  \int_{\R^{d'}}     \lvert\ga^\ga(t^n_i,z,z')+\ga^\ga(t^n_i,z',z) \rvert^2\ov F(dz)\ov F(dz') \\
			&\quad \times \int_{\R^{d'}}    \lvert\ga(t^n_i,z) \rvert^2\ov F(dz)\int_{\R^{d'}}   \E_{t^n_i}\bigl[ \lvert\Delta^n_i\ga(t,z') \rvert^2\bigr]\ov F(dz')\biggr)^{1/2}\\
			&=O^\uc(  \sqrt{\Den/T}),\\
			\E_{t^n_i}\bigl[\bigl\lvert\Delta^n_i \xi^{(4)}_t(t^n_i,u_{T^n_{i-1}})\bigr\rvert\bigr]  & \leq u_{T^n_{i-1}} \biggl(\int_{\R^{d'}}   \lvert \ga(t^n_i,z)\rvert^2\ov F(dz) \\
			&\qquad\qquad \times \int_{\R^{d'}}   \E_{t^n_i}\bigl[\lvert \Delta^n_i\si^\ga(t,z)+\Delta^n_i \ga^\si(t,z)\rvert^2\bigr]\ov F(dz)\biggr)^{1/2}\\
			&=O^\uc(  \sqrt{\Den/T}),\\
			\E_{t^n_i}\bigl[\bigl\lvert\Delta^n_i \xi^{(5)}_t(t^n_i,u_{T^n_{i-1}})\bigr\rvert\bigr]  & \leq \biggl(\int_{\R^{d'}}   \bigl \lvert\si^\ga(t^n_i,z)+\ga^\si(t^n_i,z)\bigr\rvert^2  \ov F(dz)   \\
			&\qquad\qquad\qquad\qquad\qquad  \times \int_{\R^{d'}}   \E_{t^n_i}\bigl[\lvert \Delta^n_i \ga(t,z)\rvert^2\bigr]\ov F(dz)\biggr)^{1/2}\\
			&=O^\uc(  \sqrt{\Den}),\\
			\E_{t^n_i}\bigl[\bigl\lvert\Delta^n_i \xi^{(6)}_t(t^n_i,u_{T^n_{i-1}})\bigr\rvert\bigr]  & \leq  u_{T^n_{i-1}}\biggl(\int_{\R^{d'}}   \bigl \lvert \ga (t^n_i,z)\bigr\rvert^2  \ov F(dz)  \int_{\R^{d'}}   \E_{t^n_i}\bigl[\lvert \Delta^n_i \ga(t,z)\rvert^2\bigr]\ov F(dz)\biggr)^{1/2}\\
			&=O^\uc(  \sqrt{\Den/T})
		\end{align*}
		and
		\begin{align*}
			&\E_{t^n_i}\bigl[\bigl\lvert\Delta^n_i \xi^{(7)}_t(t^n_i,u_{T^n_{i-1}})\bigr\rvert\bigr]  +\E_{t^n_i}\bigl[\bigl\lvert\Delta^n_i \xi^{(8)}_t(t^n_i,u_{T^n_{i-1}})\bigr\rvert\bigr]\\
			& \leq  u_{T^n_{i-1}}^2\int_{\R^{d'}}   \bigl \lvert \ga (t^n_i,z)\bigr\rvert^2  \ov F(dz)\biggl(\int_{\R^{d'}\times\R}   j(z',v') \ov F(dz')dv' \int_{\R^{d'}}   \E_{t^n_i}\bigl[\lvert \Delta^n_i \ga(t,z)\rvert^2\bigr]\ov F(dz)\biggr)^{1/2}\\
			&=O^\uc(  \sqrt{\Den}/T).
		\end{align*}
		Moreover, under Assumption~\ref{ass:main-1},
		\begin{align*}
			\lvert T^n_i\vp_{t^n_i}(u_{T^n_i})-T^n_{i-1}\vp_{t^n_i}(u_{T^n_{i-1}})\rvert	&\leq T^n_{i-1}\bigl\lvert\Re\vp_{t^n_i}(u_{T^n_i})-\Re\vp_{t^n_i}(u_{T^n_{i-1}})\bigr\rvert+\Den\lvert\Re\vp_{t^n_i}(u_{T^n_i})\rvert \\
			&\leq \biggl(T^n_{i-1}\lvert u_{T^n_i}-u_{T^n_{i-1}}\rvert   + \Den u_{T^n_i}\biggr)\int_\R \lvert\ga(t^n_i,z)\rvert \ov F(dz)\\
			&\leq C_{t^n_i} \Den/\sqrt{T},
		\end{align*}
		which shows that $\sqrt{T}/\Den\lvert T^n_i\vp_{t^n_i}(u_{T^n_i})-T^n_{i-1}\vp_{t^n_i}(u_{T^n_{i-1}})\rvert = o^\uc(1)$ by  dominated convergence.
		Thus, from the expansions after \eqref{eq:eta} and \eqref{eq:incr-si-1}, we obtain $\Delta^n_i \Psi_{t,T}(t^n_i,u)+\Delta^n_i\Phi_{t,T}(t^n_i,u)=O^\uc(\sqrt{\Den T})+o^\uc(\Den/\sqrt{T})$, uniformly in $i$.
	\end{proof}

	\begin{proof}[Proof of Corollary \ref{cor:incr}] 
		By the mean-value theorem,
		\[ \Delta^n_i V_{t,T}(u) = F'(\si^2_{t^n_i,T^n_i}(u))\Delta^n_i \si^2_{t,T}(u) + O^\uc(\Den+T^N) \]
		By Theorem~1 in \cite{todorov2021bias}, we have under Assumption~\ref{ass:main-1} that $\si^2_{t^n_i,T^n_i}(u)=\si^2_{t^n_i} + o^\uc(T^{1/2})$, which in conjunction with Theorem~\ref{thm:incr} shows that 
		\begin{align*}
			\Delta^n_i V_{t,T}(u) &= F'(\si^2_{t^n_i})\Delta^n_i \si^2_{t,T}(u) + O^\uc(\Den+T^N+\sqrt{\Den T})\\
			&=F'(\si^2_{t^n_i})	\Delta^n_i \si^2_t + \sum_{k=1}^K \Delta^n_i v^{(k)}_t C^{(k)}_{t^n_i}(u) T^n_{i-1}\\
			&\quad+O^\uc(T^{N/2}+\sqrt{\Den T})  + o^\uc(  \Den/\sqrt{T}).
		\end{align*}
		The first equality in \eqref{eq:incr-V} follows now from the fact that $F'(\si^2_{t^n_i})	\Delta^n_i \si^2_t = \Delta^n_i V_t + O(\Den)$. Regarding the second equality, we use  \eqref{eq:V} to obtain
		\begin{align*}
			\Delta^n_i V_{t,T,T'}(u)	&=\frac{T^{\prime n}_{i-1}\Delta^n_i V_{t,T}(u)-T^n_{i-1}\Delta^n_i V_{t,T'}(u)+\Den(V_{t^n_i,T^{\prime n}_i}(u)-V_{t^n_i, T^n_i}(u))}{T'-T} \\
			&=\frac{(T^{\prime n}_{i-1}-T^n_{i-1})\Delta^n_i V_t + \Den(V_{t^n_i,T^{\prime n}_i}(u)-V_{t^n_i, T^n_i}(u))}{T'-T}\\
			&\quad + O^\uc(T^{N/2}+\sqrt{\Den T})+ o^\uc(\Den/\sqrt{T})\\
			&=\Delta^n_i V_t+ O^\uc(T^{N/2}+\sqrt{\Den T})+ o^\uc(\Den/\sqrt{T})
		\end{align*}
		(for the first equality, note that $T^{\prime n}_{i-1}-T^n_{i-1}=T'-T$; for second equality, notice that the terms $\sum_{k=1}^K \Delta^n_i v^{(k)}_t C^{(k)}_{t^n_i}(u) T^n_{i-1} $ from \eqref{eq:incr-si-2} perfectly cancel out; for the last equality, use that $V_{t,T}(u)=V_t+o^\uc(T^{1/2})$).
	\end{proof}
	
	\begin{proof}[Proof of Corollary~\ref{cor:finact}]
		It suffices to  consider the case $F(x)=x$, as the general case follows from an additional application of the mean-value theorem.	Since $\ga(t,z)=z_1$, we have $\si^\ga(t,z)=\ga^\ga(t,z,z')=0$ and consequently, $\chi^{(4)}_t(u)=0$ and $\xi^{(2)}_t(s,u)=\xi^{(3)}_t(s,u)=0$. Also, again because $\ga(t,z)=z_1$, we have that $\Delta^n_i \xi^{(j)}_t(s,u)=0$ for all $j=1,5,6,7,8$. Since both $x$ and $\si$ have summable jumps by assumption, we have 
		\begin{equation}\label{eq:1}\begin{split}
			&-\frac{2(T^n_{i-1})^2}{u^2}\Re\biggl(-\frac12 u_{T^n_{i-1}}^2\chi^{(1)}_{t^n_i}(u_{T^n_{i-1}})\Delta^n_i\si_t+\frac12 iu_{T^n_{i-1}}\chi^{(3)}_{t^n_i}(u_{T^n_{i-1}})\Delta^n_i \si_t\biggr) \\
			&\quad=T^n_{i-1}\Delta^n_i\si_t \Re\biggl(\int_{\R^{d'}} (e^{iu_{T^n_{i-1}}z_1}-1)\ga^\si(t^n_i,z)\la(t^n_i,z)dz - \int_{\R^{d'}} z_1\si^{\la,(1)}(t^n_i,z)dz\biggr)\\
			&\qquad+o^\uc(\sqrt{\Den}T) \\
			&\quad=C_{t^n_i} \Delta^n_i\si_t T^n_{i-1} + o^\uc(\sqrt{\Den}T),\end{split}\!\! 
		\end{equation}
		where the last step follows from the fact that $\int_{\R^{d'}} e^{iu_{T^n_{i-1}}z_1}\ga^\si(t^n_i,z)\la(t^n_i,z)dz =o^\uc(1)$ by the Riemann--Lebesgue lemma and the integrability of $\ga^\si(t^n_i,\cdot)\la(t^n_i,\cdot)$.  By the same reason,
		\begin{equation}\label{eq:2}\begin{split}
				&\frac{(T^n_{i-1})^2}{u^2}\Re\biggl(  u_{T^n_{i-1}}^2 \si_{t^n_i}\Delta^n_i\xi^{(4)}_t(t^n_i,u_{T^n_{i-1}})\biggr)\\
				&\quad=T^n_{i-1}\Re\biggl(\si_{t^n_i}\int_{\R^{d'}} (e^{iu_{T^n_{i-1}}z_1}-1)\Delta^n_i\ga^\si(t,z)\la(t^n_i,z)dz  \biggr)\\
				&\quad =-T^n_{i-1}\si_{t^n_i}\int_{\R^{d'}}  \Delta^n_i\ga^\si(t,z)\la(t^n_i,z)dz  + o^\uc(\sqrt{\Den}T). 
			\end{split}
		\end{equation}
		Next,
		\begin{align*}
			&-\frac{2}{u^2}\Re\Bigl(T^n_i\vp_{t^n_i}(u_{T^n_i})-T^n_{i-1}\vp_{t^n_i}(u_{T^n_{i-1}})\Bigr)\\	&\qquad=-\frac{2T^n_{i-1}}{u^2} \Re\biggl(\int_{\R^{d'}} (e^{iu_{T^n_i}z_1}-e^{iu_{T^n_{i-1}}z_1}-i(u_{T^n_i}-u_{T^n_{i-1}})z_1)\la(t^n_i,z)dz \\
			&\qquad\quad-\frac{2\Den}{u^2}\int_{\R^{d'}} (e^{iu_{T^n_i}z_1}-1-iu_{T^n_i}z_1)\la(t^n_i,z)dz\biggr)\\
			&\qquad=-\frac{2T^n_{i-1}}{u^2}  \int_{\R^{d'}} (\cos(u_{T^n_i}z_1)-\cos(u_{T^n_{i-1}}z_1))\la(t^n_i,z)dz+O^\uc(\Den),
		\end{align*}
		so by the mean-value theorem, there is $\tau^n_i\in[T^n_{i-1},T^n_i]$ such that the last integral is
		\[\frac{2T^n_{i-1}}{u^2} (u_{T^n_i}-u_{T^n_{i-1}}) \int_{\R^{d'}} \sin(u_{\tau^n_i}z_1)z_1\la(t^n_i,z)dz. \]
		The Riemann--Lebesgue lemma and the estimate $u_{T^n_i}-u_{T^n_{i-1}}=O^\uc(\Den/T^{3/2})$ imply that
		\[ -\frac{2}{u^2}\Re\Bigl(T^n_i\vp_{t^n_i}(u_{T^n_i})-T^n_{i-1}\vp_{t^n_i}(u_{T^n_{i-1}})\Bigr) = O^\uc(\Den) + o^\uc(\Den/\sqrt{T}). \]
		The assertions of the corollary now follow from Theorem~\ref{thm:incr} and the expansions of $\Delta^n_i\Phi_{t,T}(t^n_i,u)$ and $\Delta^n_i\Psi_{t,T}(t^n_i,u)$ derived after \eqref{eq:incr-si-1}, together with \eqref{eq:1} and \eqref{eq:2} and the fact that differencing with two maturities cancels out terms of the form $\Delta^n_i v^{(k)}_t C^{(k)}_t(u) T^n_{i-1}$ as well as the leading terms in \eqref{eq:1} and \eqref{eq:2}.
	\end{proof}

\section{Proof of Theorem~\ref{thm:test}}\label{sec:proof51}

 Throughout this section,  we write $\E=\E^{\overline \P}$. For the proof, we need two auxiliary lemmas, which we prove in the appendix.
 
 \begin{lemma}\label{lemma:bounds}
 	Suppose that Assumptions \ref{ass:main}--\ref{ass:C} hold. For any $0<\un t<\ov t<\infty$, there is a process  $C$ with c\`{a}dl\`{a}g paths   such that on the set $\{\inf_{s\in[\underline t,\overline t]} \si_s^2>0\}$, we have 
 	\begin{equation}\label{bounds_1}
 		O_{s,T}(k)\leq C_s\biggl(Te^{-\lvert k-x_s\rvert} \bone_{\{\lvert k-x_s\rvert>1\}}+\biggl(\sqrt{T}\wedge\frac{T}{|k-x_s|}\biggr)\bone_{\{|k-x_s|<1\}}\biggr)
 	\end{equation}
 	and
 	\begin{equation}\label{bounds_2}
 		|O_{s,T}(k_1)-O_{s,T}(k_2)|\leq C_s\biggl[\frac{T}{(k_2-x_s)^4}\wedge\frac{T}{(k_2-x_s)^2}\wedge 1\biggr]|e^{k_1}-e^{k_2}|.
 	\end{equation}
 	Moreover, if  $\sqrt{\Den}\asymp T$ and $(\si_t)_{t\geq0}$ is bounded away from zero, we further have
 	\begin{equation}\label{bounds_3} 
 		\E^\Q_r[\lvert e^{-x_s}O_{s,T}(k+x_s)-e^{-x_r}O_{r,T+(s-r)}(k+x_r)\rvert ] \leq C_r{\sqrt{\Den T}}  \log \Den^{-1}
 	\end{equation}
 	for $r,s\in[\underline{t},\overline{t}]$, $0\leq  s-r\leq \Den$, $k\in\R$, $k_1<k_2<x_s$ or $k_1>k_2>x_s$, and  small enough $T$.
 \end{lemma}

\begin{lemma}\label{lem:char}
Let $m_p=\E[(\ov\eps_{j,t,T})^p]^{1/p}$ (which does not depend on $j$, $t$ or $T$) and define
	\begin{equation}\label{eq:ov-eps} 
	\eps_{t,T}(u)	=-\frac{2}{u^2}F'\bigl(\si^2_{t,T}(u)\bigr)\frac{\Re\bigl( \call_{t,T}(u)\bigr)\Re \bigl(Z_{t,T}(u)\bigr)+\Im\bigl( \call_{t,T}(u)\bigr)\Im \bigl(Z_{t,T}(u)\bigr)}{\lvert \call_{t,T}(u)\rvert^2}
\end{equation}
and
	\begin{equation}\label{eq:beta-k}\begin{split}
			\beta_{2,t,T}(u)_k&=\Biggl\{\frac{2F'(\si^2_{t,T}(u))}{u^2\lvert \call_{t,T}(u)\rvert^2}\biggl[ \Re\bigl(\call_{t,T}(u)\bigr)\Re\biggl(\biggl(\frac{u^2}{T} +i \frac{u}{\sqrt{T}}\biggr)e^{(iu/\sqrt{T}-1)(k-x_t)}\biggr) \\
			&\quad \qquad+\Im\bigl(\call_{t,T}(u)\bigr)\Im\biggl(\biggl(\frac{u^2}{T} +i \frac{u}{\sqrt{T}}\biggr)e^{(iu/\sqrt{T}-1)(k-x_t)}\biggr)\biggr] \Biggr\}^2\\ &\quad\times e^{-2x_t}\zeta_{t,1}(k-x_t)^2O_{t,T}(k)^2\rho_{t,1}(k-x_t) \delta. 
	\end{split}\end{equation}
	Under the assumptions of Theorem~\ref{thm:test}, we have
	\begin{equation}\label{eq:char-approx} 
		\E[e^{iU_n\eps_{t,T}(u)}\mid \calf]=\exp\biggl(  -\frac12m_2^2 U_n^2 \int_\R\beta_{2,t,T}(u)_k dk \biggr)+o^\uc(\sqrt{\Den}),
	\end{equation}
	where the remainder term is also locally uniform in $t$ and uniformly on compacts in $U$.
\end{lemma}

\begin{proof}[Proof of Theorem~\ref{thm:test}] We only prove the theorem for $R^n_T(u,U)$. The proof for $R^n_{T,T'}(u,U)$ is completely analogous. By a classical localization argument, we can and will assume that all locally bounded processes (in particular, all semimartingale processes) that appear in the proof are uniformly bounded by a finite deterministic constant $C$. 
	Let $\calg_t=\si(\ov \epsilon_{j,s,\tau T}: s< t,\, j=1,\dots, N_{s,\tau T},\, \tau,T>0)$ and $\ov\calf_t = \calf_t \otimes \calg_t$ and define
	\begin{equation}\label{eq:ov-L1} 
		\ov	\L^n_1(u,U)_{t,T}= \frac1{k_n } \sum_{i=1}^{k_n} \E[e^{iU_n \Delta^n_i \wh V_{t,T}(u)}\mid \ov \calf_{t^n_i}].
	\end{equation}
	By the derivation of Equation~(B.19) in \cite{CT23_b}, we can use Corollary~\ref{cor:incr} to show
	\begin{equation}\label{eq:Delta-Vhat-2} \begin{split}
			\Delta^n_i \wh V_{t,T}(u)
			&=\Delta^n_i V_t + \Delta^n_i \epsilon_{t,T}(u)\\
			&\quad+O^\uc\biggl(\frac{\delta}{\sqrt{T} }\log T +T^{N/2}+\sqrt{\Den T}\biggr)+o^\uc(\Den/\sqrt{T}),
		\end{split}
	\end{equation}
	where $\Delta^n_i \epsilon_{t,T}(u)=\eps_{t^n_{i-1},T^n_{i-1}}(u)-\eps_{t^n_i,T^n_i}(u)$ (with $\eps_{t,T}(u)$ defined in \eqref{eq:ov-eps}) and
	\begin{equation}\label{eq:Z-2}\begin{split}
			Z_{t,T}(u) = -\biggl(\frac{u^2}{T}+i\frac{u}{\sqrt{T}}\biggr)e^{-x_t}\sum_{j=2}^{N_{t,T}}e^{(iu/\sqrt{T}-1)(k_{j-1,t,T}-x_t)}\epsilon_{j-1,t,T}\delta_{j,t,T}.
	\end{split}\end{equation}
	By \eqref{eq:rates}, we have $O^\uc (\frac{\delta}{\sqrt{T} }\log T +T^{N/2}+\sqrt{\Den T} )+o^\uc(\Den/\sqrt{T})=o^\uc(\sqrt{\Den})=o^\uc(1/\sqrt{k_n\lvert\calt\rvert})$. 
	Recalling \eqref{eq:epsilon} and the assumption that $\ov\epsilon_{j,t^n_i,T}$ and $\ov\epsilon_{j,t^n_{i-1},T}$ are independent of $\calf$ and $\calg_{t^n_{i}}$, we  have
	\begin{align*}
		\ov	\L^n_1(u,U)_{t,T}&= \frac1{k_n}  \sum_{i=1}^{k_n} \E[e^{iU_n \Delta^n_i V_{t}} \E[ e^{iU_n\Delta^n_i \epsilon_{t,T}(u)}\mid \calf \otimes \calg_{t^n_{i}} ]\mid \ov \calf_{t^n_{i}}]  + o^\uc(\sqrt{\Den})\\
		&	= \frac1{k_n }  \sum_{i=1}^{k_n} \E[e^{iU_n \Delta^n_i V_{t}} \E[ e^{iU_n\Delta^n_i \epsilon_{t,T}(u)}\mid \calf ]\mid   \calf_{t^n_{i}}] +o^\uc(\sqrt{\Den}).
	\end{align*}
	By the independence properties of $\ov\epsilon_{j,t,T}$, we clearly have 
	\[ \E[ e^{iU_n\Delta^n_i \epsilon_{t,T}(u)}\mid \calf ] =\E[ e^{iU_n\eps_{t^n_{i-1},T^n_{i-1}}(u)}\mid \calf ]\E[ e^{-iU_n\eps_{t^n_{i},T^n_{i}}(u)}\mid \calf ].  \]
	Using Lemma~\ref{lem:char} (by a change of variables, we can replace $\beta_{2,t,T}(u)_k$ in \eqref{eq:char-approx} by $\beta'_{2,t,T}(u)_k=\beta_{2,t,T}(u)_{x_t+k}$) and the notation \eqref{eq:beta-k} from below, it follows that 
	\begin{align*}
		\E[ e^{iU_n\Delta^n_i \epsilon_{t,T}(u)}\mid \calf ] &=\exp\biggl(-\frac{m_2^2U_n^2}2 \int_\R(\beta'_{2,t^n_{i-1},T^n_{i-1}}(u)_k+\beta'_{2,t^n_{i},T^n_{i}}(u)_k) dk \biggr) +o^\uc(\sqrt{\Den})
	\end{align*}
	and consequently,
	\begin{align*}
		\ov \L^n_1(u,U)_{t,T}	&=\frac1{k_n} \sum_{i=1}^{k_n} \E_{t^n_i}\biggl[\exp\biggl(iU_n \Delta^n_i V_t\\
		&\quad-\frac12m_2^2U_n^2 \int_\R(\beta'_{2,t^n_{i-1},T^n_{i-1}}(u)_k+\beta'_{2,t^n_{i},T^n_{i}}(u)_k) dk\biggr)\biggr] +o^\uc(\sqrt{\Den}).
	\end{align*}
	
	We claim that
	\begin{equation}\label{eq:toshow} 
		\ov \L^n_1(u,U)_{t,T}	=\frac1{k_n}  \sum_{i=1}^{k_n}\exp\biggl( -m_2^2U_n^2 \int_\R \beta'_{2,t^n_{i},T^n_{i}}(u)_k dk\biggr) \E_{t^n_{i}} [e^{iU_n \Delta^n_i V_t}] +o^\uc(\sqrt{\Den}).\!\!\!
	\end{equation}
	To prove this, observe that $F'$ is $C^1$ and  that  $\Delta^n_i \call_{t,T}(u)=O^\uc(\sqrt{\Den})$ and $\Delta^n_i \si^2_{t,T}(u) = O^\uc(\sqrt{\Den})$ by Theorems~\ref{thm:incr-L} and \ref{thm:incr}. Moreover, $ (\frac{u^2}{T^n_{i-1}} +i \frac{u}{\sqrt{T^n_{i-1}}} )e^{(iu/\sqrt{T^n_{i-1}}-1)(k-x_{t^n_{i}})}-(\frac{u^2}{T^n_{i}} +i \frac{u}{\sqrt{T^n_{i}}} )e^{(iu/\sqrt{T^n_{i}}-1)(k-x_{t^n_{i}})}=O^\uc(\frac{\Den}{T^2}+ \frac{\Den}{T^{5/2}}\lvert k-x_{t^n_i}\rvert)$. Therefore, recalling the rate conditions in \eqref{eq:rates}, the smoothness properties \eqref{eq:rho} and \eqref{eq:zeta} as well as the option price bounds \eqref{bounds_1} and \eqref{bounds_3}, we can apply Taylor's theorem (which is tedious but straightforward) to show that
	\begin{equation}\label{eq:diffbeta}
		U_n^2\int_\R(\beta'_{2,t^n_{i-1},T^n_{i-1}}(u)_k-\beta'_{2,t^n_{i},T^n_{i}}(u)_k) dk	=
		o^\uc(\sqrt{\Den}),  
	\end{equation}
	which implies \eqref{eq:toshow}. 
	
	Next, using Lemma~\ref{lem:exp} (with $T$ replaced by $\Den$), one can show  that  $\E_{t^n_{i}}[e^{iU_n \Delta^n_i V_t}  ]=\exp(iU\al^V_{t^n_i}\sqrt{\Den}-\frac12U^2(\si^V_{t^n_i})^2 + \Den\vp_{t^n_i}^V(U_n)-\frac12iU^3(\si^V_{t^n_i})^2\si^{\si^V}_{t^n_i}\sqrt{\Den}) + o^\uc(\sqrt{\Den})$, where  $\vp^V_t(U)=\int_\R (e^{iUz}-1-iUz) F^V_t(dz)$  and $\si^{\si^V}$ is the volatility of $\si^V$ with respect to $W^\P$. 
	Therefore,
	\begin{equation}\label{eq:ovLn1} 
		\ov \L^n_1(u,U)_{t,T} = \ov \L^{n,1}_1(u,U)_{t,T}+\ov \L^{n,2}_1(u,U)_{t,T}+\ov \L^{n,3}_1(u,U)_{t,T}+o^\uc(\sqrt{\Den}),
	\end{equation}
	where (with the notation $t_n = t-k_n\Den$)
	\begin{align*}
		\ov \L^{n,1}_1(u,U)_{t,T}&= \frac1{k_n}  \sum_{i=1}^{k_n}\exp\biggl( -m_2^2U_n^2 \int_\R \beta'_{2,t^n_{i},T^n_{i}}(u)_k dk+iU\al^V_{t_n}\sqrt{\Den} -\frac12U^2(\si^V_{t_n})^2\\
		&\quad  +\Den\vp_{t_n}^V(U_n) -\frac12iU^3(\si^V_{t_n})^2\si^{\si^V}_{t_n}\sqrt{\Den}\biggr),\\
		\ov \L^{n,2}_1(u,U)_{t,T}&=- \frac{U^2}{k_n}  \sum_{i=1}^{k_n}\exp\biggl( -m_2^2U_n^2 \int_\R \beta'_{2,t^n_{i},T^n_{i}}(u)_k dk+iU\al^V_{t_n}\sqrt{\Den}-\frac12U^2(\si^V_{t_n})^2\\
		&\quad    +\Den\vp_{t_n}^V(U_n)-\frac12iU^3(\si^V_{t_n})^2\si^{\si^V}_{t_n}\sqrt{\Den}\biggr)\si^V_{t_n}(\si^V_{t^n_i}-\si^V_{t_n}),\\
		\ov \L^{n,3}_1(u,U)_{t,T}&=  \frac{1}{k_n}  \sum_{i=1}^{k_n}\exp\biggl( -m_2^2U_n^2 \int_\R \beta'_{2,t^n_{i},T^n_{i}}(u)_k dk+iU\al^V_{t_n}\sqrt{\Den}-\frac12U^2(\si^V_{t_n})^2\\
		&\quad    +\Den\vp_{t_n}^V(U_n)-\frac12iU^3(\si^V_{t_n})^2\si^{\si^V}_{t_n}\sqrt{\Den}\biggr)\Den(\vp^V_{t^n_i}(U_n)-\vp^V_{t_n}(U_n)).
	\end{align*}
	
	Our next goal is to derive an expansion of $\wh \C^n_1(u,U)_{t,T}$. To this end, note that 
	\begin{equation}\label{eq:Ldiff} 
		\wh \L^n_1(u,U)_{t,T}-\ov \L^n_1(u,U)_{t,T}=k_n^{-1}\sum_{i=1}^{k_n} \Bigl(e^{iU_n \Delta^n_i \wh V_{t,T}(u)} -\E[e^{iU_n \Delta^n_i \wh V_{t,T}(u)}\mid \ov \calf_{t^n_i}]\Bigr),
	\end{equation}
	where the $i$th term is $\ov \calf_{t^n_{i-1}+\Den}$-measurable and has a zero $\ov\calf_{t^n_i}$-conditional expectation. Therefore, if we split the sum in the last display into two, one summing over odd indices and one summing over even indices, these two become martingale sums, showing that $\wh \L^n_1(u,U)_{t,T}-\ov \L^n_1(u,U)_{t,T}=O^\uc(1/\sqrt{k_n})$.
	Since $k_n^{-3/2}=o(\sqrt{\Den})$ by \eqref{eq:rates},  Taylor's theorem implies that
	\begin{align*}
		\frac{1}{\lvert\calt\rvert} \sum_{t\in\calt}\ov \C^n_1(u,U)_{t,T}	&=-\frac{2U^{-2}}{\lvert\calt\rvert} \sum_{t\in\calt}\Re \biggl(\Log \ov \L^n_1(u,U)_{t,T}  +  \frac{\wh \L^n_1(u,U)_{t,T}-\ov \L^n_1(u,U)_{t,T}}{ \ov \L^n_1(u,U)_{t,T}}\\
		&\quad-\frac12\biggl(\frac{\wh \L^n_1(u,U)_{t,T}-\ov \L^n_1(u,U)_{t,T}}{ \ov \L^n_1(u,U)_{t,T}}\biggr)^2 \biggr) + o^\uc(\sqrt{\Den}). 
	\end{align*}
	We can further expand $\ov \L^n_1(u,U)_{t,T}$ using \eqref{eq:ovLn1} and obtain
	\[\frac{1}{\lvert\calt\rvert} \sum_{t\in\calt}\ov \C^n_1(u,U)_{t,T}= E^n_1(u,U)_T+F^{n}_{1}(u,U)_T +\sum_{j=1}^3  G^{n,j}_{1}(u,U)_T +o^\uc(\sqrt{\Den}),  \]
	where 
	\begin{equation}\label{eq:EFG}\begin{split}
			E^n_1(u,U)_T 	&=-\frac{2U^{-2}}{\lvert\calt\rvert} \sum_{t\in\calt} \Re \biggl(   \frac{\wh \L^n_1(u,U)_{t,T}-\ov \L^n_1(u,U)_{t,T}}{ \ov \L^n_1(u,U)_{t,T}}\biggr),\\
			F^{n}_1(u,U)_T &=-\frac{2U^{-2}}{\lvert\calt\rvert} \sum_{t\in\calt} \Re (   \Log \ov\L^{n,1}_1(u,U)_{t,T}),\\
			G^{n,1}_{1}(u,U)_T	&= \frac{U^{-2}}{\lvert\calt\rvert} \sum_{t\in\calt} \Re \biggl(   \biggl(\frac{\wh \L^n_1(u,U)_{t,T}-\ov \L^n_1(u,U)_{t,T}}{ \ov \L^n_1(u,U)_{t,T}}\biggr)^2\biggr),\\
			G^{n,2}_{1}(u,U)_T	&= -\frac{2U^{-2}}{\lvert\calt\rvert} \sum_{t\in\calt} \Re \biggl(\frac{\ov \L^{n,2}_1(u,U)_{t,T}}{ \ov \L^{n,1}_1(u,U)_{t,T}}\biggr),\\
			G^{n,3}_{1}(u,U)_T	&= -\frac{2U^{-2}}{\lvert\calt\rvert} \sum_{t\in\calt} \Re \biggl(\frac{\ov \L^{n,3}_1(u,U)_{t,T}}{ \ov \L^{n,1}_1(u,U)_{t,T}}\biggr).
	\end{split}\end{equation}
	
	
	Note that $G^{n,1}_{1}(u,U)_T=O^\uc(k_n^{-1})$, so we can replace $\ov \L^n_1(u,U)_{t,T}$ in the denominator by $\exp(-\frac12U^2(\si^V_{t_n})^2)$, as this only leads to an overall error of order $O^\uc(k_n^{-1}((\delta/\sqrt{T})/\Den+\sqrt{\Den/k_n})+\Den^{1-r/2}/k_n)=O^\uc(\Den^{(\iota'\wedge(1-r/2))}/k_n+\sqrt{\Den/k_n})$, which is $o^\uc(\sqrt{\Den})$ by \eqref{eq:rates}. Moreover, we can take conditional expectation with respect to $\calf_{t_n}$ as the difference is a martingale sum in $t$ and therefore of order $O^\uc(\lvert \calt\rvert^{-1/2}k_n^{-1})=o^\uc(\sqrt{\Den})$. This shows that
	\begin{align*}
		G^{n,1}_1(u,U)_T	&= \frac{U^{-2}}{\lvert\calt\rvert} \sum_{t\in\calt}e^{U^2(\si^V_{t_n})^2}  \E_{t_n}\Bigl[\Re \bigl(   (\wh \L^n_1(u,U)_{t,T}-\ov \L^n_1(u,U)_{t,T})^2\bigr) \Bigr] +o^\uc(\sqrt{\Den})\\
		&=\frac{U^{-2}}{\lvert\calt\rvert k_n^2} \sum_{t\in\calt}e^{U^2(\si^V_{t_n})^2} \\
		&\quad\times \biggl(\sum_{i=1}^{k_n} \Re \Bigl( \E_{t_n}\Bigl[\bigl(e^{iU_n \Delta^n_i \wh V_{t,T}(u)}-\E[e^{iU_n \Delta^n_i \wh V_{t,T}(u)}\mid \ov \calf_{t^n_i}]\bigr)^2 \Bigr] \Bigr)\\
		&\qquad+2\sum_{i=2}^{k_n} \Re \Bigl( \E_{t_n}\Bigl[\bigl(e^{iU_n \Delta^n_i \wh V_{t,T}(u)}-\E[e^{iU_n \Delta^n_i \wh V_{t,T}(u)}\mid \ov \calf_{t^n_i}]\bigr)\\
		&\quad\qquad\times\bigl(e^{iU_n \Delta^n_{i-1} \wh V_{t,T}(u)}-\E[e^{iU_n \Delta^n_{i-1} \wh V_{t,T}(u)}\mid \ov \calf_{t^n_{i-1}}]\bigr) \Bigr] \Bigr)\biggr)+o^\uc(\sqrt{\Den}).
	\end{align*}
	As $\Delta^n_i \eps_{t,T}(u) = O^\uc(\delta^{1/2}/T^{1/4})=O^\uc(\Den^{1/2+\iota'})$ by \eqref{eq:rates}, all terms involving  the noise variables can be omitted, after which the second sum above is identically zero. Regarding the first sum, we only need to keep the diffusive part of $V$, so 
	\begin{equation}\label{eq:Gn1}\begin{split}
			G^{n,1}_1(u,U)_T	&=\frac{U^{-2}}{\lvert\calt\rvert k_n} \sum_{t\in\calt}(e^{-U^2(\si^V_{t_n})^2}-1)+o^\uc(\sqrt{\Den}) \\
			&= \frac{1}{\lvert\calt\rvert}\sum_{t\in\calt}\wt \C^n_1(u,U)_{t,T}+o^\uc(\sqrt{\Den}),
		\end{split}
	\end{equation}
	where the last step follows from the fact that $\wh \L^n_1(u,U)_{t,T} = \ov \L^n_1(u,U)_{t,T} + O^\uc(1/\sqrt{k_n}) = e^{-U^2(\si^V_{t_n})^2/2} + O^\uc(\Delta_n^{\iota'})$ (so $\wt \C^n_1(u,U)_{t,T}=1/(U^2k_n)(e^{-U^2(\si^V_{t_n})^2}-1) + O^\uc(k_n^{-1}\Den^{\iota'})$).
	
	Next, we consider $G^{n,2}_1(u,U)_T$, which is already $O^\uc(\sqrt{k_n\Den})$. Since changing the term $U_n^2\int_\R \beta'_{2,t^n_i, T^n_i}(u)_k dk$ to $U_n^2\int_\R \beta'_{2,t-k_n\Den,T+k_n\Den}(u)_k dk$ in the definitions of $\ov \L^{n,1}_1(u,U)_{t,T}$ and $\ov \L^{n,2}_1(u,U)_{t,T}$ only incurs an error of order $o^\uc(\sqrt{k_n\Den})$ by \eqref{eq:diffbeta} and $k_n\Den = o(\sqrt{\Den})$ by \eqref{eq:rates}, it follows that we can make these changes in our analysis of $G^{n,2}_1(u,U)_T$. Afterwards, the term corresponding to $t$ in the sum defining $G^{n,2}_1(u,U)_T$ will be $\calf_t$-measurable with a zero $\calf_{t_n}$-conditional expectation. Therefore, the sum over $t$ is a martingale and we obtain 
	\begin{equation}\label{eq:Gn2} 
		G^{n,2}_1(u,U)_T = O^\uc(\sqrt{k_n\Den/\lvert \calt\rvert}) = o^\uc(\sqrt{\Den})	
	\end{equation}
	by \eqref{eq:rates}. Analogous arguments show that
	\begin{equation}\label{eq:Gn3} 
		G^{n,3}_1(u,U)_T = O^\uc(\sqrt{k_n\Den/\lvert \calt\rvert}) = o^\uc(\sqrt{\Den}).	
	\end{equation}
	
	Concerning $F^n_1(u,U)_T$, note that $\Den \vp_{t_n}^V(U_n)=o^\uc(\sqrt{\Den})$ under Assumption~\ref{ass:H0}. Moreover, $U_n^2\int_\R \beta'_{2,t^n_i,T^n_i}(u)_k dk-U_n^2\int_\R \beta'_{2,t-s\Den,Ts\Den}(u)_k dk = o^\uc(\sqrt{\Den})$ by \eqref{eq:diffbeta}, so under the null hypothesis,
	\begin{equation}\label{eq:Fn}\begin{split}
			F^n_1(u,U)_T&	=-\frac{2U^{-2}}{\lvert \calt\rvert}\sum_{t\in\calt} \biggl(\log\biggl(\frac1{k_n\Den} \int_0^{k_n\Den}\exp\biggl(-m_2^2U_n^2\int_\R \beta'_{2,t-s,T+s}(u)_k dk\biggr) ds\biggr)\\
			&\quad-\frac{U^2}{2} (\si^V_{t_n})^2  \biggr)+o^\uc(\sqrt{\Den}).\end{split}\raisetag{-2.5\baselineskip}
	\end{equation}
	
	The previous arguments can be used to derive an analogous decomposition of 
	\[\frac{1}{\lvert\calt\rvert} \sum_{t\in\calt}\ov \C^n_2(u,U)_{t,T}= E^n_2(u,U)_T+F^{n}_{2}(u,U)_T +\sum_{j=1}^3  G^{n,j}_{2}(u,U)_T +o^\uc(\sqrt{\Den}),  \]
	where
	\begin{equation*}
		E^n_2(u,U)_T= -\frac{2U^{-2}}{\lvert\calt\rvert} \sum_{t\in\calt} \Re \biggl(   \frac{\wh \L^n_2(u,U)_{t,T}-\ov \L^n_2(u,U)_{t,T}}{ \ov \L^n_2(u,U)_{t,T}}\biggr)
	\end{equation*}
	with 
	\begin{align*}
		\ov \L^n_2(u,U)_{t,T} &= \ov \L^{n,1}_2(u,U)_{t,T}+\ov \L^{n,2}_2(u,U)_{t,T}+\ov \L^{n,3}_2(u,U)_{t,T}+o^\uc(\sqrt{\Den}),\\
		\ov \L^{n,1}_2(u,U)_{t,T}&= \frac2{k_n}  \sum_{i=1}^{\lfloor k_n/2\rfloor}\exp\biggl( -m_2^2U_n^2 \int_\R \beta'_{2,t^n_{2i},T^n_{2i}}(u)_k dk+2iU\al^V_{t_n}\sqrt{\Den}-U^2(\si^V_{t_n})^2 \\
		&\quad +2\Den\vp_{t_n}^V(U_n) -2iU^3(\si^V_{t_n})^2\si^{\si^V}_{t_n}\sqrt{\Den}\biggr),\\
		\ov \L^{n,2}_2(u,U)_{t,T}&=- \frac{4U^2}{k_n}     \sum_{i=1}^{\lfloor k_n/2\rfloor} \exp\biggl( -m_2^2U_n^2 \int_\R \beta'_{2,t^n_{2i},T^n_{2i}}(u)_k dk+2iU\al^V_{t_n}\sqrt{\Den}-U^2(\si^V_{t_n})^2\\
		&\quad  +2\Den\vp_{t_n}^V(U_n)  -2iU^3(\si^V_{t_n})^2\si^{\si^V}_{t_n}\sqrt{\Den}\biggr)\si^V_{t_n}(\si^V_{t^n_{2i}}-\si^V_{t_n}),\\
		\ov \L^{n,3}_2(u,U)_{t,T}&= \frac4{k_n}  \sum_{i=1}^{\lfloor k_n/2\rfloor}\exp\biggl( -m_2^2U_n^2 \int_\R \beta'_{2,t^n_{2i},T^n_{2i}}(u)_k dk+2iU\al^V_{t_n}\sqrt{\Den}-U^2(\si^V_{t_n})^2 \\
		&\quad +2\Den\vp_{t_n}^V(U_n) -2iU^3(\si^V_{t_n})^2\si^{\si^V}_{t_n}\sqrt{\Den}\biggr)\Den(\vp^V_{t^n_i}(U_n)-\vp^V_{t_n}(U_n))
	\end{align*}
	and
	\begin{equation}\label{eq:Fn2}\begin{split}
			F^n_2(u,U)_T&=-\frac{2U^{-2}}{\lvert \calt\rvert}\sum_{t\in\calt} \biggl(\log\biggl(\frac1{k_n\Den} \int_0^{k_n\Den}\exp\biggl(-m_2^2U_n^2\int_\R \beta'_{2,t-s,T+s}(u)_k dk\biggr) ds\biggr)\\
			&\quad- {U^2} (\si^V_{t_n})^2  \biggr)+o^\uc(\sqrt{\Den}),\\
		\end{split}\raisetag{-2.5\baselineskip}\end{equation}
	and 
	\begin{align*}
		G^{n,1}_2(u,U)_T&=\frac{1}{\lvert \calt\rvert}\sum_{t\in\calt}\wt \C^n_2(u,U)_{t,T}+o^\uc(\sqrt{\Den}),\\	G^{n,2}_2(u,U)_T&=G^{n,3}_2(u,U)_T=o^\uc(\sqrt{\Den}).
	\end{align*}
	Upon realizing that
	\begin{equation}\label{eq:F} 
		F^n_{2}(u,\sqrt{2}U)_T -F^n_{1}(u,\sqrt{2}U)_T-	F^n_{2}(u,U)_T +F^n_{1}(u,U)_T  =o^\uc(\sqrt{\Den})
	\end{equation}
	(all terms that are not $o^\uc(\sqrt{\Den})$ cancel each other perfectly), we arrive at 
	\begin{equation}\label{eq:R}\begin{split}
			R^n_T(u,U)	&=\frac{\wh R^n_T(u,U)}{\sqrt{\wh{\avar}^n_T(u,U)}}+o^\uc(1/\sqrt{k_n\lvert\calt\rvert}),\quad\text{where} \\
			\wh R^n_T(u,U)&=\sqrt{k_n\lvert\calt\rvert/2}\bigl (E^n_2(u,\sqrt{2} U)_T - E^n_1(u,\sqrt{2} U)_T-E^n_2(u,U)_T+E^n_1(u, U)_T \bigr).
	\end{split}\end{equation}
	
	Next, we want to derive a simpler expression for $E^n_1(u,U)_T$ and $E^n_2(u,U)_T$. To this end, we define
	\begin{align*}
		\wt \L^{n,1}_1(u,U)_{t,T}&=  \exp\biggl( -m_2^2U_n^2 \int_\R \beta'_{2,t-k_n\Den,T+k_n\Den}(u)_k dk+iU\al^V_{t_n}\sqrt{\Den}-\frac12U^2(\si^V_{t_n})^2\\
		&\quad+\Den\vp_{t_n}^V(U_n) -\frac12iU^3(\si^V_{t_n})^2\si^{\si^V}_{t_n}\sqrt{\Den}\biggr),\\
		\wt \L^{n,1}_2(u,U)_{t,T}&=  \exp\biggl( -m_2^2U_n^2 \int_\R \beta'_{2,t-k_n\Den,T+k_n\Den}(u)_k dk+2iU\al^V_{t_n}\sqrt{\Den}-U^2(\si^V_{t_n})^2\\
		&\quad +2\Den\vp_{t_n}^V(U_n)-2iU^3(\si^V_{t_n})^2\si^{\si^V}_{t_n}\sqrt{\Den}\biggr).
	\end{align*}
	By \eqref{eq:diffbeta}, we have $\ov \L^{n,1}_1(u,U)_{t,T}-\wt \L^{n,1}_1(u,U)_{t,T}=o^\uc(\sqrt{k_n\Den})$ and $\ov \L^{n,1}_2(u,U)_{t,T}-\wt \L^{n,1}_2(u,U)_{t,T}=o^\uc(\sqrt{k_n\Den})$. Together with the fact that the term corresponding to $t$ in the sum defining $E^n_1(u,U)_T$ (resp.,  $E^n_2(u,U)_T$) is $\ov\calf_{t+\Den}$-measurable with a zero $\ov\calf_{t-k_n\Den}$-conditional expectation, it follows that replacing the denominator in this sum by $\wt \L^{n,1}_1(u,U)_{t,T}+\ov \L^{n,2}_1(u,U)_{t,T}$ (resp., $\wt \L^{n,1}_2(u,U)_{t,T}+\ov \L^{n,2}_2(u,U)_{t,T}$) results in an error of order $o^\uc(\lvert\calt\rvert^{-1/2}\sqrt{k_n\Den})$, which is $o^\uc(\sqrt{\Den})$ by \eqref{eq:rates} and therefore negligible. In other words,
	\begin{align*}
		E^n_1(u,U)_T 	&=-\frac{2U^{-2}}{\lvert\calt\rvert} \sum_{t\in\calt} \Re \biggl(   \frac{\wh \L^n_1(u,U)_{t,T}-\ov \L^n_1(u,U)_{t,T}}{ \wt \L^{n,1}_1(u,U)_{t,T}+\ov \L^{n,2}_1(u,U)_{t,T}+\ov \L^{n,3}_1(u,U)_{t,T}}\biggr) \\
		&\quad+ o^\uc(\sqrt{\Den})\\
		&=-\frac{2U^{-2}}{\lvert\calt\rvert} \sum_{t\in\calt} \Re \biggl(   \frac{\wh \L^n_1(u,U)_{t,T}-\ov \L^n_1(u,U)_{t,T}}{ \wt \L^{n,1}_1(u,U)_{t,T}}\biggr)\\
		&\quad-\frac{2U^{-2}}{\lvert\calt\rvert} \sum_{t\in\calt} \Re \biggl(   \frac{\wh \L^n_1(u,U)_{t,T}-\ov \L^n_1(u,U)_{t,T}}{ \wt \L^{n,1}_1(u,U)_{t,T}}\ov \L^{n,2}_1(u,U)_{t,T} \biggr)+ o^\uc(\sqrt{\Den}).
	\end{align*}
	For the last step, we used the fact that $\wh \L^n_1(u,U)_{t,T}-\ov \L^n_1(u,U)_{t,T}= O^\uc(k_n^{-1/2})$ and $\ov \L^{n,3}_1(u,U)_{t,T}=O^\uc(\sqrt{k_n\Den}\Den^{1-r/2})$.
	
	Consider the   term  in the last line of the previous display. 
	As $\wh \L^n_1(u,U)_{t,T}-\ov \L^n_1(u,U)_{t,T}= O^\uc(k_n^{-1/2})$ and $\ov \L^{n,2}_1(u,U)_{t,T}=O^{\uc}(\sqrt{k_n\Den})$, this term is $O^\uc(\sqrt{\Den})$, which means we can make any additional $o^\uc(1)$-modification we want. In particular, we can take conditional expectation with respect to $\E_{t_n}$ and omit all but the dominating parts of  $\wh \L^n_1(u,U)_{t,T}-\ov \L^n_1(u,U)_{t,T}$ and $\ov \L^{n,2}_1(u,U)_{t,T}$, which yields the expression
	\begin{align*}
		\frac{2}{\lvert\calt\rvert k_n^2} \sum_{t\in\calt} \si^V_{t_n}\sum_{i,j=1}^{k_n} \E_{t_n}\Bigl[(\si^V_{t^n_i}-\si^V_{t_n})  \bigl( \cos(U_n\Delta^n_j V_t)- \E_{t^n_j}[\cos(U_n\Delta^n_j V_t)]\bigr)\Bigr]. 
	\end{align*}
	As we only need to keep the dominating part of $\si^V_{t^n_i}-\si^V_{t_n}$, there is no loss of generality to assume that $\si^V$ is a continuous  martingale. Then both $\si^V_{t^n_i}-\si^V_{t_n}$ and $\cos(U_n\Delta^n_j V_t)- \E_{t^n_j}[\cos(U_n\Delta^n_j V_t)]$ are martingale increments and the $\E_{t_n}$-conditional expectation is only nonzero provided that $j\leq i$. Decomposing $\si^V_{t^n_i}-\si^V_{t_n}=(\si^V_{t^n_i}-\si^V_{t^n_{j-1}})+(\si^V_{t^n_{j-1}}-\si^V_{t^n_j})+(\si^V_{t^n_j}-\si^V_{t_n})$, we further see that only the middle term contributes (to remove the contribution of $\si^V_{t^n_i}-\si^V_{t^n_{j-1}}$, condition on $\calf_{t^n_{j-1}}$ first; to remove the contribution of $\si^V_{t^n_j}-\si^V_{t_n}$, condition on $\calf_{t^n_{j}}$ first). Since this middle term is $O(\sqrt{\Den})$, what remains from the expression of the previous display is 
	\begin{equation*}
		\frac{2}{\lvert\calt\rvert k_n^2} \sum_{t\in\calt} \si^V_{t_n}\sum_{j=1}^{k_n} (k_n-j+1)\E_{t_n}\Bigl[\Delta^n_j \si^V_t  \bigl( \cos(U_n\Delta^n_j V_t)- \E_{t^n_j}[\cos(U_n\Delta^n_j V_t)]\bigr)\Bigr]. 
	\end{equation*}
	To keep notation simple, we further assume that $\Delta^n_j \si^V_t=\int_{t^n_i}^{t^n_{i-1}} \si^{\si^V}_s dW^\P_s$. The case where $\si^V$   is driven by several $\P$-Brownian motions can be treated analogously. Then, because the expression in the previous display is already $O^\uc(\sqrt{\Den})$, we can freeze $\si^{\si^V}_s$ (and also $\si^V_s$) at $s=t_n$, which leaves us with
	\begin{align*}
		&\frac{2}{\lvert\calt\rvert k_n^2} \sum_{t\in\calt} \si^V_{t_n}\sum_{j=1}^{k_n}(k_n-j+1) \\
		&\qquad\times\E_{t_n}\Bigl[\si^{\si^V}_{t_n} \Delta^n_j W^\P_t  \bigl( \cos(U_n\si^V_{t_n}\Delta^n_j W^\P_t)- \E_{t^n_j}[\cos(U_n\si^V_{t_n}\Delta^n_j W^\P_t)]\bigr)\Bigr]\\
		&\quad=\frac{2}{\lvert\calt\rvert k_n^2} \sum_{t\in\calt} \si^V_{t_n}\sum_{j=1}^{k_n}(k_n-j+1) \E_{t_n}\Bigl[\si^{\si^V}_{t_n} \Delta^n_j W^\P_t   \cos(U_n\si^V_{t_n}\Delta^n_j W^\P_t)\Bigr].
	\end{align*}
	The last $\E_{t_n}$-term is the conditional expectation of an odd function a centered Gaussian distribution and therefore identically zero. All in all, we have shown that
	\begin{align*}
		E^n_1(u,U)_T 	&= -\frac{2U^{-2}}{\lvert\calt\rvert} \sum_{t\in\calt} \Re \biggl(   \frac{\wh \L^n_1(u,U)_{t,T}-\ov \L^n_1(u,U)_{t,T}}{ \wt \L^{n,1}_1(u,U)_{t,T}}\biggr)+o^\uc(\sqrt{\Den})\\
		&=-\frac{2U^{-2}}{k_n\lvert\calt\rvert} \sum_{t\in\calt} \sum_{i=1}^{k_n}  \Re\biggl(\frac{e^{iU_n \Delta^n_i \wh V_{t,T}(u)} -\E[e^{iU_n \Delta^n_i \wh V_{t,T}(u)}\mid \ov \calf_{t^n_i}]}{\wt \L^{n,1}_1(u,U)_{t,T}}\biggr)+o^\uc(\sqrt{\Den}).
	\end{align*}
	In the last line,  the numerator for two different values of $i$, say, $i_1$ and $i_2$ becomes conditionally independent as soon as $\lvert i_1-i_2\rvert \geq2$. Hence, the last term is of order $O^\uc(1/\sqrt{k_n\lvert \calt\rvert})$, and we can make any $o^\uc(1)$ modification that we want. For example, we can replace the denominator by $e^{-U^2(\si^V_{t_n})^2/2}$ and remove everything but the diffusive component of $\Delta^n_i \wh V_{t,T}(u)$, which we denote by $\Delta^n_i V^c_t = \int_{t^n_i}^{t^n_{i-1}} \si^{V}_s dW^\P_s$. As a result,
	\begin{equation}\label{eq:E}\begin{split}
			E^n_1(u,U)_T &= -\frac{2U^{-2}}{k_n\lvert\calt\rvert} \sum_{t\in\calt} \sum_{i=1}^{k_n} e^{\frac12 U^2(\si^V_{t_n})^2}  \bigl  (\cos(U_n \Delta^n_i V^c_t) -\E_{t^n_i}[\cos(U_n \Delta^n_i   V^c_t)] \bigr)\\
			&\quad+o^\uc(1/\sqrt{k_n\lvert \calt\rvert}).
		\end{split}
	\end{equation}
	
	Analogously, one can show that
	\begin{equation}\label{eq:E2}\begin{split}
			E^n_2(u,U)_T &= -\frac{4U^{-2}}{k_n\lvert\calt\rvert} \sum_{t\in\calt} \sum_{i=1}^{\lfloor k_n/2\rfloor}e^{  U^2(\si^V_{t_n})^2}  \bigl  (\cos(U_n (\Delta^n_{2i-1} V^c_t+\Delta^n_{2i} V^c_t))\\
			&\quad -\E_{t^n_{2i}}[\cos(U_n (\Delta^n_{2i-1} V^c_t+\Delta^n_{2i} V^c_t))] \bigr) +o^\uc(1/\sqrt{k_n\lvert \calt\rvert}).
		\end{split}
	\end{equation}
	At the same time, it is easy to see that we can modify \eqref{eq:E} and write 
	\begin{equation}\label{eq:E1}\begin{split}
			E^n_1(u,U)_T &= -\frac{2U^{-2}}{k_n\lvert\calt\rvert} \sum_{t\in\calt} \sum_{i=1}^{\lfloor k_n/2\rfloor} e^{\frac12 U^2(\si^V_{t_n})^2}  \bigl  (\cos(U_n \Delta^n_{2i-1} V^c_t)+\cos(U_n \Delta^n_{2i} V^c_t) \\
			&\quad-\E_{t^n_{2i}}[\cos(U_n \Delta^n_{2i-1} V^c_t)+\cos(U_n \Delta^n_{2i} V^c_t)] \bigr) +o^\uc(1/\sqrt{k_n\lvert \calt\rvert}).
		\end{split}
	\end{equation}
	A straightforward (but tedious) application of Theorem 2.2.15 of \cite{JP12} therefore shows that
	\begin{equation}\label{eq:CLT-E} 
		\sqrt{k_n\lvert \calt\rvert/2}\begin{pmatrix} E^n_1(u,U)_T  \\ E^n_2(u,U)_T\\ E^n_1(u,\sqrt{2}U)_T \\ E^n_2(u,\sqrt{2}U)_T \end{pmatrix} \limst \calv^{1/2}\calz,
	\end{equation}
	where $\calz$ is a  four-dimensional vector of independent standard normal random variables, defined on an extension of $(\ov\Om,\ov\calf,\ov\P)$ and independent from it, and $\calv=(\calv_{ij})_{i,j=1}^4$ is given by $\calv_{ij}= \tau^{-1}U^{-4} \calv^{(1)}_{ij}\int_0^\tau \sinh^2(\calv^{(2)}_{ij} U^2(\si^V_s)^2) ds$,
	where
	\begin{equation}\label{eq:Vs} 
		\calv^{(1)}=\begin{pmatrix} 4 & 8&2 &4\\8&8&4&4\\ 2 & 4&1&2 \\ 4 & 4&2&2\end{pmatrix}, \qquad \calv^{(2)}=\begin{pmatrix} \frac12 & \frac12&\frac{1}{\sqrt{2}} &\frac{1}{\sqrt{2}}\\[3pt] \frac12&1&\frac{1}{\sqrt{2}}&\sqrt{2}\\[3pt]  \frac{1}{\sqrt{2}} & \frac{1}{\sqrt{2}}&1&1 \\[3pt] \frac{1}{\sqrt{2}}&\sqrt{2}& 1&2\end{pmatrix}.
	\end{equation}
	By \eqref{eq:R}, it follows that 
	\begin{equation}\label{eq:R-conv} 
		\wh R^n_T(u,U) \limst N(0,a'\calv a),\quad\text{where } a = (1,-1,-1,1)'.
	\end{equation}
	
	To conclude the proof under the null hypothesis, it remains to show that $\wh\avar^n_T(u,U)\limp a'\calv a$. Since we are only interested in a limit in probability, only the leading order terms in $\wh\avar^n_T(u,U)$ matter. As
	\begin{align*}
		\wt \calz^n_i(u,U)_{t,T}&= \begin{pmatrix} \frac12(\cos(U_n\Delta^n_{2i-1} V^c_t) + \cos(U_n\Delta^n_{2i} V^c_t))\\ \frac12(\sin(U_n\Delta^n_{2i-1} V^c_t) + \sin(U_n\Delta^n_{2i} V^c_t))\\ \cos(U_n(\Delta^n_{2i-1} V^c_t+\Delta^n_{2i} V^c_t)) \\ \sin(U_n(\Delta^n_{2i-1} V^c_t+\Delta^n_{2i} V^c_t))\end{pmatrix}+o^\uc(1),
	\end{align*}
	we obtain
	\begin{align*}
		(\avar^{n,0}_{t,T}(u,U))_{ij}&=\calc^{(1)}_{ij}e^{-\calc^{(2)}_{ij} U^2(\si^V_{t_n})^2}\sinh^2(\calc^{(3)}_{ij}U^2(\si^V_{t_n})^2) + o^\uc(1) ,\quad i,j=1,\dots,8,\\ \avar^{n,1}_{t,T}(u,U)&=o^\uc(1),
	\end{align*}
	where $\calc^{(1)}_{ij}=\calc^{(2)}_{ij}=\calc^{(3)}_{ij}=0$ whenever $i$ or $j$ is even and $\ov\calc^{(1)}=(\calc^{(1)}_{2i-1,2j-1})_{i,j=1}^4$, $\ov\calc^{(2)}=(\calc^{(2)}_{2i-1,2j-1})_{i,j=1}^4$ and $\ov\calc^{(3)}=(\calc^{(3)}_{2i-1,2j-1})_{i,j=1}^4$ are given by
	\[ \ov\calc^{(1)} = \begin{pmatrix} 1 & 2&1 &2\\2&2&2&2\\ 1 & 2&1&2 \\ 2 & 2&2&2\end{pmatrix},\quad \ov\calc^{(2)}=\begin{pmatrix} 1 & \frac32&\frac32 &\frac52\\[3pt] \frac32&2&2&3\\[3pt] \frac32 & 2&2&3 \\[3pt] \frac52 & 3&3&4\end{pmatrix},\quad \ov \calc^{(3)}=\calv^{(2)},  \]
	respectively. Moreover, we have
	\[ \wt\caly^n_{t,T}(u,U)=(-e^{\frac12 U^2(\si^V_{t_n})^2},0,e^{U^2(\si^V_{t_n})^2},0)'+o^\uc(1),\]
	which shows   
	\begin{equation}\label{eq:avar} 
		\wh\avar^n_{T}(u,U) \limp   a'\calv a,
	\end{equation}
	completing the proof of the theorem under the null hypothesis.
	
	Turning to the alternative hypothesis, we note that the majority of our previous analysis of $R^n_T(u,U)$ remains valid under \ref{ass:H1}. The only exception is \eqref{eq:F}, which was derived assuming \ref{ass:H0} and no longer holds under \ref{ass:H1}. Indeed, under the alternative hypothesis, we need to replace \eqref{eq:Fn} and \eqref{eq:Fn2}  by 
	\begin{equation}\label{eq:Fn-alt}\begin{split}
			F^n_1(u,U)_T&	=-\frac{2U^{-2}}{\lvert \calt\rvert}\sum_{t\in\calt} \biggl(\log\biggl(\frac1{k_n\Den} \int_0^{k_n\Den}\exp\biggl(-m_2^2U_n^2\int_\R \beta'_{2,t-s,T+s}(u)_k dk\biggr) ds\biggr)\\
			&\quad-\frac{U^2}{2} (\si^V_{t_n})^2+ \Den\vp^V_{t_n}(U_n)  \biggr)+o^\uc(\sqrt{\Den})\end{split}\raisetag{-2.5\baselineskip}
	\end{equation}
	and
	\begin{equation}\label{eq:Fn2-alt}\begin{split}
			F^n_2(u,U)_T&=-\frac{2U^{-2}}{\lvert \calt\rvert}\sum_{t\in\calt} \biggl(\log\biggl(\frac1{k_n\Den} \int_0^{k_n\Den}\exp\biggl(-m_2^2U_n^2\int_\R \beta'_{2,t-s,T+s}(u)_k dk\biggr) ds\biggr)\\
			&\quad- {U^2} (\si^V_{t_n})^2 + 2\Den\vp^V_{t_n}(U_n)  \biggr)+o^\uc(\sqrt{\Den}),\\
		\end{split}\raisetag{-2.5\baselineskip}\end{equation}
	respectively. This implies that instead of \eqref{eq:F}, we now have
	\begin{align*} 
		&F^n_{2}(u,\sqrt{2}U)_T -F^n_{1}(u,\sqrt{2}U)_T-F^n_{2}(u,U)_T +F^n_{1}(u,U)_T \\
		&\quad=\frac{U^{-2}}{\lvert \calt\rvert}\sum_{t\in\calt} \Re (2\Den\vp^V_{t_n}(U_n) -\Den\vp^V_{t_n}(2U_n))+o^\uc(\sqrt{\Den}).
	\end{align*}
	For any $t$, 
	\begin{equation*}
		\Re(\vp^V_t(U_n))=	\int_{[-1,1]} (\cos(U_nz)-1)F'_t(dz) + O^\uc(\Den^{-\rho\beta_t/2}),
	\end{equation*}
	which shows that the  contribution of $F''_t(dz)$ to \eqref{eq:F} is $O^\uc(\Den^{1-\rho\beta^\ast/2})$. Therefore, its total contribution to $R^n_T(u,U)$ under \ref{ass:H1} is $O^\uc(\sqrt{k_n\lvert\calt}\rvert\Den^{1-\rho\beta^\ast /2})=O^\uc(\Den^{(1-\rho\beta^\ast)/2})$. Regarding the contribution of $F'_t(dz)$, we note that for any $t\in A_\tau$ such that $\beta_t>0$, 
	\begin{align*}
		\Re(\vp^V_t(U_n))&= \int_{-\eps^-_t}^{\eps^+_t} \frac{\cos(U_n z)-1}{\lvert z\rvert^{1+\beta_t}} (\log \lvert z\rvert^{-1})^{\beta'_t}(c_t^+\bone_{\{z>0\}}+c_t^-\bone_{\{z<0\}}) dz+ O^\uc(\Den^{-\rho\beta_t/2})\\
		&=-U_n^{\beta_t}\int_{-U_n\eps^-_t}^{U_n\eps^+_t} \frac{1-\cos z}{\lvert z\rvert^{1+\beta_t}} (\log (U_n/\lvert z\rvert))^{\beta'_t}(c_t^+\bone_{\{z>0\}}+c_t^-\bone_{\{z<0\}}) dz\\
		&\quad+ O^\uc(\Den^{-\rho\beta_t/2})\\
		&=-U_n^{\beta_t}\int_\R \frac{1-\cos z}{\lvert z\rvert^{1+\beta_t}} \lvert \log (U_n/\lvert z\rvert)\rvert^{\beta'_t}(c_t^+\bone_{\{z>0\}}+c_t^-\bone_{\{z<0\}}) dz\\
		&\quad+ O^\uc(\Den^{-\rho\beta_t/2}).
	\end{align*}
	Therefore, under \ref{ass:H1},
	\begin{align*}
		&\sqrt{k_n\lvert\calt\rvert} \bigl(F^n_{2}(u,\sqrt{2}U)_T -F^n_{1}(u,\sqrt{2}U)_T-F^n_{2}(u,U)_T +F^n_{1}(u,U)_T\bigr) \\
		&\quad=	\frac{\sqrt{k_n\lvert\calt\rvert}}{\lvert \calt\rvert}\sum_{t\in\calt} U^{\beta_{t_n}-2}\Den^{1-\beta_{t_n}/2} \int_\R \frac{1-\cos z}{\lvert z\rvert^{1+\beta_{t_n}}}(c_{t_n}^+\bone_{\{z>0\}}+c_{t_n}^-\bone_{\{z<0\}})\\
		&\qquad\times\bigl[2^{\beta_{t_n}} \lvert\log ( 2U_n/\lvert z\rvert)\rvert^{\beta'_{t_n}}-2\lvert\log (U_n/\lvert z\rvert)\rvert^{\beta'_{t_n}}\bigr] dz+O^\uc(\Den^{(1-\rho\beta^\ast)/2}).
	\end{align*}
	There is no loss of generality to assume that $\rho\beta^\ast >1$. In this case, if we choose $\eps^\ast<\rho\beta^\ast-1$ and restrict the previous sum over $t$ to times $t\in\calt$ for which $\beta_{t_n}>1+\eps^\ast$, what we omit are terms that are $O^\uc(\Den^{-\eps^\ast/2}(\log \Den^{-1})^{\beta'_{t_n}})=o^\uc(\Den^{(1-\rho\beta^\ast)/2})$ and hence negligible. As a result,
	\begin{equation}\label{eq:aux}\begin{split}
			&\sqrt{k_n\lvert\calt\rvert} \bigl(F^n_{2}(u,\sqrt{2}U)_T -F^n_{1}(u,\sqrt{2}U)_T-F^n_{2}(u,U)_T +F^n_{1}(u,U)_T\bigr) \\
			&\quad=  \frac{\sqrt{k_n\lvert\calt\rvert}}{\lvert \calt\rvert}\sum_{t\in\calt} \bone_{\{\beta_{t_n}>1+\eps^\ast\}}U^{\beta_{t_n}-2}\Den^{1-\beta_{t_n}/2} \int_\R \frac{1-\cos z}{\lvert z\rvert^{1+\beta_{t_n}}}(c_{t_n}^+\bone_{\{z>0\}}+c_{t_n}^-\bone_{\{z<0\}})\\
			&\qquad\times\bigl[2^{\beta_{t_n}} \lvert\log ( 2U_n/\lvert z\rvert)\rvert^{\beta'_{t_n}}-2\lvert\log (U_n/\lvert z\rvert)\rvert^{\beta'_{t_n}}\bigr] dz  +O^\uc(\Den^{(1-\rho\beta^\ast)/2}).
	\end{split}\!\!\!\raisetag{-2.5\baselineskip}\end{equation}
	
	Next,  we claim that the $dz$-integral is eventually nonnegative for large $n$, uniformly for $t_n$ such that $\beta_{t_n}>1+\eps^\ast$. This is clear if $\beta'_{t_n}=0$ or if $c_{t_n}^+=c_{t_n}^-=0$. If all other cases,   write
	\begin{align*}
		&\int_\R \frac{1-\cos z}{\lvert z\rvert^{1+\beta_{t_n}}}(c_{t_n}^+\bone_{\{z>0\}}+c_{t_n}^-\bone_{\{z<0\}}) \bigl[2^{\beta_{t_n}} \lvert\log( 2U_n/\lvert z\rvert)\rvert^{\beta'_{t_n}}-2\lvert\log (U_n/\lvert z\rvert)\rvert^{\beta'_{t_n}}\bigr] dz  \\
		&\quad=\int_\R \frac{1-\cos z}{\lvert z\rvert^{1+\beta_{t_n}}}(c_{t_n}^+\bone_{\{z>0\}}+c_{t_n}^-\bone_{\{z<0\}}) (2^{\beta_{t_n}} -2) \lvert\log (U_n/\lvert z\rvert)\rvert^{\beta'_{t_n}} dz\\
		&\qquad+2^{\beta_{t_n}}\int_\R \frac{1-\cos z}{\lvert z\rvert^{1+\beta_{t_n}}}(c_{t_n}^+\bone_{\{z>0\}}+c_{t_n}^-\bone_{\{z<0\}}) \bigl[\lvert\log ( 2U_n/\lvert z\rvert)\rvert^{\beta'_{t_n}}-\lvert\log (U_n/\lvert z\rvert) \rvert^{\beta'_{t_n}}\bigr]  dz
	\end{align*}
	and note that the first term on the right-hand side is  
	\begin{align*}
		&\geq \int_{-\sqrt{U_n}}^{\sqrt{U_n}} \frac{1-\cos z}{\lvert z\rvert^{1+\beta_{t_n}}}(c_{t_n}^+\bone_{\{z>0\}}+c_{t_n}^-\bone_{\{z<0\}}) (2^{\beta_{t_n}} -2) (\log U_n - \log \lvert z\rvert)^{\beta'_{t_n}} dz \\
		&\geq 2^{-(\beta'_{t_n})_+}(2^{\beta_{t_n}} -2)(\log U_n)^{\beta'_{t_n}}\int_{-\sqrt{U_n}}^{\sqrt{U_n}} \frac{1-\cos z}{\lvert z\rvert^{1+\beta_{t_n}}}(c_{t_n}^+\bone_{\{z>0\}}+c_{t_n}^-\bone_{\{z<0\}})    dz\\
		&=2^{-(\beta'_{t_n})_+}(2^{\beta_{t_n}} -2)(\log U_n)^{\beta'_{t_n}}\int_\R \frac{1-\cos z}{\lvert z\rvert^{1+\beta_{t_n}}}(c_{t_n}^+\bone_{\{z>0\}}+c_{t_n}^-\bone_{\{z<0\}})    dz + O^\uc(\Den^{1/4}),
	\end{align*}
	while the second one is, in absolute value,
	\begin{align*}
		&\leq 2^{\beta_{t_n}}\int_{-\sqrt{U_n}}^{\sqrt{U_n}} \frac{1-\cos z}{\lvert z\rvert^{1+\beta_{t_n}}}(c_{t_n}^+\bone_{\{z>0\}}+c_{t_n}^-\bone_{\{z<0\}}) \bigl\lvert(\log ( 2U_n/\lvert z\rvert))^{\beta'_{t_n}}-(\log (U_n/\lvert z\rvert) )^{\beta'_{t_n}}\bigr\rvert  dz\\
		&\quad + O^\uc(\Den^{1/4})\\
		&\leq 2^{\beta_{t_n}}\lvert\beta'_{t_n}\rvert\int_{-\sqrt{U_n}}^{\sqrt{U_n}}\frac{1-\cos z}{\lvert z\rvert^{1+\beta_{t_n}}}(c_{t_n}^+\bone_{\{z>0\}}+c_{t_n}^-\bone_{\{z<0\}}) (\log(2^{\bone_{\{\beta'_{t_n}\geq1\}}}U_n/\lvert z\rvert))^{\beta'_{t_n}-1}\\
		&\quad\times (\log (2U_n/\lvert z\rvert)-\log(U_n/\lvert z\rvert))^{ \beta'_{t_n}} dz+ O^\uc(\Den^{1/4})
	\end{align*} 
	by the mean-value theorem. As this is $O^\uc((\log U_n)^{\beta'_{t_n}-1})$, the first integral dominates and our claim is proved. In the following, we assume that $n$ is large enough such that the second integral is at most one half of the first one in absolute value.
	
	In order to conclude, note that $\beta$, $c^+$ and $c^-$ are c\`agl\`ad and $\beta^\ast>1$ by assumption. Therefore, for $n$ large there exists (a possibly random) interval $I=[\tau_1,\tau_2]\subseteq[0,\tau]$ and some small $\delta>0$ such that $\Leb(I)>0$, $c_I = \inf_{t\in [\tau_1-\delta,\tau_2]} (c^+_t + c^-_t) >0$ and $\beta_I = \inf_{t\in [\tau_1-\delta,\tau_2]} \beta_t > \rho\beta^\ast$. Note that $\beta_I>1+\eps^\ast$ by the definition of $\eps^\ast$. Hence,  dividing both sides of \eqref{eq:aux} by $\Den^{(1-\rho\beta^\ast)/2}$, we have for sufficiently large $n$ that (recall that $k_n\lvert \calt\rvert \sim \tau/\Den$)
	\begin{align*}
		&\frac{\sqrt{k_n\lvert\calt\rvert}}{\Den^{(1-\rho\beta^\ast)/2}} \bigl(F^n_{2}(u,\sqrt{2}U)_T -F^n_{1}(u,\sqrt{2}U)_T-F^n_{2}(u,U)_T +F^n_{1}(u,U)_T\bigr) \\
		&~ \geq\frac{\sqrt{\tau}(2^{\beta_I} -2)c_I}{2^{1+C} }(U^{-1}\vee 1) \Den^{(\rho\beta^\ast-\beta_{I})/2}(\log U_n)^{-C} \int_0^\infty \frac{1-\cos z}{  z ^2 \vee z^3}  dz\sum_{t\in\calt \cap I} \frac{1}{\lvert\calt\rvert}  +O^\uc(1)\\
		&~\sim\frac{(2^{\beta_I} -2)c_I}{2^{1+C}\sqrt{\tau}  }(U^{-1}\vee 1) \Den^{(\rho\beta^\ast-\beta_{I})/2}(\log U_n)^{-C} \int_0^\infty \frac{1-\cos z}{  z ^2 \vee z^3}  dz, 
	\end{align*}
	which diverges in probability to $+\infty$. This completes the proof of the theorem under \ref{ass:H1}.
\end{proof}

	\begin{acks}[Acknowledgments]
		The authors would like to thank an anonymous referee, an Associate
		Editor and the Editor for their comments that improved the
		quality of this paper.
	\end{acks}

	\bibliographystyle{abbrvnat}
	\bibliography{ovv}

\begin{thebibliography}{33}
\providecommand{\natexlab}[1]{#1}
\providecommand{\url}[1]{\texttt{#1}}
\expandafter\ifx\csname urlstyle\endcsname\relax
  \providecommand{\doi}[1]{doi: #1}\else
  \providecommand{\doi}{doi: \begingroup \urlstyle{rm}\Url}\fi

\bibitem[A\"{\i}t-Sahalia and Jacod(2011)]{AJ11}
Y.~A\"{\i}t-Sahalia and J.~Jacod.
\newblock Testing whether jumps have finite or infinite activity.
\newblock \emph{Ann. Statist.}, 39\penalty0 (3):\penalty0 1689--1719, 2011.

\bibitem[A{\"i}t-Sahalia and Jacod(2014)]{AJ14}
Y.~A{\"i}t-Sahalia and J.~Jacod.
\newblock \emph{High-Frequency Financial Econometrics}.
\newblock Princeton University Press, Princeton, 2014.

\bibitem[A\"{\i}t-Sahalia et~al.(2017)A\"{\i}t-Sahalia, Fan, Laeven, Wang, and
  Yang]{AFLWY17}
Y.~A\"{\i}t-Sahalia, J.~Fan, R.~J.~A. Laeven, C.~D. Wang, and X.~Yang.
\newblock Estimation of the continuous and discontinuous leverage effects.
\newblock \emph{J. Amer. Statist. Assoc.}, 112\penalty0 (520):\penalty0
  1744--1758, 2017.

\bibitem[Andersen et~al.(2002)Andersen, Benzoni, and
  Lund]{andersen2002empirical}
T.~G. Andersen, L.~Benzoni, and J.~Lund.
\newblock An empirical investigation of continuous-time equity return models.
\newblock \emph{J. Finance}, 57\penalty0 (3):\penalty0 1239--1284, 2002.

\bibitem[Bentata and Cont(2012)]{bentata2012short}
A.~Bentata and R.~Cont.
\newblock Short-time asymptotics for marginal distributions of semimartingales.
\newblock \emph{arXiv preprint arXiv:1202.1302}, 2012.

\bibitem[Carr and Madan(2001)]{CM01}
P.~Carr and D.~Madan.
\newblock Optimal positioning in derivative securities.
\newblock \emph{Quant. Finance}, 1\penalty0 (1):\penalty0 19--37, 2001.

\bibitem[Chernov et~al.(2003)Chernov, Gallant, Ghysels, and
  Tauchen]{chernov2003alternative}
M.~Chernov, A.~R. Gallant, E.~Ghysels, and G.~Tauchen.
\newblock Alternative models for stock price dynamics.
\newblock \emph{J. Econometrics}, 116\penalty0 (1-2):\penalty0 225--257, 2003.

\bibitem[Chong and Todorov(2024)]{CT23_b}
C.~H. Chong and V.~Todorov.
\newblock Volatility of volatility and leverage effect from options.
\newblock \emph{J. Econometrics}, 240\penalty0 (1):\penalty0 105669, 2024.

\bibitem[Chong and Todorov(2025)]{CT22}
C.~H. Chong and V.~Todorov.
\newblock Short-time expansion of characteristic functions in a rough
  volatility setting with application.
\newblock \emph{Bernoulli}, 2025.
\newblock Forthcoming.

\bibitem[Clinet and Potiron(2021)]{clinet2021estimation}
S.~Clinet and Y.~Potiron.
\newblock Estimation for high-frequency data under parametric market
  microstructure noise.
\newblock \emph{Ann. Inst. Statist. Math.}, 73\penalty0 (4):\penalty0 649--669,
  2021.

\bibitem[Curato(2019)]{C19}
I.~V. Curato.
\newblock Estimation of the stochastic leverage effect using the {Fourier}
  transform method.
\newblock \emph{Stochastic Process. Appl.}, 129\penalty0 (9):\penalty0
  3207--3238, 2019.

\bibitem[Duffie et~al.(2000)Duffie, Pan, and Singleton]{DPS00}
D.~Duffie, J.~Pan, and K.~Singleton.
\newblock Transform analysis and asset pricing for affine jump-diffusions.
\newblock \emph{Econometrica}, 68\penalty0 (6):\penalty0 1343--1376, 2000.

\bibitem[Duffie et~al.(2003)Duffie, Filipovi{\'c}, and
  Schachermayer]{duffie2003affine}
D.~Duffie, D.~Filipovi{\'c}, and W.~Schachermayer.
\newblock Affine processes and applications in finance.
\newblock \emph{Ann. Appl. Probab.}, 13\penalty0 (3):\penalty0 984--1053, 2003.

\bibitem[Figueroa-L{\'o}pez and Houdr{\'e}(2009)]{figueroa2009small}
J.~E. Figueroa-L{\'o}pez and C.~Houdr{\'e}.
\newblock Small-time expansions for the transition distributions of {L}{\'e}vy
  processes.
\newblock \emph{Stochastic Process. Appl.}, 119\penalty0 (11):\penalty0
  3862--3889, 2009.

\bibitem[Figueroa-L{\'o}pez et~al.(2012)Figueroa-L{\'o}pez, Gong, and
  Houdr{\'e}]{figueroa2012small}
J.~E. Figueroa-L{\'o}pez, R.~Gong, and C.~Houdr{\'e}.
\newblock Small-time expansions of the distributions, densities, and option
  prices of stochastic volatility models with {L}{\'e}vy jumps.
\newblock \emph{Stochastic Process. Appl.}, 122\penalty0 (4):\penalty0
  1808--1839, 2012.

\bibitem[Gatheral et~al.(2018)Gatheral, Jaisson, and
  Rosenbaum]{gatheral2018volatility}
J.~Gatheral, T.~Jaisson, and M.~Rosenbaum.
\newblock Volatility is rough.
\newblock \emph{Quant. Finance}, 18\penalty0 (6):\penalty0 933--949, 2018.

\bibitem[Jacod and Protter(2012)]{JP12}
J.~Jacod and P.~Protter.
\newblock \emph{Discretization of Processes}, volume~67 of \emph{Stochastic
  Modelling and Applied Probability}.
\newblock Springer, Heidelberg, 2012.

\bibitem[Jacod and Reiss(2014)]{JR14}
J.~Jacod and M.~Reiss.
\newblock A remark on the rates of convergence for integrated volatility
  estimation in the presence of jumps.
\newblock \emph{Ann. Statist.}, 42\penalty0 (3):\penalty0 1131--1144, 2014.

\bibitem[Jacod and Shiryaev(2003)]{JS03}
J.~Jacod and A.~N. Shiryaev.
\newblock \emph{Limit Theorems for Stochastic Processes}, volume 288 of
  \emph{Grundlehren der mathematischen Wissenschaften [Fundamental Principles
  of Mathematical Sciences]}.
\newblock Springer-Verlag, Berlin, second edition, 2003.

\bibitem[Jacod and Todorov(2014)]{jacod2014efficient}
J.~Jacod and V.~Todorov.
\newblock Efficient estimation of integrated volatility in presence of infinite
  variation jumps.
\newblock \emph{Ann. Statist.}, 42\penalty0 (3):\penalty0 1029--1069, 2014.

\bibitem[Jacod and Todorov(2018)]{jacod2018limit}
J.~Jacod and V.~Todorov.
\newblock Limit theorems for integrated local empirical characteristic
  exponents from noisy high-frequency data with application to volatility and
  jump activity estimation.
\newblock \emph{Ann. Appl. Probab.}, 28\penalty0 (1):\penalty0 511--576, 2018.

\bibitem[Kalnina and Xiu(2017)]{KX17}
I.~Kalnina and D.~Xiu.
\newblock Nonparametric estimation of the leverage effect: a trade-off between
  robustness and efficiency.
\newblock \emph{J. Amer. Statist. Assoc.}, 112\penalty0 (517):\penalty0
  384--396, 2017.

\bibitem[Li et~al.(2022)Li, Liu, and Zhang]{li2022volatility}
Y.~Li, G.~Liu, and Z.~Zhang.
\newblock Volatility of volatility: Estimation and tests based on noisy high
  frequency data with jumps.
\newblock \emph{J. Econometrics}, 229\penalty0 (2):\penalty0 422--451, 2022.

\bibitem[Muhle-Karbe and Nutz(2011)]{MN11}
J.~Muhle-Karbe and M.~Nutz.
\newblock Small-time asymptotics of option prices and first absolute moments.
\newblock \emph{J. Appl. Probab.}, 48\penalty0 (4):\penalty0 1003--1020, 2011.

\bibitem[Qin and Todorov(2019)]{QT19}
L.~Qin and V.~Todorov.
\newblock Nonparametric implied {L}{\'e}vy densities.
\newblock \emph{Ann. Statist.}, 47\penalty0 (2):\penalty0 1025--1060, 2019.

\bibitem[R\"uschendorf and Woerner(2002)]{ruschendorf2002expansion}
L.~R\"uschendorf and J.~H.~C. Woerner.
\newblock Expansion of transition distributions of {L}\'evy processes in small
  time.
\newblock \emph{Bernoulli}, 8\penalty0 (1):\penalty0 81--96, 2002.

\bibitem[Sanfelici et~al.(2015)Sanfelici, Curato, and
  Mancino]{sanfelici2015high}
S.~Sanfelici, I.~V. Curato, and M.~E. Mancino.
\newblock High-frequency volatility of volatility estimation free from spot
  volatility estimates.
\newblock \emph{Quant. Finance}, 15\penalty0 (8):\penalty0 1331--1345, 2015.

\bibitem[Todorov(2019)]{T19}
V.~Todorov.
\newblock Nonparametric spot volatility from options.
\newblock \emph{Ann. Appl. Probab.}, 29\penalty0 (6):\penalty0 3590--3636,
  2019.

\bibitem[Todorov(2021)]{T21}
V.~Todorov.
\newblock Higher-order small time asymptotic expansion of {I}t{\^o}
  semimartingale characteristic function with application to estimation of
  leverage from options.
\newblock \emph{Stochastic Process. Appl.}, 142:\penalty0 671--705, 2021.

\bibitem[Todorov and Zhang(2021)]{todorov2021bias}
V.~Todorov and Y.~Zhang.
\newblock Bias reduction in spot volatility estimation from options.
\newblock \emph{J. Econometrics}, 2021.

\bibitem[Toscano et~al.(2022)Toscano, Livieri, Mancino, and
  Marmi]{toscano2022volatility}
G.~Toscano, G.~Livieri, M.~E. Mancino, and S.~Marmi.
\newblock Volatility of volatility estimation: {C}entral limit theorems for the
  {F}ourier transform estimator and empirical study of the daily time series
  stylized facts.
\newblock \emph{J. Financ. Economet.}, 2022.

\bibitem[Vetter(2015)]{vetter2015estimation}
M.~Vetter.
\newblock Estimation of integrated volatility of volatility with applications
  to goodness-of-fit testing.
\newblock \emph{Bernoulli}, 21\penalty0 (4):\penalty0 2393--2418, 2015.

\bibitem[Wang and Mykland(2014)]{WM14}
C.~D. Wang and P.~A. Mykland.
\newblock The estimation of leverage effect with high-frequency data.
\newblock \emph{J. Amer. Statist. Assoc.}, 109\penalty0 (505):\penalty0
  197--215, 2014.

\end{thebibliography}
	
	\newpage
	\begin{appendix}
		
		\section*{Proof of Technical Results}

		\begin{proof}[Proof of Proposition 7.1]  Recall that $x$ is a special deep It\^o semimartingale under Assumption~1, which means that we have (2.12) and (2.13). 
			Sometimes we want to single out the   martingale components, so we define
			\begin{equation}\label{eq:not2}
				\begin{split}
					&\theta^i_\cc(t)=(\theta(t,\bz_i)_{j_1,\dots,j_i})_{j_1,\dots,j_i\in\{1,\dots,d\}},	\quad	\theta_\mm(t,\bz_i)=(\theta(t,\bz_i)_{j_1,\dots,j_i})_{j_1,\dots,j_i\in\{1,\dots,d,d+2\}},	 \\
					&y_\cc(ds)=d\mathbb{W}_s,\quad	y_\mm(ds,dz)=(dW^{(1)}_s\delta_0(dz),\dots,dW^{(d)}_s\delta_0(dz), \wh\mu(ds,dz))^\top,\\
					&\by_\cc(d\bs_i)=y_\cc(ds_i)\cdots y_\cc(ds_1),\quad \by_\mm(d\bs_i,d\bz_i)=y_\mm(ds_i,dz_i)\cdots y_\mm(ds_1,dz_1)
				\end{split}  
			\end{equation}
			and denote the square and angle bracket of $y_\cc$ and $y_\mm$  by
			\begin{equation}\label{eq:QVy} \begin{split}
					[y_\cc](ds)&=\langle y_\cc\rangle(ds)=\mathrm{diag}\bigl(ds,\dots, ds\bigr),\\
					[y_\mm](ds,dz)&=\mathrm{diag}\bigl(ds\delta_0(dz),\dots, ds\delta_0(dz), \mu(ds,dz)\bigr),\\
					\langle y_\mm\rangle(ds,dz)&=\mathrm{diag}\bigl(ds\delta_0(dz),\dots, ds\delta_0(dz),\la(s,z) dsF(dz)\bigr),
				\end{split}
			\end{equation}
			which are $d\times d$-dimensional  and $(d+1)\times(d+1)$-dimensional measures, respectively. Under Assumption~1, we can employ a classical localization argument (cf.\ Lemma 3.4.5 in \cite{JP12}) and assume without loss of generality that $T_n=\infty$ for all $n\in\N$. 
			Furthermore, we can assume that (2.15) holds without $\wedge 1$. To simplify notation, we write $J(z)=J_1(z)$, $j(z)=j_1(z,v)$ and $\mathcal{J}(\bz_{N+1})=\mathcal{J}_1(\bz_{N+1})$ as well as $\ov F(dz)=J(z)F(dz)$ in the following.
			
			With this preparation and the notation
			\begin{equation*}
				\S^{(i)}_{t,T} = \{ (\bs_i,\bz_i) \in [0,\infty)^i \times (\R^{d'})^i: t<s_N<\dots<s_1<t+T \},
			\end{equation*}
			we can decompose an increment of $x$ into its $N+1$ layers as follows:
			\begin{equation*}
				x_{t+T}-x_t = \sum_{k=1}^{N} \int_{\S^{(k)}_{t,T}} \theta(t,\bz_k) \by(d\bs_k,d\bz_k) + \int_{\S^{(N+1)}_{t,T}} \theta(\bs_{N+1},\bz_{N+1}) \by(d\bs_{N+1},d\bz_{N+1}).
			\end{equation*}
			So writing $\ov x_{t,T}= \sum_{k=1}^N \int_{\S^{(k)}_{t,T}} \theta(t,\bz_k) \by(d\bs_k,d\bz_k)$ 
			and
			$	\ov\call_{t,T}(u)= \E_t[e^{iu_T\ov x_{t,T}}]$, 
			we  have that 
			\begin{equation}\label{eq:remain} 
				\begin{split}
					\lvert\call_{t,T}(u)-\ov \call_{t,T}(u)\rvert	&=\lvert\E_t[e^{iu_T(x_{t+T}-x_t)}-e^{iu_T \ov x_{t,T}}]\rvert\\
					& \leq u_T\E_t\biggl[\biggl\lvert\int_{\S^{(N+1)}_{t,T}} \theta(s_{N+1},\bz_{N+1}) \by(d\bs_{N+1},d\bz_{N+1})\biggr\rvert\biggr]\\
					& \leq u_T\E_t\biggl[\biggl\lvert\int_{\S^{(N+1)}_{t,T}} \theta(s_{N+1},\bz_{N+1}) \by(d\bs_{N+1},d\bz_{N+1})\biggr\rvert^2\biggr]^{1/2}\\
					&= O^\uc(T^{N/2}).
				\end{split}
			\end{equation}
			Informally speaking, the last step follows from  the fact that the size of an increment of an It\^o semimartingale is locally dominated by the Gaussian part. In other words, for each $\ell=1,\dots,N+1$, integration with respect to $y(ds_\ell,dz_\ell)$ gives a factor of $T^{1/2}$, leading to the last bound above.
			To obtain it rigorously, we first use the triangle inequality for the $L^2$-norm to break the last integral in \eqref{eq:remain} into a sum of scalar  integrals and then combine It\^o's isometry  with (2.16) and the bounds $\E[\lvert \int_t^{t+T} H_s ds\rvert^2]^{1/2} \leq \int_t^{t+T}\E[\lvert H_s\rvert^2]^{1/2} ds \leq\bigl( \int_t^{t+T}  \E[\lvert H_s\rvert^2]ds \bigr)^{1/2}$ (for $T\leq1$) and $\la(s,z)\leq AJ(z)$  to get 
			\begin{align*}
				&\E_t\biggl[\biggl\lvert\int_{\S^{(N+1)}_{t,T}} \theta(s_{N+1},\bz_{N+1})_{j_1,\dots,j_{N+1}} y_{j_{N+1}}(ds_{N+1},dz_{N+1})\cdots y_{j_1}(ds_1,dz_1)\biggr\rvert^2\biggr]\\
				&\quad \leq\int_{\S^{(N+1)}_{t,T}} \E_t[ \theta(s_{N+1},\bz_{N+1})_{j_1,\dots,j_{N+1}}^2] ds_{N+1}\cdots ds_1  \prod_{\ell: j_\ell \neq d+2} \delta_0(dz_{j_\ell})\prod_{\ell: j_\ell=d+2} A \ov F(dz_{j_\ell}) \\
				&\quad \leq\frac{T^{N+1}}{(N+1)!}\int_{(\R^{d'})^{N+1}} \mathcal{J}(\bz_{N+1})_{j_1,\dots,j_{N+1}}    \prod_{\ell: j_\ell \neq d+2} \delta_0(dz_{j_\ell})\prod_{\ell: j_\ell=d+2} A \ov F(dz_{j_\ell})\biggr),		
			\end{align*}
			which is $O(T^{N+1})$ by (2.17) and thus implies the last step in \eqref{eq:remain}. 
			
			Next, consider the main part, which is
			\[\Delta^n_i \ov\call_{t,T}(u)= A^{n,i}_{t,T}(u)-B^{n,i}_{t,T}(u) + C^{n,i}_{t,T}(u),\]
			where
			\begin{equation}\label{eq:ABC}\begin{split}
					A^{n,i}_{t,T}(u)	&=\E_{t^n_{i-1}}\biggl[\exp\biggl(iu_{T^n_{i-1}} \sum_{k=1}^N \int_{\S^{(k)}_{t^n_{i-1},T^n_{i-1}}} \theta(t^n_{i-1},\bz_k)\by(d\bs_k,d\bz_k)\biggr)\\
					&\quad\qquad\qquad \qquad-\exp\biggl(iu_{T^n_{i-1}}  \sum_{k=1}^N \int_{\S^{(k)}_{t^n_{i-1},T^n_{i-1}}} \theta(t^n_i,\bz_k)\by(d\bs_k,d\bz_k)\biggr) \biggr],\\
					B^{n,i}_{t,T}(u)	&=\E_{t^n_i}\biggl[\exp\biggl(iu_{T^n_{i-1}}  \sum_{k=1}^N \int_{\S^{(k)}_{t^n_{i},T^n_{i}}} \theta(t^n_i,\bz_k)\by(d\bs_k,d\bz_k)\biggr)\\
					& \qquad\qquad\qquad -\exp\biggl(iu_{T^n_{i-1}}  \sum_{k=1}^N \int_{\S^{(k)}_{t^n_{i-1},T^n_{i-1}}} \theta(t^n_i,\bz_k)\by(d\bs_k,d\bz_k)\biggr) \biggr],\\
					C^{n,i}_{t,T}(u)	&=\E_{t^n_i}\biggl[\exp\biggl(iu_{T^n_{i-1}}  \sum_{k=1}^N \int_{\S^{(k)}_{t^n_{i},T^n_{i}}} \theta(t^n_i,\bz_k)\by(d\bs_k,d\bz_k)\biggr)\\
					& \qquad\qquad\qquad -\exp\biggl(iu_{T^n_{i}}  \sum_{k=1}^N \int_{\S^{(k)}_{t^n_{i},T^n_{i}}} \theta(t^n_i,\bz_k)\by(d\bs_k,d\bz_k)\biggr)\biggr].
			\end{split}\end{equation}
			Let us start with the first term. Using the elementary inequality $$\lvert e^{ix}-1-ix\rvert = \sqrt{(1-\cos x)^2+ (x-\sin x)^2} \leq \frac12\lvert x\rvert^2,$$ we have
			\begin{equation}\label{eq:A} 
				\begin{split}
					&\Biggl\lvert	A^{n,i}_{t,T}(u)	-  \E_{t^n_{i-1}}\biggl[ iu_{T^n_{i-1}} \exp\biggl(iu_{T^n_{i-1}} \sum_{k=1}^N \int_{\S^{(k)}_{t^n_{i-1},T^n_{i-1}}} \theta(t^n_i,\bz_k)\by(d\bs_k,d\bz_k)\biggr)\\
					& \qquad\qquad\qquad\qquad\qquad \times\sum_{k=1}^N \int_{\S^{(k)}_{t^n_{i-1},T^n_{i-1}}} (\theta(t^n_{i-1},\bz_k)-\theta(t^n_{i},\bz_k))\by(d\bs_k,d\bz_k)\biggr]\Biggr\rvert\\
					&\quad\leq \frac12 \E_{t^n_{i-1}}\biggl[u^2_{T^n_{i-1}}\biggl(\sum_{k=1}^N \int_{\S^{(k)}_{t^n_{i-1},T^n_{i-1}}} (\theta(t^n_{i-1},\bz_k)-\theta(t^n_{i},\bz_k))\by(d\bs_k,d\bz_k)\biggr)^2\biggr].
				\end{split}
			\end{equation}
			Because $\theta(t,\bz_k)$ is an It\^o semimartingale, we have similarly to \eqref{eq:remain} that the last integral in the previous display is $O^\uc(\sqrt{\Den}T^{k/2})= O^\uc(\sqrt{\Den T})$. Thus, the last line above is $O^\uc(\Den)$. Furthermore, in the summation over $k$ in the second line of \eqref{eq:A}, terms corresponding to $k\geq4$ only contribute a term of order $O^\uc(\sqrt{\Den}T^{3/2})=o^\uc(\sqrt{\Den}T)$ to $A^{n,i}_{t,T}(u)$. Similarly, if we fix $k=1$ (resp., $k=2$ or $k=3$) in the second line, we only have to keep, in the summation in the first line of \eqref{eq:A}, the terms corresponding to $k=1,2,3$ (resp., $k=1,2$ or $k=1$), if we allow for an $o^\uc(\sqrt{\Den}T)$-error. 
			Thus,
			\begin{equation}\label{eq:A-dec} 
				A^{n,i}_{t,T}(u)=A^{n,i,1}_{t,T}(u)+A^{n,i,2}_{t,T}(u)+A^{n,i,3}_{t,T}(u)+O^\uc(\Den) + o^\uc(\sqrt{\Den}T),
			\end{equation}
			where (with the notation in (7.2))
			\begin{align*}
				A^{n,i,1}_{t,T}(u)	&=  \E_{t^n_{i-1}}\biggl[ iu_{T^n_{i-1}} \iint_{t^n_{i-1}}^{t+T} \Delta^n_i\theta(t,z)y(ds,dz)\exp\biggl(iu_{T^n_{i-1}} \iint_{t^n_{i-1}}^{t+T} \theta(t^n_i,z)y(ds,dz) \\
				& \quad +iu_{T^n_{i-1}}\iint_{t^n_{i-1}}^{t+T}\iint_{t^n_{i-1}}^{s-} \theta(t^n_i,z,z')y(dr,dz')y(ds,dz)\\
				&\quad +iu_{T^n_{i-1}} \iint_{t^n_{i-1}}^{t+T}\iint_{t^n_{i-1}}^{s-}\iint_{t^n_{i-1}}^{r-} \theta_\mm(t^n_i,z,z',z'')\\
				&\quad\qquad\qquad\qquad\quad\qquad\qquad\qquad\qquad\times y_\mm(dv,dz'')y_\mm(dr,dz')y_\mm(ds,dz) \biggr)\biggr],\\
				A^{n,i,2}_{t,T}(u)	&= \E_{t^n_{i-1}}\biggl[ iu_{T^n_{i-1}} \iint_{t^n_{i-1}}^{t+T}\iint_{t^n_{i-1}}^{s-} \Delta^n_i\theta(t,z,z') y(dr,dz')y(ds,dz) \\
				& \quad\times\exp\biggl(iu_{T^n_{i-1}} \biggl[\iint_{t^n_{i-1}}^{t+T} \theta(t^n_i,z)y(ds,dz) \\
				&\quad\qquad\qquad\qquad+   \iint_{t^n_{i-1}}^{t+T}\iint_{t^n_{i-1}}^{s-} \theta_\mm(t^n_i,z,z') y_\mm(dr,dz')y_\mm(ds,dz) \biggr]\biggr)\biggr],\\
				A^{n,i,3}_{t,T}(u)	&= \E_{t^n_{i-1}}\biggl[ iu_{T^n_{i-1}}  \iint_{t^n_{i-1}}^{t+T}\iint_{t^n_{i-1}}^{s-}\iint_{t^n_{i-1}}^{r-} \Delta^n_i\theta_\mm(t,z,z',z'')\\
				&\quad\qquad\qquad\qquad\quad\qquad\qquad\qquad\qquad\times y_\mm(dv,dz'')y_\mm(dr,dz')y_\mm(ds,dz) \\
				&\quad\times   \exp\biggl(iu_{T^n_{i-1}} \iint_{t^n_{i-1}}^{t+T} \theta_\mm(t^n_i,z)y_\mm(ds,dz)  \biggr)\biggr].
			\end{align*}
			Note that we  replaced some $\theta$ and $y$ by the martingale parts $\theta_\mm$ and $y_\mm$. This was possible because integration with respect to the drift part of $y$ over an interval of length $T^n_{i-1}$ yields a factor of $O(T)$, which, as the reader can quickly check, only results in negligible terms of order $O^\uc(\sqrt{\Den}T^{3/2})=o^\uc(\sqrt{\Den}T)$. 
			
			Consider the last term, $A^{n,i,3}_{t,T}(u)$. Since we condition on $\calf_{t^n_{i-1}}$, both $\Delta^n_i \theta_\mm(t,z,z',z'')$ and $\theta_\mm(t^n_i,z)$ are known. To proceed further, we introduce
			\begin{equation}\label{eq:muni} 
				\mu^{n,i}(dt,dz)=\int_{\R} \bone_{\{0\leq v\leq \la((t\wedge t^n_i)- ,z)\}}  \pf(dt,dz,dv),
			\end{equation}  
			which is equal to $\mu$ up to time $t^n_i$ and a (conditional) Poisson random measure with  intensity measure $\la(t^n_i,z)dt F(dz)$ after time $t^n_i$. Define $\wh\mu^{n,i}$, $y^{n,i}$ and $y^{n,i}_\mm$ accordingly. The idea is now to split each $y_\mm$ appearing in $A^{n,i,3}_{t,T}(u)$ into $y^{n,i}_\mm$ and $y_\mm-y^{n,i}_\mm$ (the only nonzero component of the latter is $\wh\mu-\wh\mu^{n,i}$). By (2.14),
			\begin{equation}\label{eq:14}
				\begin{split}
					&\E_{t^n_{i-1}}\biggl[\biggl\lvert \iint_{t^n_{i-1}}^{t+T} \ga(t^n_i,z) (\wh\mu-\wh\mu^{n,i})(ds,dz)\biggr\rvert^2\biggr] \\
					&\quad\leq \iiint_{t^n_{i-1}}^{t+T} \lvert \ga(t^n_i,z)\rvert^2\E_{t^n_{i-1}}\Bigl[\bone_{\{\la(s,z)\wedge \la(t^n_i,z) <v\leq \la(s,z)\vee \la(t^n_i,z)  \}}\Bigr] ds F(dz) dv \\
					&\quad= \iint_{t^n_{i-1}}^{t+T} \lvert \ga(t^n_i,z)\rvert^r \E_{t^n_{i-1}} [\lvert \la(s,z)-\la(t^n_i,z)\rvert ] ds F(dz)   \\
					&\quad\leq  C \sqrt{T}\iint_{t^n_{i-1}}^{t+T} \lvert \ga(t^n_i,z)\rvert^2 ds J(z)F(dz)  
				\end{split}
			\end{equation}
			for some numerical constant $C$. Since the last bound is $O(T^{3/2})$, replacing $y_\mm$ by $y^{n,i}_\mm$ in the complex exponential in the definition of $A^{n,i,3}_{t,T}(u)$ results in a negligible error of size $O^\uc(\sqrt{\Den}T^{5/4})$. A similar argument shows that this is also true if we replace $y_\mm$ by $y^{n,i}_\mm$ in the triple integral of $A^{n,i,3}_{t,T}(u)$. We conclude that
			\begin{align}
				A^{n,i,3}_{t,T}(u)	&=\E_{t^n_{i-1}}\biggl[ iu_{T^n_{i-1}}  \iint_{t^n_{i-1}}^{t+T}\iint_{t^n_{i-1}}^{s-}\iint_{t^n_{i-1}}^{r-} \Delta^n_i\theta_\mm(t,z,z',z'')\nonumber\\ &\quad\qquad\qquad\qquad\quad\qquad\qquad\qquad\qquad\times y^{n,i}_\mm(dv,dz'')y^{n,i}_\mm(dr,dz')y^{n,i}_\mm(ds,dz) \label{eq:help-2}\\
				&\quad\times   \exp\biggl(iu_{T^n_{i-1}} \iint_{t^n_{i-1}}^{t+T} \theta_\mm(t^n_i,z)y^{n,i}_\mm(ds,dz)  \biggr)\biggr] +o^\uc(\sqrt{\Den}T).
				\nonumber\end{align}
			Conditionally on $\calf_{t^n_{i-1}}$, the process $s\mapsto  \ov L^{n,i}_t(u,s)= \iint_{t^n_{i-1}}^{s} \theta_\mm(t^n_i,z)y^{n,i}_\mm(dr,dz)$ is a L\'evy process for $s\geq t^n_{i-1}$. Thus, 
			\[\exp\biggl(iu_{T^n_{i-1}} \iint_{t^n_{i-1}}^{t+T} \theta_\mm(t^n_i,z)y^{n,i}_\mm(ds,dz)  \biggr) = e^{-\frac12u^2\si^2_{t^n_i}+T^n_{i-1}\vp_{t^n_i}(u_{T^n_{i-1}})}\ov Z^{n,i}_t(u_{T^n_{i-1}},t+T), \]
			where $\ov Z^{n,i}_t(u,s)$ is the stochastic exponential of $\ov L^{n,i}_t(u,s)$, that is, the solution to the SDE $$\ov Z^{n,i}_t(u,ds)=\ov Z^{n,i}_t(u,s-)\ov L^{n,i}_t(u,ds),\qquad \ov Z^{n,i}_t(u,t^n_{i-1})=1.$$ Consequently, we have the chaos representation
			\begin{equation}\label{eq:chaos-1}\begin{split}
					&	\exp\biggl(iu_{T^n_{i-1}} \iint_{t^n_{i-1}}^{t+T} \theta_\mm(t^n_i,z)y^{n,i}_\mm(ds,dz)  \biggr)  \\
					&\quad = e^{-\frac12u^2\si^2_{t^n_i}+T^n_{i-1}\vp_{t^n_i}(u_{T^n_{i-1}})}\biggl(1+\sum_{k=1}^\infty \int_{\S^{(k)}_{t^n_{i-1},T^n_{i-1}}}   \wh\theta_{t^n_i}(u_{T^n_{i-1}},\bz_k) \by^{n,i}_\mm(d\bs_k,d\bz_k)\biggr), 
			\end{split}\end{equation}
			where 
			\begin{equation}\label{eq:wh-theta} 
				\wh\theta_s(u,\bz_k)_{j_1,\dots,j_k} = \prod_{\ell=1}^k \wh\theta_s(u,z_\ell)_{j_\ell},\quad \wh\theta_s(u,z)=(iu\si_s,0,\ldots,0,e^{iu\ga(s,z)}-1). 
			\end{equation}
			Since multiple Wiener or Poisson integrals of different orders are uncorrelated, only the term for $k=3$ in \eqref{eq:chaos-1} matters for $A^{n,i,3}_{t,T}(u)$. This term   in turn gives rise to $(d+1)^3$ covariance terms since we have three integrals with respect to $y_\mm^{n,i}$ and there is a choice between taking one of the $d$ continuous martingale components or the discontinuous martingale component. 
			We claim that only one case contributes asymptotically, namely when we take $W$ for $y_\mm^{n,i}$ every single time, in both \eqref{eq:help-2} and \eqref{eq:chaos-1}. By the scaling properties of Brownian motion and the fact that $\exp(-\frac12u^2\si^2_{t^n_i}+T^n_{i-1}\vp_{t^n_i}(u_{T^n_{i-1}})) = \exp(-\frac12 u^2\si^2_{t^n_i})+o^\uc(1)$,  we would then have (recall \eqref{eq:not2})
			\begin{equation}\label{eq:A3} 
				A^{n,i,3}_{t,T}(u) = \Delta^n_i \theta^3_\cc(t)_{111} C_{t^n_i}(u)T^n_{i-1} + o^\uc(\sqrt{\Den}T)
			\end{equation}
			for some $C_t(u)$ that is a polynomial in $u$ and $\si_t$. Note that  $\theta^3_\cc(t)_{111}$ is $\si^{\si^\si}_t$, if one wants to extend the notation in (2.10) consistently. To see that no other combination  matters asymptotically, first notice that \eqref{eq:chaos-1} does not contain any of the Brownian motions $W^{(2)},\dots,W^{(d)}$. Thus, as soon as one of the triple integrals in \eqref{eq:help-2} is taken with respect to $W^{(i)}$ for some $i\geq2$, we immediately get a zero covariance. In all remaining cases, there is at least one integral with respect to $\wh\mu^{n,i}$. Since they can be treated analogously, we only discuss the covariance (using the notation $\si^{\si^\ga}(t,z)=\theta(t,z,0,0)_{d+2,1,1}$)
			\begin{align*}
				&\E_{t^n_{i-1}}\biggl[ iu_{T^n_{i-1}}  \iint_{t^n_{i-1}}^{t+T}\int_{t^n_{i-1}}^{s-}\int_{t^n_{i-1}}^{r-} \Delta^n_i\si^{\si^\ga}(t,z) dW_vdW_r\wh\mu^{n,i}_\mm(ds,dz) \\
				&\quad\times (iu_{T^n_{i-1}}\si_{t^n_i})^2  \iint_{t^n_{i-1}}^{t+T}\int_{t^n_{i-1}}^{s-}\int_{t^n_{i-1}}^{r-} (e^{iu_{T^n_{i-1}}\ga(t^n_i,z)}-1) dW_vdW_r\wh\mu^{n,i}_\mm(ds,dz)\biggr], 
			\end{align*}
			which can be evaluated to
			\[
			-iu^3_{T^n_{i-1}}\si^2_{t^n_i}  \frac{(T^n_{i-1})^3}{3!}\int_{\R^{d'}} \Delta^n_i\si^{\si^\ga}(t,z)(e^{iu_{T^n_{i-1}}\ga(t^n_i,z)}-1)\nu_{t^n_i}(dz).
			\]
			Since $\lvert e^{ix}-1\rvert\leq 2\lvert x\rvert^{r/2}$ for any $r\in[1,2]$, we can apply the Cauchy--Schwarz inequality to get the bound
			\begin{align*} &\frac{\sqrt{2}u^3\si^2_{t^n_i} (T^n_{i-1})^{3/2}}{6}\biggl(\int_{\R^{d'}} (\Delta^n_i\si^{\si^\ga}(t,z))^2\nu_{t^n_i}(dz)\biggr)^{1/2}\biggl(\int_{\R^{d'}}   \lvert u_{T^n_{i-1}}\ga(t^n_i,z)\rvert^r\nu_{t^n_i}(dz)\biggr)^{1/2} \\
				&\qquad =O^\uc(\sqrt{\Den}T^{3/2-r/4})=o^\uc(\sqrt{\Den}T).\end{align*}
			This shows that we have \eqref{eq:A3}.
			In what follows, we use $v_t$ and $C_t(u)$ to denote generic processes (which may change from line to line) that are It\^o semimartingales in $t$, uniformly bounded in $u$ on compact subsets of $(0,\infty)$, and satisfy $\lvert \Delta^n_i v_t\rvert+\lvert\Delta^n_i C_t(u)\rvert=O(\sqrt{\Den})$, uniformly in $i$ and $u$.
			
			With similar reasoning, we can identify components of $A^{n,i,1}_{t,T}(u)$ and $A^{n,i,2}_{t,T}(u)$ that are $o^\uc(\sqrt{\Den}T)$ or of an analogous form to \eqref{eq:A3}. For example, it can be shown that $y_\mm$ and $\theta_\mm$ in $A^{n,i,1}_{t,T}(u)$ and $A^{n,i,2}_{t,T}(u)$ can be replaced by $W$ and coefficients with respect to $W$. Also, if we use the expansion $e^{i(a+b)}=e^{ia} + ie^{ia}b + O(b^2)$ for the last two lines in the definition of $A^{n,i,2}_{t,T}(u)$, then $b$ is of order $O^\uc(\sqrt{T})$. So the $O(b^2)$-part leads to a contribution to $A^{n,i,2}_{t,T}(u)$ that is $O^\uc(\sqrt{\Den}T^{3/2})=o^\uc(\sqrt{\Den}T)$, while the  $ie^{ia}b$-part leads to a contribution given by
			\begin{align*}
				&\E_{t^n_{i-1}}\biggl[ iu_{T^n_{i-1}} \iint_{t^n_{i-1}}^{t+T}\iint_{t^n_{i-1}}^{s-} \Delta^n_i\theta(t,z,z') y(dr,dz')y(ds,dz) \\
				& \qquad\times iu_{T^n_{i-1}}\exp\biggl(iu_{T^n_{i-1}}  \iint_{t^n_{i-1}}^{t+T} \theta(t^n_i,z)y(ds,dz)  \biggr)\int_{t^n_{i-1}}^{t+T}\int_{t^n_{i-1}}^s \si^\si_{t^n_i} W(dr)W(ds)\biggr] \\
				&\quad=-u_{T^n_{i-1}}^2\E_{t^n_{i-1}}\biggl[  \int_{t^n_{i-1}}^{t+T}\int_{t^n_{i-1}}^{s} \Delta^n_i\theta^2_\cc(t) y_\cc(dr)y_\cc(ds)\exp\biggl(iu_{T^n_{i-1}}  \int_{t^n_{i-1}}^{t+T} \si_{t^n_i} dW_s  \biggr) \\
				& \qquad\qquad\qquad\qquad\qquad\qquad\qquad\times \int_{t^n_{i-1}}^{t+T}\int_{t^n_{i-1}}^s \si^\si_{t^n_i} W(dr)W(ds)\biggr]+o^\uc(\sqrt{\Den}T)\\
				&\quad=\Delta^n_i v_t C_{t^n_i}(u) T^n_{i-1} +o^\uc(\sqrt{\Den}T).
			\end{align*}
			Arguing like this, we obtain 
			\begin{equation}\label{eq:A-simple} \begin{split}
					A^{n,i,1}_{t,T}(u)&=\ov A^{n,i,11}_{t,T}(u) +\ov A^{n,i,12}_{t,T}(u)+ \text{``$\Delta^n_i v_t C_{t^n_i}(u) T^n_{i-1}$''}+o^\uc(\sqrt{\Den}T),\\
					A^{n,i,2}_{t,T}(u)&= \ov A^{n,i,2}_{t,T}(u)+ \text{``$\Delta^n_i v_t C_{t^n_i}(u) T^n_{i-1}$''}+o^\uc(\sqrt{\Den}T),
				\end{split}
			\end{equation}
			where the quotation marks represent a finite \emph{sum} of terms of the form $\Delta^n_i v_t C_{t^n_i}(u) T^n_{i-1}$
			and
			\begin{align*}
				\ov	A^{n,i,11}_{t,T}(u)	&= \E_{t^n_{i-1}}\biggl[ iu_{T^n_{i-1}} \iint_{t^n_{i-1}}^{t+T} \Delta^n_i\theta(t,z)y(ds,dz) e^{iu_{T^n_{i-1}} \iint_{t^n_{i-1}}^{t+T} \theta(t^n_i,z)y(ds,dz)}\biggr], \\
				\ov A^{n,i,12}_{t,T}(u)	&= \E_{t^n_{i-1}}\biggl[ iu_{T^n_{i-1}} \iint_{t^n_{i-1}}^{t+T} \Delta^n_i\theta_\mm(t,z)y_\mm(ds,dz)e^{iu_{T^n_{i-1}} \iint_{t^n_{i-1}}^{t+T} \theta_\mm(t^n_i,z)y_\mm(ds,dz)} \\
				& \quad\times iu_{T^n_{i-1}}\iint_{t^n_{i-1}}^{t+T}\iint_{t^n_{i-1}}^{s-} \theta_\mm(t^n_i,z,z')y_\mm(dr,dz')y_\mm(ds,dz)\biggr],\\
				\ov	A^{n,i,2}_{t,T}(u)	&= \E_{t^n_{i-1}}\biggl[ iu_{T^n_{i-1}} \iint_{t^n_{i-1}}^{t+T}\iint_{t^n_{i-1}}^{s-} \Delta^n_i\theta_\mm(t,z,z') y_\mm(dr,dz')y_\mm(ds,dz) \\
				& \quad\times\exp\biggl(iu_{T^n_{i-1}} \iint_{t^n_{i-1}}^{t+T} \theta_\mm(t^n_i,z)y_\mm(ds,dz) \biggr)\biggr].
			\end{align*}

			Using integration by parts multiple times, we can express
			\begin{equation}\label{eq:5terms} 
				\begin{split}
					& \iint_{t^n_{i-1}}^{t+T} \Delta^n_i\theta_\mm(t,z)y_\mm(ds,dz)\iint_{t^n_{i-1}}^{t+T}\iint_{t^n_{i-1}}^{s-} \theta_\mm(t^n_i,z,z')y_\mm(dr,dz')y_\mm(ds,dz)	 \\
					&\quad= \int_{\S^{(3)}_{t^n_{i-1},T^n_{i-1}}}  \theta_\mm(t^n_i,z',z'') y_\mm(du,dz'')y_\mm(dr,dz')\Delta^n_i\theta_\mm(t,z)y_\mm(ds,dz)\\
					&\qquad+\int_{\S^{(3)}_{t^n_{i-1},T^n_{i-1}}}  \theta_\mm(t^n_i,z,z'') y_\mm(du,dz'')\Delta^n_i\theta_\mm(t,z')y_\mm(dr,dz')y_\mm(ds,dz)\\
					&\qquad+\int_{\S^{(3)}_{t^n_{i-1},T^n_{i-1}}}  \Delta^n_i\theta_\mm(t,z'') y_\mm(du,dz'')\theta_\mm(t^n_i,z,z')y_\mm(dr,dz')y_\mm(ds,dz)\\
					&\qquad+\iint_{t^n_{i-1}}^{t+T} \iint_{t^n_{i-1}}^{s}  \theta_\mm(t^n_i,z,z')  y_\mm(dr,dz')[y_\mm](ds,dz)\Delta^n_i\theta_\mm(t,z)^\top\\
					&\qquad+\iint_{t^n_{i-1}}^{t+T} \iint_{t^n_{i-1}}^{s}  \theta_\mm(t^n_i,z,z')  [y_\mm](dr,dz')\Delta^n_i\theta_\mm(t,z')^\top y_\mm(ds,dz).
				\end{split}
			\end{equation}
			In order to compute the conditional expectation  in $\ov A^{n,i,12}_{t,T}(u)$, we argue similarly to the paragraph following \eqref{eq:help-2}. We define
			$Z^{n,i}_t(u,s)=\exp(iu \iint_{t^n_{i-1}}^{s} \theta_\mm(t^n_i,z)y_\mm(dr,dz))$
			and use   It\^o's formula   (see Theorem I.4.57 in \cite{JS03}) to obtain
			\begin{align*}
				&Z^{n,i}_{t}(u,t+T)	\\
				&\quad=1+  iu\iint_{t^n_{i-1}}^{t+T}Z^{n,i}_t(u,s-)\theta_\mm(t^n_i,z)y_\mm(ds,dz)-\frac12u^2 \si_{t^n_i}^2\int_{t^n_{i-1}}^{t+T}Z^{n,i}_t(u,s) ds\\
				&\qquad+\iint_{t^n_{i-1}}^{t+T} Z^{n,i}_t(u,s-)(e^{iu\ga(t^n_i,z)}-1-iu\ga(t^n_i,z))\mu(ds,dz).
			\end{align*}
			We realize that $Z^{n,i}_t(u,s)$ is the stochastic exponential of 
			\begin{align*}
				L^{n,i}_{t}(u,s)&= \int_{t^n_{i-1}}^s \biggl(\vp_{t^n_i}(r,u)-\frac12u^2\si^2_{t^n_i}\biggr)dr +iu\si_{t^n_i}(W_s-W_{t^n_{i-1}}) \\
				&\quad+\iint_{t^n_{i-1}}^{s} (e^{iu\ga(t^n_i,z)}-1)\wh\mu(dr,dz) 
			\end{align*}
			for $s\geq t^n_{i-1}$.
			Hence, $Z^{n,i}_{t}(u,s)$ is the unique solution to the SDE
			\[
			Z^{n,i}_{t}(u,ds)=Z^{n,i}_{t}(u,s-)L^{n,i}_{t}(u,ds),\qquad Z^{n,i}_{t}(u,t^n_{i-1})=1.
			\]
			Equivalently,  $\wh Z^{n,i}_{t}(u,s)=\exp\bigl(-\int_{t^n_{i-1}}^s  (\vp_{t^n_i}(r,u)-\frac12u^2\si^2_{t^n_i} )dr\bigr)Z^{n,i}_{t}(u,s)$ satisfies
			\begin{equation}\label{eq:SDE} 
				\wh Z^{n,i}_{t}(u,ds)=\wh Z^{n,i}_{t}(u,s-)\wh L^{n,i}_{t}(u,ds),\qquad \wh Z^{n,i}_{t}(u,t^n_{i-1})=1,
			\end{equation} 
			with 
			\begin{equation}\label{eq:L} 
				\wh	L^{n,i}_{t}(u,s)=iu\si_{t^n_i}(W_s-W_{t^n_{i-1}}) +\iint_{t^n_{i-1}}^{s} (e^{iu\ga(t^n_i,z)}-1)\wh\mu(dr,dz). 
			\end{equation}
			The SDE \eqref{eq:SDE} immediately yields the series expansion 
			\begin{equation}\label{eq:chaos-3}\begin{split}
					&Z^{n,i}_{t}(u_{T^n_{i-1}},t+T) = e^{-\frac12u^2\si^2_{t^n_i}+\int_{t^n_{i-1}}^{t+T}  \vp_{t^n_i}(r,u_{T^n_{i-1}})dr }\wh Z^{n,i}_{t}(u_{T^n_{i-1}},t+T)\\
					&\quad=e^{-\frac12u^2\si^2_{t^n_i}+\int_{t^n_{i-1}}^{t+T}  \vp_{t^n_i}(r,u_{T^n_{i-1}})dr }\biggl(1+\sum_{k=1}^\infty \int_{\S^{(k)}_{t^n_{i-1},T^n_{i-1}}}   \wh\theta_{t^n_i}(u_{T^n_{i-1}},\bz_k) \by_\mm(d\bs_k,d\bz_k)\biggr),
				\end{split}\!\!\!\end{equation}
			where $\wh\theta_s(u,\bz_k)$ is the same as in \eqref{eq:wh-theta}.
			By (2.15) and the fact that 
			\begin{equation}\label{eq:o1} \begin{split}
					&T\int_{\R^{d'}} \lvert e^{iu_T\ga(t,z)}-1-iu_T\ga(t,z)\rvert J(z)F(dz)\\
					&\quad \leq 2u^{r/2}T^{1-r/2} \int_{\R^{d'}} \lvert \ga(t,z)\rvert^r J(z)F(dz)=o^\uc(1), 		
				\end{split}
			\end{equation}
			it follows that
			\begin{equation}\label{eq:chaos} \begin{split}
					Z^{n,i}_{t}(u_{T^n_{i-1}},t+T) 
					&=e^{-\frac12u^2\si^2_{t^n_i}+T^n_{i-1}\vp_{t^n_i}(u_{T^n_{i-1}}) + o^\uc(\sqrt{T}) }\\
					&\quad\times\biggl(1+\sum_{k=1}^\infty \int_{\S^{(k)}_{t^n_{i-1},T^n_{i-1}}}   \wh\theta_{t^n_i}(u_{T^n_{i-1}},\bz_k) \by_\mm(d\bs_k,d\bz_k)\biggr).
				\end{split}
			\end{equation} 
			
			As the left-hand side of \eqref{eq:5terms} is $O(\sqrt{\Den}T^{3/2})$, the $o^\uc(\sqrt{T})$-term in \eqref{eq:chaos} can be ignored in the computation of $\ov A^{n,i,12}_{t,T}(u)$ if one allows for an $o^\uc(\sqrt{\Den}T)$-error. With this in mind,
			we are now in the position to find the contributions to $\ov A^{n,i,12}_{t,T}(u)$ coming from  the first three terms on the right-hand side of \eqref{eq:5terms}. As they are $\calf_{t^n_{i-1}}$-conditionally centered, the constant term in \eqref{eq:chaos} has no contribution. For the remaining terms, recall $\mu^{n,i}$ from \eqref{eq:muni} and the idea to split $y_\mm$ appearing in \eqref{eq:chaos} or in any of the first three terms on the right-hand side of \eqref{eq:5terms} into $y^{n,i}_\mm$ and $y_\mm-y^{n,i}_\mm$.
			
			Let us first consider  the case where \emph{all} such $y_\mm$'s are replaced by $y^{n,i}_\mm$. In this case,  the first three terms in \eqref{eq:5terms}  are triple integrals with respect to (conditional) Wiener and Poisson measures. Since  multiple Wiener or Poisson integrals of different order are uncorrelated to each other, only the term corresponding to $k=3$ in \eqref{eq:chaos} yields a nonzero covariance. Thus, each of the first three terms in \eqref{eq:5terms} (with $y^{n,i}_\mm$ instead of $y_\mm$) has an identical contribution to $\ov A^{n,i,12}_{t,T}(u)$ that is equal to
			\begin{equation}\label{eq:A12-1}\begin{split}
					\qquad	&-u_{T^n_{i-1}}^2 e^{-\frac12u^2\si_{t^n_i}^2+T^n_{i-1}\vp_{t^n_i}(u_{T^n_{i-1}})}  \int_{\mathbb{S}^{(3)}_{t^n_{i-1},T^n_{i-1}}} \Delta^n_i\theta_\mm(t,z)\langle y^{n,i}_\mm\rangle(ds_1,dz_1)(\wh\theta_{t^n_i}(u_{T^n_{i-1}},z_1))^\top \\
					&\qquad\times \theta_\mm(t^n_i,z_2,z_3) \langle y^{n,i}_\mm\rangle(ds_3,dz_3)(\wh\theta_{t^n_i}(u_{T^n_{i-1}},z_3))^\top\langle y^{n,i}_\mm\rangle(ds_2,dz_2)(\wh\theta_{t^n_i}(u_{T^n_{i-1}},z_2))^\top\\
					&\quad=-\frac{u^2}6(T^n_{i-1})^2  \Theta_{t^n_i,T^n_{i-1}}(u_{T^n_{i-1}})\Bigl(iu_{T^n_{i-1}}\si_{t^n_i}\Delta^n_i\si_t +\Delta^n_i \xi^{(1)}_t(t^n_i,u_{T^n_{i-1}}) \Bigr)  \\
					&\qquad\times\Bigl( -\si^\si_{t^n_i}u^2_{T^n_{i-1}}\si^2_{t^n_i} +iu_{T^n_{i-1}}\si_{t^n_i} \chi^{(1)}_{t^n_i}(u_{T^n_{i-1}})+\chi^{(2)}_{t^n_i}(u_{T^n_{i-1}}) \Bigr) +\text{``$\Delta^n_i v_t C_{t^n_i}(u) T^n_{i-1}$''}. 
				\end{split}\raisetag{-3.5\baselineskip}\end{equation}
			Next, consider the case where \emph{at least two}  $y_\mm$'s are replaced by $y_\mm-y_\mm^{n,i}$. As seen in \eqref{eq:14}, because $\lvert \la(t',z)-\la(t^n_{i},z)\rvert = O(\sqrt{T})$ uniformly in $t'\in[t^n_i,t+T]$, each of these two substitutions leads to an extra factor of $O(T^{1/4})$, rendering the contributions of the first three terms of \eqref{eq:5terms} to $\ov A^{n,i,12}_{t,T}(u)$ negligible in size. 
			
			It thus remains to consider the case where \emph{exactly one} $y_\mm$ is replaced by $y_\mm-y_\mm^{n,i}$. Suppose this happens for one of the $y_\mm(ds_i,dz_i)$'s in 
			\begin{align*}
				&	\int_{\S^{(k)}_{t^n_{i-1},T^n_{i-1}}}   \wh\theta_{t^n_i}(u_{T^n_{i-1}},\bz_k)\by_\mm(d\bs_k,d\bz_k)\\ &\quad=\int_{\S^{(k)}_{t^n_{i-1},T^n_{i-1}}}   \wh\theta_{t^n_i}(u_{T^n_{i-1}},\bz_k)y_\mm(ds_k,dz_k)\cdots y_\mm(ds_1,dz_1)
			\end{align*}
			from \eqref{eq:chaos} (otherwise the argument is similar). Let $i_0$ be the position where this substitution occurs. 
			Since the terms from \eqref{eq:5terms} are triple Wiener/Poisson integrals and $y_\mm$ is centered, we must have $i_0\in\{1,2,3\}$. Let us further consider the case $i_0=1$ and the resulting covariance with the first term on the right-hand of \eqref{eq:5terms} (the other $2+2 \times 3 = 8$ cases can be treated analogously). Then
			\begin{align*}
				&-u_{T^n_{i-1}}^2 \E_{t^n_{i-1}}\Biggl[  \int_{\S^{(3)}_{t^n_{i-1},T^n_{i-1}}}  \theta_\mm(t^n_i,z',z'') y^{n,i}_\mm(du,dz'')y^{n,i}_\mm(dr,dz')\Delta^n_i\theta_\mm(t,z)y^{n,i}_\mm(ds,dz)\\
				&\qquad\times	\int_{\S^{(k)}_{t^n_{i-1},T^n_{i-1}}}   \wh\theta_{t^n_i}(u_{T^n_{i-1}},\bz_k)y^{n,i}_\mm(ds_k,dz_k)\cdots y^{n,i}_\mm(ds_2,dz_2)(y_\mm-y^{n,i}_\mm)(ds_1,dz_1) \Biggr]\\
				&\quad =u_{T^n_{i-1}}^2 \iiint_{t^n_{i-1}}^{t+T} \E_{t^n_{i-1}}\Biggl[\Biggl( \int_{\S^{(2)}_{t^n_{i-1},s_1-t^n_{i-1}-}}  \theta_\mm(t^n_i,z_2,z_3) y^{n,i}_\mm(ds_3,dz_3)y^{n,i}_\mm(ds_2,dz_2)\\
				&\qquad	\times	\int_{\S^{(k-1)}_{t^n_{i-1},s_1-t^n_{i-1}-}}  \prod_{i=2}^{k} \wh\theta_{t^n_i}(u_{T^n_{i-1}},z_i)y^{n,i}_\mm(ds_i,dz_i)\Biggr)\bone_{\{\la(s_1,z_1)< v\leq \la(t^n_i,z_1)\}} \Biggr]\\
				&\qquad\times\Delta^n_i\ga(t,z_1)(e^{iu_{T^n_{i-1}}\ga(t^n_i,z_1)}-1)ds_1F(dz_1) dv.
			\end{align*}
			If $h^{n,i}(u_{T^n_{i-1}},s_1)$ denotes the big parenthesis, the above equals
			\begin{equation*}
				u_{T^n_{i-1}}^2 \iint_{t^n_{i-1}}^{t+T} \E_{t^n_{i-1}} [ h^{n,i}(u_{T^n_{i-1}},s)(\la(s,z)-\la(t^n_i,z))_- ] \Delta^n_i\ga(t,z)(e^{iu_{T^n_{i-1}}\ga(t^n_i,z)}-1) ds F(dz),
			\end{equation*}
			where $x_-=-(x\wedge 0)$. 
			Since $\la(s,z)-\la(t^n_i,z) = O(\sqrt{T})$, $h^{n,i}(u_{T^n_{i-1}},s) = O^\uc(T)$ and $$\iint_{t^n_{i-1}}^{t+T} \Delta^n_i\ga(t,z)(e^{iu_{T^n_{i-1}}\ga(t^n_i,z)}-1) ds F(dz) = O^\uc(\sqrt{\Den}T^{1-r/4}),$$
			it follows that the penultimate display is $O^\uc(\sqrt{\Den}T^{3/2-r/4})=o^\uc(\sqrt{\Den}T)$. In conclusion, each of the first three terms on the right-hand side of \eqref{eq:5terms} contributes \eqref{eq:A12-1} to $\ov A^{n,i,12}_{t,T}(u)$ plus some $o^\uc(\sqrt{\Den}T)$-error.
			
			We argue similarly for the last two terms in \eqref{eq:5terms}. After passing to  $\mu^{n,i}$, we  decompose $[y^{n,i}_\mm]=\langle y^{n,i}_\mm\rangle+ \mathrm{diag}(0,\dots,0,\wh\mu^{n,i})$ and then observe that only  the summands with $k=1$ or $k=2$ in \eqref{eq:chaos} need to be kept. As a result, the reader can check that the total contribution of  the  last two terms in \eqref{eq:5terms} to $\ov A^{n,i,12}_{t,T}(u)$ is
			\begin{equation}\label{eq:A12-2}	\begin{split}
					&-\frac{u^2}{2}  T^n_{i-1}  \Theta_{t^n_i,T^n_{i-1}}(u_{T^n_{i-1}})\Bigl(2iu_{T^n_{i-1}}\si^\si_{t^n_i}\si_{t^n_i}\Delta^n_i \si_t+iu_{T^n_{i-1}}\si_{t^n_i}\Delta^n_i \xi^{(5)}_t(t^n_i,u_{T^n_{i-1}}) \\
					&\quad+ \chi^{(1)}_{t^n_i}(u_{T^n_{i-1}})\Delta^n_i \si_t+\Delta^n_i\xi^{(3)}_t(t^n_i,u_{T^n_{i-1}})\Bigr) +\text{``$\Delta^n_i v_t C_{t^n_i}(u) T^n_{i-1}$''} + o^\uc(\sqrt{\Den}T).
			\end{split}\end{equation}
			By similar methods (replacing $\mu$ by $\mu^{n,i}$ and  using the chaos representation \eqref{eq:chaos-1}), we obtain
			\begin{align}
				\ov	A^{n,i,2}_{t,T}(u)	&= \frac12iu(T^n_{i-1})^{3/2} \Theta_{t^n_i,T^n_{i-1}}(u_{T^n_{i-1}})\Bigl(-u^2_{T^n_{i-1}}\si^2_{t^n_i}\Delta^n_i \si^\si_t  +iu_{T^n_{i-1}} \si_{t^n_i}\Delta^n_i\xi^{(4)}_t(t^n_i,u_{T^n_{i-1}})\nonumber\\
				&\quad+\Delta^n_i\xi^{(2)}_t(t^n_i,u_{T^n_{i-1}})\Bigr)+\text{``$\Delta^n_i v_t C_{t^n_i}(u) T^n_{i-1}$''} + o^\uc(\sqrt{\Den}T).\label{eq:A2} 
			\end{align}
			
			We turn to $\ov A^{n,i,11}_{t,T}(u)$. By definition,   $	\ov	A^{n,i,11}_{t,T}(u)=\ov A^{n,i,112}_{t,T}(u)+\ov A^{n,i,111}_{t,T}(u)$, where
			\begin{align*}
				\ov	A^{n,i,111}_{t,T}(u)	&=iu {\textstyle\sqrt{T^n_{i-1}}}\Delta^n_i \al_t \E_{t^n_{i-1}} \Bigl[ e^{iu_{T^n_{i-1}} \iint_{t^n_{i-1}}^{t+T} \theta(t^n_i,z)y(ds,dz)}\Bigr], \\
				\ov A^{n,i,112}_{t,T}(u)	&= iu_{T^n_{i-1}} \E_{t^n_{i-1}}\biggl[ \iint_{t^n_{i-1}}^{t+T} \Delta^n_i\theta_\mm(t,z)y_\mm(ds,dz) e^{iu_{T^n_{i-1}} \iint_{t^n_{i-1}}^{t+T} \theta(t^n_i,z)y(ds,dz)}\biggr].
			\end{align*}
			By  (3.7) and \eqref{eq:chaos},
			\begin{equation}\label{eq:A111} \begin{split}
					\ov A^{n,i,111}_{t,T}(u)&=iu{\textstyle\sqrt{ T^n_{i-1}}} \Theta_{t^n_i,T^n_{i-1}}(u_{T^n_{i-1}})\Delta^n_i \al_t\\
					&\quad\times \E_{t^n_{i-1}}\biggl[ e^{o^\uc(\sqrt{T})}\biggl(1+\sum_{k=1}^\infty \int_{\S^{(k)}_{t^n_{i-1},T^n_{i-1}}}   \wh\theta_{t^n_i}(u_{T^n_{i-1}},\bz_k) \by_\mm(d\bs_k,d\bz_k)\biggr) \biggr]\\
					&=iu{\textstyle\sqrt{ T^n_{i-1}}} \Theta_{t^n_i,T^n_{i-1}}(u_{T^n_{i-1}})\Delta^n_i \al_t+o^\uc(\sqrt{\Den}T).
				\end{split}
			\end{equation}
			Next,  by 
			\eqref{eq:chaos-3}, we  get
			\begin{align*}
				&	\ov A^{n,i,112}_{t,T}(u)	=iu_{T^n_{i-1}}\Theta_{t^n_i,T^n_{i-1}}(u_{T^n_{i-1}})\E_{t^n_{i-1}}\biggl[\iint_{t^n_{i-1}}^{t+T} \Delta^n_i\theta_\mm(t,z)y_\mm(ds,dz) \\
				&\qquad\times\biggl(1+\sum_{k=1}^\infty \int_{\S^{(k)}_{t^n_{i-1},T^n_{i-1}}}   \wh\theta_{t^n_i}(u_{T^n_{i-1}},\bz_k) \by_\mm(d\bs_k,d\bz_k)\biggr)\\ 
				&\qquad\times \exp\biggl(\iint_{t^n_{i-1}}^{t+T} (e^{iu_{T^n_{i-1}}\ga(t^n_i,z)}-1-iu_{T^n_{i-1}}\ga(t^n_i,z))(\la(r,z)-\la(t^n_i,z))drF(dz)\biggr)\biggr].
			\end{align*}
			Now, by (2.4), (2.14) and (2.15), we have for $r\in [t^n_i,t+T]$ that
			\begin{align}
				\la(r,z)	&=\la(t^n_i,z)+\sum_{j=1}^d\int_{t^n_i}^r \si^{\la,(j)}(s,z)dW^{(j)}_s + \iiint_{t^n_i}^r \ga^\la(s,z,z',v')(\pf-\qf)(ds,dz',dv')\nonumber\\
				&\quad+O(TJ(z)) \label{eq:la-expand}  \\
				&=\la(t^n_i,z)+\la^{n,i}(r,z)+O(TJ(z)),\nonumber
			\end{align}
			where
			\[ \la^{n,i}(r,z)=\sum_{j=1}^d\si^{\la,(j)}(t^n_i,z)(W^{(j)}_r-W^{(j)}_{t^n_i} )+ \iiint_{t^n_i}^r \ga^\la(t^n_i,z,z',v')(\pf-\qf)(ds,dz',dv').\]
			Combined with \eqref{eq:o1}, it follows that 
			\begin{align*}
				\ov A^{n,i,112}_{t,T}(u)	&=iu_{T^n_{i-1}}\Theta_{t^n_i,T^n_{i-1}}(u_{T^n_{i-1}})\E_{t^n_{i-1}}\biggl[\iint_{t^n_{i-1}}^{t+T} \Delta^n_i\theta_\mm(t,z)y_\mm(ds,dz) \\
				&\qquad\times\biggl(1+\sum_{k=1}^\infty \int_{\S^{(k)}_{t^n_{i-1},T^n_{i-1}}}   \wh\theta_{t^n_i}(u_{T^n_{i-1}},\bz_k) \by_\mm(d\bs_k,d\bz_k)\biggr)\\ 
				&\qquad\times \exp\biggl(\iint_{t^n_{i-1}}^{t+T} (e^{iu_{T^n_{i-1}}\ga(t^n_i,z)}-1-iu_{T^n_{i-1}}\ga(t^n_i,z))\la^{n,i}(r,z)drF(dz)\biggr)\biggr]\\
				&\quad+o^\uc(\sqrt{\Den}T)\\
				&=	\ov A^{n,i,1121}_{t,T}(u)+	\ov A^{n,i,1122}_{t,T}(u)+o^\uc(\sqrt{\Den}T),
			\end{align*}
			where
			\begin{align*}
				\ov A^{n,i,1121}_{t,T}(u)&= iu_{T^n_{i-1}}\Theta_{t^n_i,T^n_{i-1}}(u_{T^n_{i-1}})\E_{t^n_{i-1}}\biggl[\iint_{t^n_{i-1}}^{t+T} \Delta^n_i\theta_\mm(t,z)y_\mm(ds,dz) \\
				&\quad\times\biggl(1+\sum_{k=1}^\infty \int_{\S^{(k)}_{t^n_{i-1},T^n_{i-1}}}   \wh\theta_{t^n_i}(u_{T^n_{i-1}},\bz_k) \by_\mm(d\bs_k,d\bz_k)\biggr)\biggr],\\
				\ov A^{n,i,1122}_{t,T}(u)&=iu_{T^n_{i-1}}\Theta_{t^n_i,T^n_{i-1}}(u_{T^n_{i-1}})\E_{t^n_{i-1}}\biggl[\iint_{t^n_{i-1}}^{t+T} \Delta^n_i\theta_\mm(t,z)y_\mm(ds,dz) \\
				&\quad\times\biggl(1+\sum_{k=1}^\infty \int_{\S^{(k)}_{t^n_{i-1},T^n_{i-1}}}   \wh\theta_{t^n_i}(u_{T^n_{i-1}},\bz_k) \by_\mm(d\bs_k,d\bz_k)\biggr)\\ 
				&\quad\times \iint_{t^n_{i-1}}^{t+T} (e^{iu_{T^n_{i-1}}\ga(t^n_i,z)}-1-iu_{T^n_{i-1}}\ga(t^n_i,z))\la^{n,i}(r,z)drF(dz)\biggr].
			\end{align*}
			
			By It\^o's isometry,
			\begin{align*} 
				\ov A^{n,i,1121}_{t,T}(u)& =-u^2\Theta_{t^n_i,T^n_{i-1}}(u_{T^n_{i-1}})\si_{t^n_i}\Delta^n_i\si_t\\ &\quad+iu_{T^n_{i-1}}\Theta_{t^n_i,T^n_{i-1}}(u_{T^n_{i-1}}) \sum_{k=1}^\infty \iint_{t^n_{i-1}}^{t+T} (e^{iu_{T^n_{i-1}}\ga(t^n_i,z)}-1)\Delta^n_i \ga(t,z)\\
				& \quad\times \E_{t^n_{i-1}}\biggl[\int_{\S^{(k-1)}_{t^n_{i-1},s-t^n_{i-1}}}    \wh \theta_{t^n_i}(u_{T^n_{i-1}},\bz_{k-1})\by_\mm(d\bs_{k-1},d\bz_{k-1}) \la(s,z)\biggr]dsF(dz)  \\
				& = \ov A^{\prime n,i,1121}_{t,T}(u)+\ov A^{\prime\prime n,i,1121}_{t,T}(u),
			\end{align*}
			where
			\begin{equation}\label{eq:A1121-1}
				\ov A^{\prime n,i,1121}_{t,T}(u)	= iu{\textstyle\sqrt{ T^n_{i-1}}} \Theta_{t^n_i,T^n_{i-1}}(u_{T^n_{i-1}})\Delta^n_i\xi^{(1)}_t(t^n_i,u_{T^n_{i-1}}) -u^2  \si_{t^n_i} \Theta_{t^n_i,T^n_{i-1}}(u_{T^n_{i-1}})\Delta^n_i \si_t
			\end{equation}
			and
			\begin{align*}
				&	\ov A^{\prime\prime n,i,1121}_{t,T}(u)=iu_{T^n_{i-1}}\Theta_{t^n_i,T^n_{i-1}}(u_{T^n_{i-1}}) \sum_{k=1}^\infty \iint_{t^n_{i-1}}^{t+T} (e^{iu_{T^n_{i-1}}\ga(t^n_i,z)}-1)\Delta^n_i \ga(t,z)\\
				&\quad\times \E_{t^n_{i-1}}\biggl[\int_{\S^{(k-1)}_{t^n_{i-1},s-t^n_{i-1}}}    \wh \theta_{t^n_i}(u_{T^n_{i-1}},\bz_{k-1})\by_\mm(d\bs_{k-1},d\bz_{k-1}) (\la(s,z)-\la(t^n_i,z))\biggr]dsF(dz).
			\end{align*}
			Using \eqref{eq:la-expand}, we can simplify the latter and obtain
			\begin{align*}
				\ov A^{\prime\prime n,i,1121}_{t,T}(u)&=iu_{T^n_{i-1}}\Theta_{t^n_i,T^n_{i-1}}(u_{T^n_{i-1}}) \sum_{k=1}^\infty \iint_{t^n_{i-1}}^{t+T} (e^{iu_{T^n_{i-1}}\ga(t^n_i,z)}-1)\Delta^n_i \ga(t,z)\\
				&\quad\times \E_{t^n_{i-1}}\biggl[\int_{\S^{(k-1)}_{t^n_{i-1},s-t^n_{i-1}}}    \wh \theta_{t^n_i}(u_{T^n_{i-1}},\bz_{k-1})\by_\mm(d\bs_{k-1},d\bz_{k-1}) \la^{n,i}(s,z)\biggr]dsF(dz)\\
				&\quad+o^\uc(\sqrt{\Den}T).
			\end{align*}
			Conditionally on $\calf_{t^n_{i-1}}$, the process $s\mapsto\la^{n,i}(s,z)$ is equal to $\la^{n,i}(t^n_{i-1},z)$ plus a centered Lévy process. Therefore, only the terms corresponding to $k=1$ and $k=2$ survive the last conditional expectation. Because $\la^{n,i}(t^n_{i-1},z)=O(\sqrt{\Den})$, it is easy to show that the term with $k=1$ leads to a negligible contribution of size $o^\uc(\Den)$. In the term corresponding to $k=2$, we can replace $y_\mm$ by $y_\mm^{n,i}$ if we allow for an $o^\uc(\sqrt{\Den}T)$-error. Hence,  
			\begin{equation}\label{eq:A1121-2} 
				\begin{split}
					\ov A^{\prime\prime n,i,1121}_{t,T}(u)&=-\frac12 T^n_{i-1}u^2\si_{t^n_i}\Theta_{t^n_i,T^n_{i-1}}(u_{T^n_{i-1}}) \Delta^n_i \xi^{(6)}_t(t^n_i,u_{T^n_{i-1}})\\
					&\quad+\frac12iu(T^n_{i-1})^{3/2}\Theta_{t^n_i,T^n_{i-1}}(u_{T^n_{i-1}}) \Delta^n_i \xi^{(7)}_t(t^n_i,u_{T^n_{i-1}})+o^\uc(\sqrt{\Den}T).
				\end{split}
			\end{equation}
			
			Regarding $\ov A^{n,i,1122}_{t,T}(u)$, we can argue similarly to how we did in the paragraph after \eqref{eq:chaos} to 
			replace all $y_\mm$ by $y^{n,i}_\mm$. Since $\iint_{t^n_{i-1}}^{t+T} \Delta^n_i\theta_\mm(t,z)y^{n,i}_\mm(ds,dz)$ is a Wiener-type integral and $\iint_{t^n_{i-1}}^{t+T} (e^{iu_{T^n_{i-1}}\ga(t^n_i,z)}-1-iu_{T^n_{i-1}}\ga(t^n_i,z))\la^{n,i}(r,z)drF(dz)$ is a constant plus a Wiener-type integral (conditionally on $\calf_{t^n_{i-1}}$), only the constant term $1$ and the terms for $k=1$ and $k=2$ in the second line of the definition of $\ov A^{n,i,1122}_{t,T}(u)$ have nonzero contributions. As above, because $\la^{n,i}(t^n_{i-1},z)=O(\sqrt{\Den})$, the contribution corresponding to $k=1$ is negligible. After a tedious but entirely straightforward computation for the other two terms, we obtain 
			\begin{equation}\label{eq:A1122}\begin{split}
					\ov A^{n,i,1122}_{t,T}(u)	&=\frac12iu(T^n_{i-1})^{3/2}\Theta_{t^n_i,T^n_{i-1}}(u_{T^n_{i-1}})\Bigl(\chi^{(3)}_{t^n_i}(u_{T^n_{i-1}})\Delta^n_i \si_t + \Delta^n_i \xi^{(8)}_t(t^n_i,u_{T^n_{i-1}})\Bigr)\\
					&\quad-\frac12iu^3(T^n_{i-1})^{3/2} \Theta_{t^n_i,T^n_{i-1}}(u_{T^n_{i-1}})\chi^{(3)}_{t^n_i}(u_{T^n_{i-1}})\si^2_{t^n_i}\Delta^n_i\si_t\\
					&\quad-\frac12u^2(T^n_{i-1})^2\Theta_{t^n_i,T^n_{i-1}}(u_{T^n_{i-1}})\chi_{t^n_i}^{(4)}(u_{T^n_{i-1}})\si_{t^n_i}\Delta^n_i\si_t\\
					&\quad-\frac12u^2(T^n_{i-1})^2\Theta_{t^n_i,T^n_{i-1}}(u_{T^n_{i-1}})\chi^{(3)}_{t^n_i}(u_{T^n_{i-1}})\si_{t^n_i}\Delta^n_i\xi^{(1)}_t(t^n_i,u_{T^n_{i-1}})\\
					&\quad+\frac12iu(T^n_{i-1})^{5/2}\Theta_{t^n_i,T^n_{i-1}}(u_{T^n_{i-1}})\chi^{(4)}_{t^n_i}(u_{T^n_{i-1}}) \Delta^n_i\xi^{(1)}_t(t^n_i,u_{T^n_{i-1}})+o^\uc(\sqrt{\Den}T).
				\end{split}\raisetag{-4\baselineskip}\end{equation}

			We proceed to $B^{n,i}_{t,T}(u)$ from \eqref{eq:ABC}. Since $e^{i(a+b)}-1 = (e^{ia}-1) + i(e^{ia}-1)b + ib + O(b^2)$, we have
			\begin{equation}\label{eq:Bni}\begin{split}
					B^{n,i}_{t,T}(u)	&
					=\E_{t^n_i}\biggl[\biggl(\exp\biggl(iu_{T^n_{i-1}}\sum_{k=1}^N\int_{\S^{(k)}_{t^n_i,T^n_i}\setminus \S^{(k)}_{t^n_{i-1},T^n_{i-1}}} \theta(t^n_i,\bz_k)\by (d\bs_k,d\bz_k)\biggr)-1\biggr)\\
					& \quad   \times\exp\biggl(iu_{T^n_{i-1}} \sum_{k=1}^N \int_{\S^{(k)}_{t^n_{i-1},T^n_{i-1}}} \theta(t^n_i,\bz_k)\by(d\bs_k,d\bz_k)\biggr)  \biggr]\\
					&=B^{n,i,1}_{t,T}(u)+B^{n,i,2}_{t,T}(u)+B^{n,i,3}_{t,T}(u)+O^\uc(\Den),
			\end{split}\end{equation}
			where
			\begin{align*}
				B^{n,i,1}_{t,T}(u)&=\E_{t^n_i}\biggl[ \biggl(\exp\biggl(iu_{T^n_{i-1}}\iint_{t^n_i}^{t^n_{i-1}}\theta(t^n_i,z)y (ds,dz)\biggr)-1\biggr)\\
				& \quad   \times\exp\biggl(iu_{T^n_{i-1}} \sum_{k=1}^N \int_{\S^{(k)}_{t^n_{i-1},T^n_{i-1}}} \theta(t^n_i,\bz_k)\by(d\bs_k,d\bz_k)\biggr)  \biggr],\\
				B^{n,i,2}_{t,T}(u)&=\E_{t^n_i}\biggl[  iu_{T^n_{i-1}}\sum_{k=2}^N \int_{\S^{(k)}_{t^n_i,T^n_i}\setminus \S^{(k)}_{t^n_{i-1},T^n_{i-1}}} \theta(t^n_i,\bz_k)\by (d\bs_k,d\bz_k)\\
				&\quad\times\biggl( \exp\biggl(iu_{T^n_{i-1}}\iint_{t^n_i}^{t^n_{i-1}}\theta(t^n_i,z)y (ds,dz)\biggr)-1\biggr)\\
				&\quad\times \exp\biggl(iu_{T^n_{i-1}} \sum_{k=1}^N \int_{\S^{(k)}_{t^n_{i-1},T^n_{i-1}}} \theta(t^n_i,\bz_k)\by(d\bs_k,d\bz_k)\biggr)\biggr],\\
				B^{n,i,3}_{t,T}(u)&=\E_{t^n_i}\biggl[  iu_{T^n_{i-1}}\sum_{k=2}^N \int_{\S^{(k)}_{t^n_i,T^n_i}\setminus \S^{(k)}_{t^n_{i-1},T^n_{i-1}}} \theta(t^n_i,\bz_k)\by (d\bs_k,d\bz_k)\\
				&\quad\times \exp\biggl(iu_{T^n_{i-1}} \sum_{k=1}^N \int_{\S^{(k)}_{t^n_{i-1},T^n_{i-1}}} \theta(t^n_i,\bz_k)\by(d\bs_k,d\bz_k)\biggr)\biggr].
			\end{align*}
			For the analysis of $B^{n,i,3}_{t,T}(u)$, we  recall the process $\la_{t,t'}(s,z)$ from Assumption~1 and define 
			\begin{equation}\label{eq:muni-2} 
				\ov \mu^{n,i}(ds,dz)=\int_\R \bone_{\{0\leq v\leq \la_{t^n_i,t^n_{i-1}}(s,z)\}}\pf(ds,dz,dv)
			\end{equation}
			and similarly $\wh{\ov\mu}^{n,i}$, $\ov y^{n,i}$ and $\ov\by^{n,i}$. By (2.18), 
			\begin{equation}\label{eq:la-expand-2} 
				\la(s,z)=\la_{t^n_i,t^n_{i-1}}(s,z)+O((T^2+\sqrt{\Den})J(z))
			\end{equation}
			for all $s\in[t^n_i,t+T]$, so
			an integral with respect to $y-\ov y^{n,i}$ over an interval of length $O(T)$ is $o(T^{3/2}+\sqrt{T}\Den^{1/4})$ (by a similar calculation to \eqref{eq:14}). Also, recall that  $\calg_{t^n_{i-1}}$ denotes the $\si$-field generated by the increments of $\mathbb{W}$ and $\pf$ after $t^n_{i-1}$ and that $\ov y^{n,i}(ds, dz)$ restricted to $[t^n_{i-1},\infty)\times\R^{d'}$ is $\calf_{t^n_i}\vee \calg_{t^n_{i-1}}$-measurable.
			
			With that in mind, we now have the decomposition $B^{n,i,3}_{t,T}(u)=B^{n,i,31}_{t,T}(u)+B^{n,i,32}_{t,T}(u)+B^{n,i,33}_{t,T}(u)+o^\uc(\sqrt{\Den}T+\Den)$,  where
			\begin{align*}
				B^{n,i,31}_{t,T}(u)	&= \E_{t^n_i}\biggl[  iu_{T^n_{i-1}}\sum_{k=2}^N \int_{\S^{(k)}_{t^n_i,T^n_i}\setminus \S^{(k)}_{t^n_{i-1},T^n_{i-1}}} \theta(t^n_i,\bz_k)\ov\by^{n,i} (d\bs_k,d\bz_k)\\
				&\quad\times \exp\biggl(iu_{T^n_{i-1}} \sum_{k=1}^N \int_{\S^{(k)}_{t^n_{i-1},T^n_{i-1}}} \theta(t^n_i,\bz_k)\ov\by^{n,i}(d\bs_k,d\bz_k)\biggr)\biggr],\\
				B^{n,i,32}_{t,T}(u)	&= \E_{t^n_i}\biggl[  iu_{T^n_{i-1}}\sum_{k=2}^N \int_{\S^{(k)}_{t^n_i,T^n_i}\setminus \S^{(k)}_{t^n_{i-1},T^n_{i-1}}} \theta(t^n_i,\bz_k)(\by-\ov\by^{n,i}) (d\bs_k,d\bz_k)\\
				&\quad\times \exp\biggl(iu_{T^n_{i-1}} \sum_{k=1}^N \int_{\S^{(k)}_{t^n_{i-1},T^n_{i-1}}} \theta(t^n_i,\bz_k)\ov\by^{n,i}(d\bs_k,d\bz_k)\biggr)\biggr],\\
				B^{n,i,33}_{t,T}(u)	&= \E_{t^n_i}\biggl[  iu_{T^n_{i-1}}\sum_{k=2}^N \int_{\S^{(k)}_{t^n_i,T^n_i}\setminus \S^{(k)}_{t^n_{i-1},T^n_{i-1}}} \theta(t^n_i,\bz_k)\by (d\bs_k,d\bz_k)\\
				&\quad\times iu_{T^n_{i-1}} \sum_{k=1}^N \int_{\S^{(k)}_{t^n_{i-1},T^n_{i-1}}} \theta(t^n_i,\bz_k)(\by-\ov\by^{n,i})(d\bs_k,d\bz_k)\\
				&\quad\times \exp\biggl(iu_{T^n_{i-1}} \sum_{k=1}^N \int_{\S^{(k)}_{t^n_{i-1},T^n_{i-1}}} \theta(t^n_i,\bz_k)\ov\by^{n,i}(d\bs_k,d\bz_k)\biggr)\biggr].
			\end{align*}
			In $B^{n,i,31}_{t,T}(u)$,
			the integral $\int_{\S^{(k)}_{t^n_i,T^n_i}\setminus \S^{(k)}_{t^n_{i-1},T^n_{i-1}}}$ is a sum of integrals of the form $$\iint_{t^n_{i-1}}^{t+T}\dotsi\iint_{t^n_{i-1}}^{t+T}\iint_{t^n_i}^{t^n_{i-1}}\dotsi \iint_{t^n_i}^{t^n_{i-1}}$$ (where the change point can occur anywhere). So if we further condition on $\calg_{t^n_{i-1}}$ in the computation of $B^{n,i,31}_{t,T}(u)$, the exponential term becomes known and we can first compute $\E[\iint_{t^n_i}^{t^n_{i-1}}\dotsi \iint_{t^n_i}^{t^n_{i-1}} (\cdots) \mid \calf_{t^n_i}\vee \calg_{t^n_{i-1}}] = \E_{t^n_i}[\iint_{t^n_i}^{t^n_{i-1}}\dotsi \iint_{t^n_i}^{t^n_{i-1}} (\cdots)]$, which removes the martingale components of $y^{n,i}$ and leaves us with a drift for these integrals. Two cases can occur: we either have at least two integrals of the form $\iint_{t^n_i}^{t^n_{i-1}}$, in which case the expectation under consideration is $O(\Den^2/\sqrt{T})=o(\Den)$, or we only have one integral of the form $\iint_{t^n_i}^{t^n_{i-1}}$ but then we must also have an integral of the form $\iint_{t^n_{i-1}}^{t+T}$ because $k\geq2$, in which case the expectation is $O^\uc(\Den)$. This shows that $
			B^{n,i,31}_{t,T}(u)=O^\uc(\Den)$.

			Next, we consider $B^{n,i,32}_{t,T}(u)$ and a plain size estimate already shows that $B^{n,i,32}_{t,T}(u)=o^\uc(\sqrt{\Den}T+\Den^{3/4})$, so we can make any $O^\uc(\sqrt{T})$ change  if we allow for an $o^\uc(\sqrt{\Den} T+\Den)$-error. In particular, in the two sums over $k$, it suffices to keep the term corresponding to $k=2$ and $k=1$, respectively, and only the martingale parts. Also, we can replace $\int_{\S^{(2)}_{t^n_i,T^n_i}\setminus \S^{(2)}_{t^n_{i-1},T^n_{i-1}}}$ by $\iint_{t^n_{i-1}}^{t+T}\iint_{t^n_i}^{t^n_{i-1}}$ and $y_\mm(dr,dz')y_\mm(ds,dz)-\ov y^{n,i}_\mm(dr,dz')\ov y^{n,i}_\mm(ds,dz)$ by $\ov y^{n,i}_\mm(dr,dz')(y_\mm-\ov y^{n,i}_\mm)(ds,dz)+(y_\mm-\ov y^{n,i}_\mm)(dr,dz')\ov y^{n,i}_\mm(ds,dz)$. 
			It follows that 
			\begin{equation}\label{eq:help-3} 
				\begin{split}
					B^{n,i,32}_{t,T}(u)	&=\E_{t^n_i}\biggl[  iu_{T^n_{i-1}} \iint_{t^n_{i-1}}^{t+T}\iint_{t^n_i}^{t^n_{i-1}}  \theta_\mm(t^n_i,z,z')\\
					&\quad \times(\ov  y^{n,i}_\mm(dr,dz')(y_\mm-\ov y^{n,i}_\mm)(ds,dz)+(y_\mm-  \ov y^{n,i}_\mm)(dr,dz')\ov y^{n,i}_\mm(ds,dz))\\
					&\quad\times \exp\biggl(iu_{T^n_{i-1}}  \iint_{t^n_{i-1}}^{t+T} \theta_\mm(t^n_i,z)\ov y_\mm^{n,i}(ds,dz)\biggr)\biggr]+o^\uc(\sqrt{\Den}T+\Den).
				\end{split}\!\!
			\end{equation}
			Similarly to \eqref{eq:chaos}, we have the chaos representation
			\begin{equation}\label{eq:chaos-2} \begin{split}
					&\exp\biggl(iu_{T^n_{i-1}}  \iint_{t^n_{i-1}}^{t+T} \theta_\mm(t^n_i,z)\ov y_\mm^{n,i}(ds,dz)\biggr)\\	& =e^{-\frac12u^2\si^2_{t^n_i}+T^n_{i-1}\vp_{t^n_i}(u_{T^n_{i-1}}) + o^\uc(T^2+\sqrt{\Den}) }\biggl(1+\sum_{k=1}^\infty \int_{\S^{(k)}_{t^n_{i-1},T^n_{i-1}}}   \wh\theta_{t^n_i}(u_{T^n_{i-1}},\bz_k) \ov\by^{n,i}_\mm(d\bs_k,d\bz_k)\biggr).
			\end{split}\end{equation} 
			We can remove the $o^\uc(T^2+\sqrt{\Den})$-term. Afterwards, the conditional expectation in \eqref{eq:help-3} wipes out the constant 
			in \eqref{eq:chaos-2}. If $k\geq1$, 
			the $k$th term in \eqref{eq:chaos-2} leads to $\exp(-\frac12u^2\si^2_{t^n_i}+T^n_{i-1}\vp_{t^n_i}(u_{T^n_{i-1}}))$ times
			\begin{align*}
				&-u^2_{T^n_{i-1}}\si_{t^n_i} \int_{t^n_{i-1}}^{t+T}   \E_{t^n_i}\biggl[ \iint_{t^n_i}^{t^n_{i-1}} \ga^\si(t^n_i,z')(\wh\mu-\wh{\ov\mu}^{n,i})(dr,dz')\\
				& \quad\quad\quad\quad\quad\quad\quad \quad\quad \quad\quad\quad\quad\times\int_{\S^{(k-1)}_{t^n_{i-1},s-t^n_{i-1}}} \wh \theta_{t^n_i}(u_{T^n_{i-1}},\bz_{k-1})\ov\by^{n,i}_\mm(d\bs_{k-1},d\bz_{k-1}) \biggr] ds\\
				& +iu_{T^n_{i-1}} \iint_{t^n_{i-1}}^{t+T}(e^{iu_{T^n_{i-1}}\ga(t^n_i,z)}-1)\E_{t^n_i}\biggl[  \iint_{t^n_i}^{t^n_{i-1}} \ga^\ga(t^n_i,z,z') (\wh\mu-\wh{\ov\mu}^{n,i})(dr,dz')\\
				& \quad\quad\quad\quad \quad\times\int_{\S^{(k-1)}_{t^n_{i-1},s-t^n_{i-1}}} \wh \theta_{t^n_i}(u_{T^n_{i-1}},\bz_{k-1})\ov\by^{n,i}_\mm(d\bs_{k-1},d\bz_{k-1})\la_{t^n_i,t^n_{i-1}}(s,z)  \biggr]F(dz)ds\\
				& -iu_{T^n_{i-1}} \iint_{t^n_{i-1}}^{t+T}(e^{iu_{T^n_{i-1}}\ga(t^n_i,z)}-1)\E_{t^n_i}\biggl[ \biggl( \si^\ga(t^n_i,z)\Delta^n_i W_t\\
				&\quad\quad\quad\quad\quad\quad\quad\quad \quad\quad \quad\quad\quad\quad\quad\quad\quad\quad\quad\quad\quad\quad+ \iint_{t^n_i}^{t^n_{i-1}} \ga^\ga(t^n_i,z,z') \wh{\ov\mu}^{n,i}(dr,dz')\biggr)\\
				& \quad     \times\int_{\S^{(k-1)}_{t^n_{i-1},s-t^n_{i-1}}} \wh \theta_{t^n_i}(u_{T^n_{i-1}},\bz_{k-1})\ov\by^{n,i}_\mm(d\bs_{k-1},d\bz_{k-1})(\la(s,z)-\la_{t^n_i,t^n_{i-1}}(s,z) )_-  \biggr]F(dz)ds.
			\end{align*}
			The first part is clearly equal zero, as we can condition on $\calf_{t^n_{i-1}}$ first. 
			The second part is also equal to zero
			%
			when we first take conditional expectation with respect to $\calf_{t^n_i} \vee \calg_{t^n_{i-1}}$. And the last part is $o^\uc(\sqrt{\Den}T^2+\Den)$ by \eqref{eq:la-expand-2}. 
			We have thus proved that $B^{n,i,32}_{t,T}(u)=o^\uc(\sqrt{\Den}T+\Den)$.
			
			The term $B^{n,i,33}_{t,T}(u)$ is already $o^\uc(\sqrt{\Den}T+\Den^{3/4})$, so it suffices to keep the lowest order term in all three sums over $k$. Moreover, we can replace $y$ in the first line by $\ov y^{n,i}_\mm$ and only keep the martingale terms. 
			After making these changes, we obtain
			\begin{align*}
				B^{n,i,33}_{t,T}(u)	&= -u_{T^n_{i-1}}^2\E_{t^n_i}\biggl[  \iint_{t^n_{i-1}}^{t+T}\iint_{t^n_i}^{t^n_{i-1}} \theta_\mm(t^n_i,z,z')\ov y^{n,i}_\mm(dr,dz')\ov y_\mm^{n,i}(ds,dz)\\
				&\quad\quad \times      \iint_{t^n_{i-1}}^{t+T} \theta_\mm(t^n_i,z)(y_\mm-\ov y^{n,i}_\mm)(ds,dz)\\
				&\qquad\times \exp\biggl(iu_{T^n_{i-1}}    \iint_{t^n_{i-1}}^{t+T}\theta_\mm(t^n_i,z) \ov y_\mm^{n,i}(ds,dz)\biggr)\biggr] +o^\uc(\sqrt{\Den} T+\Den).
			\end{align*}
			We use \eqref{eq:chaos-2} and integration by parts to get
			\begin{align*}
				&B^{n,i,33}_{t,T}(u)\\
				&~	= -u_{T^n_{i-1}}^2e^{-\frac12u^2\si^2_{t^n_i}+T^n_{i-1}\vp_{t^n_i}(u_{T^n_{i-1}})}\E_{t^n_i}\biggl[ \biggl(1+\sum_{k=1}^\infty \int_{\S^{(k)}_{t^n_{i-1},T^n_{i-1}}}   \wh\theta_{t^n_i}(u_{T^n_{i-1}},\bz_k) \ov\by^{n,i}_\mm(d\bs_k,d\bz_k)\biggr)\\ &~\quad\times\biggl(\iint_{t^n_{i-1}}^{t+T}\iint_{t^n_{i-1}}^{s-}\iint_{t^n_i}^{t^n_{i-1}} \theta_\mm(t^n_i,z,z'')\theta_\mm(t^n_i,z')\\
				&~\quad\qquad\qquad\qquad\qquad\qquad\qquad\qquad\times\ov y^{n,i}_\mm(dw,dz'') (y_\mm-\ov y^{n,i}_\mm)(dr,dz')\ov y_\mm^{n,i}(ds,dz)\\
				&~\qquad+\iint_{t^n_{i-1}}^{t+T}\iint_{t^n_{i-1}}^{s-}\iint_{t^n_i}^{t^n_{i-1}} \theta_\mm(t^n_i,z',z'')\theta_\mm(t^n_i,z)\\
				&~\quad\qquad\qquad\qquad\qquad\qquad\qquad\qquad\times\ov y^{n,i}_\mm(dw,dz'') \ov y_\mm^{n,i}(dr,dz')(y_\mm-\ov y^{n,i}_\mm)(ds,dz)\\
				&~\qquad-\iint_{t^n_{i-1}}^{t+T}\iint_{t^n_i}^{t^n_{i-1}} \theta_\mm(t^n_i,z,z')\ga(t^n_i,z)\\
				&~\quad\qquad\qquad\qquad\qquad\qquad\qquad \times\ov y^{n,i}_\mm(dr,dz')(\la(s,z)-\la_{t^n_i,t^n_{i-1}}(s,z))_-F(dz)ds\biggl)\biggr]\\
				&\quad\quad+o^\uc(\sqrt{\Den} T+\Den^{3/4}).
			\end{align*}
			The quadratic variation term in the fifth line is negligibly small. For the other two terms, we keep $y_\mm-\ov y_\mm^{n,i}$ unchanged but replace the other $\ov y_\mm^{n,i}$'s (including those appearing in the first line of the above display) by $y_\mm^{n,i}$, which was defined after \eqref{eq:muni}. Since $\la(t^n_i,z)-\la_{t^n_{i},t^n_{i-1}}(s,z) = O(\sqrt{T})$ for $s\in[t^n_{i-1},t+T]$, the error incurred through this modification is asymptotically negligible. After this change, when we calculate the $\calf_{t^n_i}$-conditional covariance between the term $\int_{\S^{(k)}_{t^n_{i-1},T^n_{i-1}}}   \wh\theta_{t^n_i}(u_{T^n_{i-1}},\bz_k)  \by^{n,i}_\mm(d\bs_k,d\bz_k)$ and one of the two triple integrals, at some point, we encounter the predictable covariation between $y_\mm-\ov y_\mm^{n,i}$ and $y_\mm^{n,i}$, which is equal to 
			\begin{equation}\label{eq:qv} 
				\Bigl([(\la(t^n_i,z)\wedge \la(s,z))-\la_{t^n_{i-1},t^n_i}(s,z)]_+ - [(\la(t^n_i,z)\wedge \la_{t^n_{i-1},t^n_i}(s,z))-\la(s,z)]_+\Bigr) F(dz)ds.
			\end{equation}
			Since the parenthesis is $O(T^2+\sqrt{\Den})$, it follows that the contribution of the two triple integrals to $B^{n,i,33}_{t,T}(u)$ is $o^\uc(\sqrt{\Den}T^2+\Den)$. We have shown that 
			\begin{equation}\label{eq:B3} 
				B^{n,i,3}_{t,T}(u)=o^\uc(\sqrt{\Den}T)+O^\uc(\Den).
			\end{equation}

			We move to $B^{n,i,2}_{t,T}(u)$ and notice that the second line in its definition is $O^\uc(\sqrt{\Den/T})$. 
			Thus, if we allow for an $o^\uc(\Den/\sqrt{T})$-error, we only need to keep the dominating part in each of the three factors defining $B^{n,i,2}_{t,T}(u)$. Because $\Den/T\to0$, we obtain
			\begin{align*}
				B^{n,i,2}_{t,T}(u)	&= \E_{t^n_i}\biggl[ iu_{T^n_{i-1}}   \int_{t^n_{i-1}}^{t+T} \int_{t^n_i}^{t^n_{i-1}} \theta^2_\cc(t^n_i) y_\cc(dr)y_\cc(ds) \\
				&\quad\times iu_{T^n_{i-1}}\si_{t^n_i}(W_{t^n_{i-1}}-W_{t^n_i}) e^{iu_{T^n_{i-1}}\si_{t^n_i}(W_{t+T}-W_{t^n_{i-1}})}\biggr] + o^\uc(\Den/\sqrt{T})\\
				&=-u_{T^n_{i-1}}^2 \si_{t^n_i}\si^\si_{t^n_i} \E_{t^n_i}\biggl[ (W_{t+T}-W_{t^n_{i-1}}) (W_{t^n_{i-1}}-W_{t^n_i})^2 e^{iu_{T^n_{i-1}}\si_{t^n_i}(W_{t+T}-W_{t^n_{i-1}})}\biggr]\\
				&\quad + o^\uc(\Den/\sqrt{T}),
			\end{align*}
			where the second step follows by conditioning on $W$. Since $W$ has independent increments, 
			\begin{equation}\label{eq:B2}\begin{split}
					B^{n,i,2}_{t,T}(u)	&= -u^2_{T^n_{i-1}} \si_{t^n_i}\si^\si_{t^n_i}  \Den\E_{t^n_i}\biggl[ (W_{t+T}-W_{t^n_{i-1}})  e^{iu_{T^n_{i-1}}\si_{t^n_i}(W_{t+T}-W_{t^n_{i-1}})}\biggr] \\
					&\quad+ o^\uc(\Den/\sqrt{T}),\\
					&=-iu^3 \si^2_{t^n_i}\si^\si_{t^n_i} \frac{\Den}{\sqrt{T^n_{i-1}}} e^{-\frac12 u^2\si^2_{t^n_i}} + o^\uc(\Den/\sqrt{T})\\
					&= -iu^3 \si^2_{t^n_i}\si^\si_{t^n_i} \frac{\Den}{\sqrt{T^n_{i-1}}} \Theta_{t^n_i,T^n_{i-1}}(u_{T^n_{i-1}}) + o^\uc(\Den/\sqrt{T}).
			\end{split}\end{equation}
			
			Finally, we analyze $B^{n,i,1}_{t,T}(u)$. To this end, we first note that similarly to \eqref{eq:chaos}, we have
			\begin{align*}
				&\exp\biggl(iu_{T^n_{i-1}}\iint_{t^n_i}^{t^n_{i-1}}\theta(t^n_i,z)y (ds,dz)\biggr)-1\\
				&\quad	=\Theta_{t^n_i,\Den}(u_{T^n_{i-1}})\exp\biggl(\int_{t^n_i}^{t^n_{i-1}}( \vp_{t^n_i}(r,u_{T^n_{i-1}})-\vp_{t^n_i}(u_{T^n_{i-1}}))dr\biggr)\\
				&\quad\quad\times\biggl(1+\sum_{k=1}^\infty \int_{\S^{(k)}_{t^n_{i},t^n_{i-1}-t^n_i}}   \wh\theta_{t^n_i}(u_{T^n_{i-1}},\bz_k) \by_\mm(d\bs_k,d\bz_k)\biggr)-1\\
				&\quad=\Bigl(\Theta_{t^n_i,\Den}(u_{T^n_{i-1}})-1\Bigr)+\Theta_{t^n_i,\Den}(u_{T^n_{i-1}})\sum_{k=1}^\infty \int_{\S^{(k)}_{t^n_{i},t^n_{i-1}-t^n_i}}   \wh\theta_{t^n_i}(u_{T^n_{i-1}},\bz_k) \by_\mm(d\bs_k,d\bz_k)\\
				&\quad\quad+o^\uc(\Den/\sqrt {T}),
			\end{align*}
			where the last step holds because $\vp_{t^n_i}(r,u_{T^n_{i-1}})-\vp_{t^n_i}(u_{T^n_{i-1}}) = o^\uc(\lvert r-t^n_i\rvert^{1/2}/ {T})$. Thus, $B^{n,i,1}_{t,T}(u)= B^{n,i,11}_{t,T}(u)+B^{n,i,12}_{t,T}(u)+o^\uc(\Den/\sqrt{T})$, where
			\begin{align*}
				B^{n,i,11}_{t,T}(u)&= \Bigl(\Theta_{t^n_i,\Den}(u_{T^n_{i-1}})-1\Bigr)\E_{t^n_i}\biggl[\exp\biggl(iu_{T^n_{i-1}} \sum_{k=1}^N \int_{\S^{(k)}_{t^n_{i-1},T^n_{i-1}}} \theta(t^n_i,\bz_k)\by(d\bs_k,d\bz_k)\biggr)  \biggr],\\
				B^{n,i,12}_{t,T}(u)	&=\Theta_{t^n_i,\Den}(u_{T^n_{i-1}})\sum_{k=1}^\infty \E_{t^n_i}\biggl[\int_{\S^{(k)}_{t^n_{i},t^n_{i-1}-t^n_i}}   \wh\theta_{t^n_i}(u_{T^n_{i-1}},\bz_k) \by_\mm(d\bs_k,d\bz_k)\\
				&\quad\quad\quad\quad\quad\quad\quad\quad\quad\quad \times\exp\biggl(iu_{T^n_{i-1}} \sum_{k=1}^N \int_{\S^{(k)}_{t^n_{i-1},T^n_{i-1}}} \theta(t^n_i,\bz_k)\by(d\bs_k,d\bz_k)\biggr)  \biggr].
			\end{align*}
			For $B^{n,11}_{t,T}(u)$, because $ \Theta_{t^n_i,\Den}(u_{T^n_{i-1}}) -1=O^\uc(\Den/T)$, we only need the keep the terms corresponding to $k=1$ and $k=2$  (and for $k=2$ only integrals with respect to $W$), if we allow for an $o^\uc(\Den/\sqrt{T})$-error. Consequently,
			\begin{align*}
				B^{n,i,11}_{t,T}(u)&= \Bigl( \Theta_{t^n_i,\Den}(u_{T^n_{i-1}}) -1\Bigr)
				\E_{t^n_i}\Bigl[e^{iu_{T^n_{i-1}}  \iint_{t^n_{i-1}}^{t+T} \theta(t^n_i,z)y(ds,dz)} \Bigr]  +\Bigl( \Theta_{t^n_i,\Den}(u_{T^n_{i-1}}) -1\Bigr)\\
				& \times\E_{t^n_i}\biggl[e^{iu_{T^n_{i-1}} \si_{t^n_i}(W_{t+T}-W_{t^n_{i-1}}) }  iu_{T^n_{i-1}}  \si^{\si}_{t^n_i}\int_{t^n_{i-1}}^{t+T}\int_{t^n_{i-1}}^s dW_r dW_s \biggr]  +o^\uc(\Den/\sqrt{T}).
			\end{align*}
			Let us denote the two summands on the right-hand side by $B^{n,i,111}_{t,T}(u)$ and $	B^{n,i,112}_{t,T}(u)$, respectively.
			By \eqref{eq:chaos}, we have 
			\begin{equation}\label{eq:B11} 
				B^{n,i,111}_{t,T}(u)	
				=(\Theta_{t^n_i, \Den}(u_{T^n_{i-1}})-1)\Theta_{t^n_i, T^n_{i-1}}(u_{T^n_{i-1}})+o^\uc(\Den/\sqrt{T}).
			\end{equation}
			The analysis of $B^{n,i,112}_{t,T}(u)$ is very similar to how we computed the terms in \eqref{eq:5terms}. Using \eqref{eq:chaos-1} and the assumption that $\Den/T\to0$, we find that
			\begin{equation}\label{eq:B12}\begin{split}
					B^{n,i,112}_{t,T}(u)&=\frac14iu^5\si_{t^n_i}^\si\si_{t^n_i}^4 \frac{ \Den}{\sqrt{T^n_{i-1}}} e^{-\frac12 u^2 \si^2_{t^n_i} }+o^\uc(\Den/\sqrt{T})\\
					&		=\frac14iu^5\si_{t^n_i}^\si\si_{t^n_i}^4 \frac{ \Den}{\sqrt{T^n_{i-1}}} 	\Theta_{t^n_i,T^n_{i-1}}(u_{T^n_{i-1}})+o^\uc(\Den/\sqrt{T}). 	
			\end{split}\end{equation}

			We proceed to $B^{n,i,12}_{t,T}(u)$ and recall the fact that integrals with respect to $y-\ov y^{n,i}$  come with an additional factor of $o(T+\Den^{1/4})$ (see \eqref{eq:muni-2}). Using the formula
			\begin{align*}
				e^{i(a_1+a_2+b_1+b_2)}&=e^{i(a_1+b_1)}+e^{i(a_1+b_1)}i(a_2+b_2)+O((a_2+b_2)^2) 	\\
				&=e^{i(a_1+b_1)}+e^{ia_1}ia_2-e^{ia_1}a_2b_1+e^{ia_1}ib_2+O((a_2+b_2)^2+a_2b_1^2+b_1b_2) 
			\end{align*}
			and omitting higher-order terms where possible, we have $B^{n,i,12}_{t,T}(u)= \sum_{j=1}^5 B^{n,i,12j}_{t,T}(u)+o^\uc(\Den/\sqrt{T}+\sqrt{\Den}T)$, where
			\begin{align*}
				B^{n,i,121}_{t,T}(u)&=\Theta_{t^n_i,\Den}(u_{T^n_{i-1}})\sum_{k=1}^\infty \E_{t^n_i}\biggl[\int_{\S^{(k)}_{t^n_{i},t^n_{i-1}-t^n_i}}   \wh\theta_{t^n_i}(u_{T^n_{i-1}},\bz_k) \by_\mm(d\bs_k,d\bz_k)\\
				&\quad\quad\quad\quad\quad\quad\quad\quad \times\exp\biggl(iu_{T^n_{i-1}} \sum_{k=1}^N \int_{\S^{(k)}_{t^n_{i-1},T^n_{i-1}}} \theta(t^n_i,\bz_k)\ov\by^{n,i}(d\bs_k,d\bz_k)\biggr)  \biggr],\\
				B^{n,i,122}_{t,T}(u)&=\Theta_{t^n_i,\Den}(u_{T^n_{i-1}})\sum_{k=1}^\infty \E_{t^n_i}\biggl[\int_{\S^{(k)}_{t^n_{i},t^n_{i-1}-t^n_i}}   \wh\theta_{t^n_i}(u_{T^n_{i-1}},\bz_k) \by_\mm(d\bs_k,d\bz_k)\\
				&\quad\times e^{iu_{T^n_{i-1}}   \iint_{t^n_{i-1}}^{t+T} \theta(t^n_i,z)\ov y^{n,i}(ds,dz)}  iu_{T^n_{i-1}}\iint_{t^n_{i-1}}^{t+T} \ga(t^n_i,z)(\wh\mu-\wh{\ov\mu}^{n,i})(ds,dz) \biggr],\\
				B^{n,i,123}_{t,T}(u)&=  \E_{t^n_i}\biggl[\iint_{t^n_{i}}^{t^n_{i-1}}  \wh\theta_{t^n_i}(u_{T^n_{i-1}},z) y_\mm(ds,dz)e^{iu_{T^n_{i-1}}   \iint_{t^n_{i-1}}^{t+T} \theta_\mm(t^n_i,z)\ov y^{n,i}_\mm(ds,dz)} \\
				&\quad\times iu_{T^n_{i-1}}\iint_{t^n_{i-1}}^{t+T} \ga(t^n_i,z)(\wh\mu-\wh{\ov\mu}^{n,i})(ds,dz)\\
				&\quad\times iu_{T^n_{i-1}} \iint_{t^n_{i-1}}^{t+T} \iint_{t^n_{i-1}}^{s} \theta(t^n_i,z,z')\ov y^{n,i}(dr,dz')\ov y^{n,i}(ds,dz)\biggr],\\
				B^{n,i,124}_{t,T}(u)&=  \E_{t^n_i}\biggl[\iint_{t^n_{i}}^{t^n_{i-1}}   \wh\theta_{t^n_i}(u_{T^n_{i-1}},z) y_\mm(ds,dz) e^{iu_{T^n_{i-1}}   \iint_{t^n_{i-1}}^{t+T} \theta_\mm(t^n_i,z)\ov y^{n,i}_\mm(ds,dz)}\\
				&\quad\times iu_{T^n_{i-1}}\iint_{t^n_{i-1}}^{t+T} \iint_{t^n_{i-1}}^{s-} \theta_\mm(t^n_i,z,z')\ov y^{n,i}_\mm(dr,dz')(y_\mm-\ov y_\mm^{n,i})(ds,dz) \biggr],\\
				B^{n,i,125}_{t,T}(u)&=  \E_{t^n_i}\biggl[\iint_{t^n_{i}}^{t^n_{i-1}}   \wh\theta_{t^n_i}(u_{T^n_{i-1}},z) y_\mm(ds,dz) e^{iu_{T^n_{i-1}}   \iint_{t^n_{i-1}}^{t+T} \theta_\mm(t^n_i,z)\ov y^{n,i}_\mm(ds,dz)}\\
				&\quad\times iu_{T^n_{i-1}}\iint_{t^n_{i-1}}^{t+T} \iint_{t^n_{i-1}}^{s-} \theta_\mm(t^n_i,z,z')  (y_\mm-\ov y_\mm^{n,i})(dr,dz') \ov y^{n,i}_\mm(ds,dz)\biggr].
			\end{align*}
			Since the exponential term in $B^{n,i,121}_{t,T}(u)$ is $\calf_{t^n_i}\vee \calg_{t^n_{i-1}}$-measurable, we  see that $B^{n,i,121}_{t,T}(u)=0$ by further conditioning on $\calg_{t^n_{i-1}}$. For the other four terms,
			we first rewrite  the exponential term as in \eqref{eq:chaos-2}, ignoring the $o^\uc$-term in \eqref{eq:chaos-2} (but with another $\exp(iu\sqrt{T^n_{i-1}}\al_{t^n_i})$-factor in the case of $B^{n,i,122}_{t,T}(u)$). Then we make use of the following consequence of the integration by parts formula for semimartingales:
			If $L$, $M$ and $N$ are square-integrable martingales, then for all $s<\tau$,
			\begin{align*}
				\E_s[L_\tau M_\tau N_\tau]	&=L_sX_s N_s + \E_s\biggl[  \int_s^\tau L_r d[M,N]_r   + \int_s^\tau N_r d[L,M]_r  + \int_s^\tau M_rd[L,N]_r \\
				&\quad+\int_s^\tau d[[L,N],M]_r\biggr]. 
			\end{align*}
			For $B^{n,i,122}_{t,T}(u)$ and $B^{n,i,124}_{t,T}(u)$, we apply this rule to $s=t^n_i$, $\tau=t+T$, $L_\tau$ given by $L_\tau=\int_{\S^{(k)}_{t^n_{i},t^n_{i-1}\wedge \tau-t^n_i}}  \wh\theta_{t^n_i}(u_{T^n_{i-1}},\bz_k) \by_\mm(d\bs_k,d\bz_k)$ (for general $k$ in the first case and $k=1$ in the second), 
			$M_\tau$ given by one of the terms in parenthesis in \eqref{eq:chaos-2},
			and $N_\tau$ given by what follows the exponential term in the definition of $B^{n,i,122}_{t,T}(u)$ and $B^{n,i,124}_{t,T}(u)$. With this choice, we have $L_s=L_{t^n_i}=0$ and $[L,M]=[L,N]=0$ (because $L$ only lives on $[t^n_i,t^n_{i-1}]$ while $M$ and $N$ only live on $[t^n_{i-1},t+T]$), 
			so that the above display reduces to
			\begin{equation*}
				\E_s[L_\tau X_\tau N_\tau]	=  \E_s\biggl[  \int_s^\tau L_r d[M,N]_r \biggr]=\E_s\biggl[  \int_s^\tau L_r d\langle M,N\rangle_r \biggr]
			\end{equation*}
			in our setting. Moreover, in our example, we either have $d\langle M,N\rangle_s = -\int_{\R^{d'}} (\cdots)(\la(s,z)-\la_{t^n_i,t^n_{i-1}}(s,z))_-ds F(dz)$ 
			or, if we pick the constant term in \eqref{eq:chaos-2}, $d\langle M,N\rangle_s =0$. A power counting argument now shows that the above conditional expectation is $o^\uc(\Den/\sqrt{T})$ and thus,
			$B^{n,i,122}_{t,T}(u)+B^{n,i,124}_{t,T}(u)=o^\uc(\Den/\sqrt{T})$.
			
			For $B^{n,i,125}_{t,T}(u)$, we perform an additional step and replace $\ov y^{n,i}_\mm(ds,dz)$, which appears twice in its definition, by $y^{n,i}_\mm(ds,dz)$ as defined after \eqref{eq:muni} (note that we do not change $y_\mm(ds,dz)$ or  $(y_\mm-y_\mm^{n,i})(dr,dz')$). This change leads to an error of size $o^\uc(\sqrt{\Den}(T+\Den^{1/4})(T+\Den^{1/4}+T^{1/4}))=o^\uc(\sqrt{\Den}T + \Den + \Den^{3/4}T^{1/4})=o^\uc(\sqrt{\Den}T+\Den/\sqrt{T})$, which is negligible. With this modification, we can now repeat the argument above and obtain
			\begin{align*}
				&B^{n,i,125}_{t,T}(u)\\
				&\quad= e^{-\frac12u^2\si^2_{t^n_i}+T^n_{i-1}\vp_{t^n_i}(u_{T^n_{i-1}})} iu_{T^n_{i-1}}\sum_{k=1}^\infty \E_{t^n_i}\biggl[ \iint_{t^n_{i-1}}^{t+T}\iint_{t^n_{i}}^{t^n_{i-1}}   \wh\theta_{t^n_i}(u_{T^n_{i-1}},z) y_\mm(ds,dz)  \\
				&\qquad\times \int_{\S^{(k-1)}_{t^n_{i-1},s-t^n_{i-1}}}   \wh\theta_{t^n_i}(u_{T^n_{i-1}},\bz_{k-1})  \by^{n,i}_\mm(d\bs_{k-1},d\bz_{k-1}) \\
				&\qquad\times  \biggl(iu_{T^n_{i-1}}\si_{t^n_i}\iint_{t^n_{i-1}}^{s} \ga^\si(t^n_i,z')  (\wh\mu-\wh{\ov\mu}^{n,i})(dr,dz')\delta_0(dz)ds\\
				&\quad\qquad +(e^{iu_{T^n_{i-1}}\ga(t^n_i,z)}-1)\iint_{t^n_{i-1}}^{s} \ga^\ga(t^n_i,z,z')  (\wh\mu-\wh{\ov\mu}^{n,i})(dr,dz')\la(t^n_i,z)F(dz)ds\biggr)\biggr]\\
				&\qquad+o^\uc(\sqrt{\Den}T+\Den/\sqrt{T}).
			\end{align*}
			In order to evaluate the conditional expectation,  we first condition on $\calf_{t^n_{i-1}}$. If $k=1$, the resulting conditional expectation vanishes.
			If $k\geq2$, we have a product of the two martingale terms $ \int_{\S^{(k-1)}_{t^n_{i-1},s-t^n_{i-1}}}   \wh\theta_{t^n_i}(u_{T^n_{i-1}},\bz_{k-1})  \by^{n,i}_\mm(d\bs_{k-1},d\bz_{k-1})$ and $\iint_{t^n_{i-1}}^{s} (\cdots)  (\wh\mu-\wh{\ov\mu}^{n,i})(dr,dz')$. By the integration-by-parts formula, only the quadratic variation term survives the $\calf_{t^n_{i-1}}$-conditional expectation. The resulting predictable covariation is an integral with respect to \eqref{eq:qv}, from which the reader can verify that $B^{n,i,125}_{t,T}(u)=o^\uc(\sqrt{\Den}T+\Den/\sqrt{T})$.
			
			For $B^{n,i,123}_{t,T}(u)$ the argument is similar: we only need to use integration by parts to write the product of the two terms after the exponential as a sum of two martingale terms (to which we can apply the analysis of $B^{n,i,125}_{t,T}(u)$ or $B^{n,i,125}_{t,T}(u)$) plus a drift term (which includes a predictable covariation term of the form $- (\cdots)(\la(s,z)-\la_{t^n_i,t^n_{i-1}}(s,z))_-ds F(dz)$, so the previous argument applies again). The upshot is that $B^{n,i,123}_{t,T}(u)=o^\uc(\sqrt{\Den}T+\Den/\sqrt{T})$ and hence,
			\begin{equation}\label{eq:B12-2} 
				B^{n,i,12}_{t,T}(u)=o^\uc(\Den/\sqrt{T}+\sqrt{\Den}T).
			\end{equation}

			Finally, we discuss $C^{n,i}_{t,T}(u)$, which can be written as
			\begin{align*}
				C^{n,i}_{t,T}(u)	&=\E_{t^n_i}\biggl[\exp\biggl(iu_{T^n_{i}}  \sum_{k=1}^N \int_{\S^{(k)}_{t^n_{i},T^n_{i}}} \theta(t^n_i,\bz_k)\by(d\bs_k,d\bz_k)\biggr)\\
				&\quad\times\biggl(\exp\biggl(i(u_{T^n_{i-1}} -u_{T^n_i}) \sum_{k=1}^N \int_{\S^{(k)}_{t^n_{i},T^n_{i}}} \theta(t^n_i,\bz_k)\by(d\bs_k,d\bz_k)\biggr)-1\biggr)\biggr].
			\end{align*}
			Since $u_{T^n_{i-1}}-u_{T^n_i}=O^\uc(\Den/T^{3/2})$ and $\int_{\S^{(k)}_{t^n_{i},T^n_{i}}} \theta(t^n_i,\bz_k)\by(d\bs_k,d\bz_k)=O(T^{k/2})$, we can discard terms corresponding to $k\geq3$ in both sums in the previous display in exchange of an $O^\uc(\Den)$-error. Moreover, 
			\begin{align*}
				&\exp\biggl(i(u_{T^n_{i-1}} -u_{T^n_i}) \sum_{k=1}^2 \int_{\S^{(k)}_{t^n_{i},T^n_{i}}} \theta(t^n_i,\bz_k)\by(d\bs_k,d\bz_k)\biggr)-1 \\
				&\quad=\Bigl(e^{i(u_{T^n_{i-1}} -u_{T^n_i})  \iint_{t^n_{i}}^{t+T} \theta(t^n_i,z)y(ds,dz) }-1\Bigr)+e^{i(u_{T^n_{i-1}} -u_{T^n_i})\iint_{t^n_{i}}^{t+T} \theta(t^n_i,z)y(ds,dz) } \\
				&\quad\quad \times i(u_{T^n_{i-1}} -u_{T^n_i})\int_{\S^{(2)}_{t^n_{i},T^n_{i}}} \theta(t^n_i,\bz_2)\by(d\bs_2,d\bz_2) + O^\uc( \Den^2/T)\\
				&\quad=\Bigl(e^{i(u_{T^n_{i-1}} -u_{T^n_i})  \iint_{t^n_{i}}^{t+T} \theta(t^n_i,z)y(ds,dz) }-1\Bigr)+e^{i(u_{T^n_{i-1}} -u_{T^n_i})\int_{t^n_i}^{t+T}\si_{t^n_i}dW_s} \\
				&\quad\quad \times i(u_{T^n_{i-1}} -u_{T^n_i})  \int_{t^n_{i}}^{t+T} \int_{t^n_i}^{s} \theta^2_\cc(t^n_i)y_\cc(dr)y_\cc(ds) +o^\uc(\Den/\sqrt{T}).
			\end{align*}
			Since only choosing $dW_rdW_s$ for $y_\cc(dr)y_\cc(ds)$ yields a nonzero conditional expectation, we obtain from \eqref{eq:chaos} that
			\begin{align*}
				&C^{n,i}_{t,T}(u)\\
				&\quad	=\E_{t^n_i}\Bigl[e^{i u_{T^n_{i}}\int_{t^n_{i}}^{t+T} \theta(t^n_i,z)y(ds,dz)}\Bigl(e^{i(u_{T^n_{i-1}} -u_{T^n_i})  \int_{t^n_{i}}^{t+T} \theta(t^n_i,z)y(ds,dz) }-1\Bigr)\Bigr] \\ 
				&\qquad+\E_{t^n_i}\biggl[e^{i u_{T^n_{i}}\int_{t^n_i}^{t+T}\si_{t^n_i}dW_s}\Bigl(e^{i(u_{T^n_{i-1}} -u_{T^n_i}) \int_{t^n_i}^{t+T}\si_{t^n_i}dW_s}-1\Bigr)  iu_{T^n_i}\si^\si_{t^n_i}\int_{t^n_{i}}^{t+T} \int_{t^n_i}^{s} dW_{r}dW_{s}\biggr]\\
				&\qquad + \E_{t^n_i}\biggl[e^{i u_{T^n_{i-1}}\int_{t^n_i}^{t+T}\si_{t^n_i}dW_s}i(u_{T^n_{i-1}} -u_{T^n_i})\si^\si_{t^n_i}\int_{t^n_{i}}^{t+T} \int_{t^n_i}^{s} dW_{r}dW_{s}\biggr]+o^\uc(\Den/\sqrt{T})\\
				&\quad=\E_{t^n_i}\Bigl[e^{i u_{T^n_{i}}\int_{t^n_{i}}^{t+T} \theta(t^n_i,z)y(ds,dz)}\Bigl(e^{i(u_{T^n_{i-1}} -u_{T^n_i})  \int_{t^n_{i}}^{t+T} \theta(t^n_i,z)y(ds,dz) }-1\Bigr)\Bigr]\\
				&\qquad +iu_{T^n_{i-1}}\si^\si_{t^n_i} \E_{t^n_i}\biggl[e^{i u_{T^n_{i-1}}\int_{t^n_i}^{t+T}\si_{t^n_i}dW_s}\int_{t^n_{i}}^{t+T} \int_{t^n_i}^{s} dW_rdW_{s}\biggr]\\
				&\qquad-iu_{T^n_{i}}\si^\si_{t^n_i} \E_{t^n_i}\biggl[e^{i u_{T^n_{i}}\int_{t^n_i}^{t+T}\si_{t^n_i}dW_s}\int_{t^n_{i}}^{t+T} \int_{t^n_i}^{s} dW_{r}dW_{s}\biggr]+o^\uc(\Den/\sqrt{T}).
			\end{align*}
			Note that $$\exp\biggl(i u_{T^n_{i}}\int_{t^n_{i}}^{t+T} \theta(t^n_i,z)y(ds,dz)\biggr)=\exp\biggl(i u_{T^n_{i}}\int_{t^n_{i}}^{t+T} \theta(t^n_i,z)y^{n,i}(ds,dz)\biggr)+o^\uc(\sqrt{T})$$ by \eqref{eq:chaos-1} and \eqref{eq:chaos}. By a similar argument,  we have that \begin{align*}
				&\exp\biggl(i(u_{T^n_{i-1}} -u_{T^n_i})  \int_{t^n_{i}}^{t+T} \theta(t^n_i,z)y(ds,dz)\biggr )-1\\
				&\quad=\exp\biggl(i(u_{T^n_{i-1}} -u_{T^n_i})  \int_{t^n_{i}}^{t+T} \theta(t^n_i,z)y^{n,i}(ds,dz) \biggr)-1 + o^\uc(\Den/\sqrt{T})\\ &\quad=O^\uc(\Den/T).
			\end{align*} Thus, we can replace $y$ by $y^{n,i}$ on the right-hand side of the last display and, using the assumption that $\Den/T\to0$, obtain
			\begin{equation}\label{eq:C}\begin{split}
					C^{n,i}_{t,T}(u)	& =\Theta_{t^n_i,T^n_i}(u_{T^n_{i-1}})-\Theta_{t^n_i,T^n_i}(u_{T^n_{i}})-\frac12 i u^3\si^2_{t^n_i} \si^\si_{t^n_i}\\
					&\quad\times\biggl( \frac{(T^n_i)^2}{(T^n_{i-1})^{3/2}} e^{-\frac12 u^2\si^2_{t^n_i} T^n_i/T^n_{i-1}}-\sqrt{T^n_i} e^{-\frac12 u^2\si^2_{t^n_i} } \biggr)+o^\uc(\Den/\sqrt{T})\\
					&=\Theta_{t^n_i,T^n_i}(u_{T^n_{i-1}})-\Theta_{t^n_i,T^n_i}(u_{T^n_{i}})-\frac34 i u^3\si^2_{t^n_i} \si^\si_{t^n_i}e^{-\frac12 u^2\si^2_{t^n_i}} \frac{\Den}{\sqrt{T^n_{i}}} \\
					&\quad+ \frac14iu^5\si^4_{t^n_i}\si^\si_{t^n_i}e^{-\frac12 u^2\si^2_{t^n_i}} \frac{\Den}{\sqrt{T^n_{i}}}+o^\uc(\Den/\sqrt{T})\\
					&=\Theta_{t^n_i,T^n_i}(u_{T^n_{i-1}})-\Theta_{t^n_i,T^n_i}(u_{T^n_{i}})-\frac34 i u^3\si^2_{t^n_i} \si^\si_{t^n_i}	\Theta_{t^n_i,T^n_{i-1}}(u_{T^n_{i-1}})\frac{\Den}{\sqrt{T^n_{i-1}}} \\
					&\quad+ \frac14iu^5\si^4_{t^n_i}\si^\si_{t^n_i}	\Theta_{t^n_i,T^n_{i-1}}(u_{T^n_{i-1}}) \frac{\Den}{\sqrt{T^n_{i-1}}}+o^\uc(\Den/\sqrt{T}).
			\end{split}\end{equation}
			Gathering the approximations   in \eqref{eq:A3}, \eqref{eq:A-simple}, \eqref{eq:A12-1}--\eqref{eq:A111}, \eqref{eq:A1121-1}--\eqref{eq:A1122}, \eqref{eq:B3}--\eqref{eq:C}, we derive (7.4).
		\end{proof}
		
		\begin{proof}[Proof of Lemma~3.2]
			The statement can be shown along the lines of the proof of Theorem 3.1 in \cite{CT22} (specialized to the case $H=\frac12$). However, some modifications are needed due to the stochastic intensity $\la(t,z)$ in our setting. In a first step, let us define
			\begin{align*}
				x'_{t+T}	&=x_t+\al_tT +\si_t(W_{t+T}-W_t)+\int_t^{t+T}\Bigl( \si^\si_t (W_s-W_t)+\ov\si^\si_t (\ov W_s-\ov W_t)\Bigr)dW_s    \\
				&\quad+ \int_t^{t+T}\iint_t^s\ga^\si(t,z)\wh\mu(dr,dz)dW_s+\iint_t^{t+T} \ga(s,z)\wh\mu(ds,dz),\\
				x''_{t+T}	&=x'_{t+T}+\iint_t^{t+T} (\ga(t,z)-\ga(s,z))\wh\mu(ds,dz).
			\end{align*}
			Then $x'_{t+T}$ and $x''_{t+T}$ are equal to $x'_{t+T}$ and $x''_{t+T}$ as defined after the proof of Lemma 7.1 in \cite{CT22} (specialized to $H=\frac12$), except that terms   of exact order $T^{3/2}$ have been omitted. As in the proof of the cited lemma, one can show that
			\[ \call_{t,T}(u)=\E_t[e^{iu_T(x'_{t+T}-x_t)}]+C_t(u)T+o^\uc(T). \]
			(The omitted terms    of   order $T^{3/2}$ enter the $C_t(u)T$-bin.) As a next step, we pass from $x'_{t+T}$ to $x''_{t+T}$ by expanding
			\begin{equation*}
				\E_t[e^{iu_T(x'_{t+T}-x_t)}]	=\E_t[e^{iu_T(x''_{t+T}-x_t)}]+\E_t[iu_Te^{iu_T(x''_{t+T}-x_t)}(x'_{t+T}-x''_{t+T})]+o^\uc(T).
			\end{equation*}
			Using the fact that 
			\begin{align*} 
				e^{iu_T(x''_{t+T}-x_t)}&=e^{iu_T\iint_t^{t+T} \theta(t,z)y(ds,dz)}+O^\uc(\sqrt{T})\\
				&=\Theta_{t,T}(u_T)\biggl(1+\sum_{k=1}^\infty \int_{\S^{(k)}_{t,T}}\wh\theta_t(u_T,\bz_k)\by_\mm(d\bs_k,d\bz_k)\biggr)+O^\uc(\sqrt{T}),
			\end{align*} 
			we derive
			\begin{align*}
				&iu_T\E_t[e^{iu_T(x''_{t+T}-x_t)}(x'_{t+T}-x''_{t+T})]	\\
				&\quad= iu_T\Theta_{t,T}(u_T)\sum_{k=1}^\infty\E_t\biggl[  \int_{\S^{(k)}_{t,T}}\wh\theta_t(u_T,\bz_k)\by_\mm(d\bs_k,d\bz_k) \\
				&\quad\quad\qquad\qquad\qquad\qquad\qquad\qquad\qquad\times\iint_t^{t+T}(\ga(s,z)-\ga(t,z))\wh\mu(ds,dz)\biggr]+o^\uc(T)\\
				&\quad=iu_T\Theta_{t,T}(u_T)\sum_{k=1}^\infty\iint_t^{t+T}(e^{iu_T\ga(t,z)}-1)\E_t\biggl[  \int_{\S^{(k-1)}_{t,s-t}}\wh\theta_t(u_T,\bz_{k-1})\by_\mm(d\bs_{k-1},d\bz_{k-1})\\
				&\quad\quad\qquad\qquad\qquad\qquad\qquad\qquad\qquad\times (\ga(s,z)-\ga(t,z))\la(s,z)\biggr]F(dz)ds+o^\uc(T)\\
				&\quad=iu_T\Theta_{t,T}(u_T)\sum_{k=1}^\infty\iint_t^{t+T}(e^{iu_T\ga(t,z)}-1)\E_t\biggl[  \int_{\S^{(k-1)}_{t,s-t}}\wh\theta_t(u_T,\bz_{k-1})\by_\mm(d\bs_{k-1},d\bz_{k-1})\\
				&\quad\quad\qquad\qquad\qquad\qquad\qquad\qquad\qquad\times (\ga(s,z)-\ga(t,z))\biggr]\la(t,z)F(dz)ds+o^\uc(T).
			\end{align*}
			By writing $\ga(s,z)-\ga(t,z)$ as a semimartingale increment with characteristics frozen at time $t$ plus an $O(\sqrt{T})$-term, we realize that only $k=2$ yields a contribution that is not $o^\uc(T)$. A short calculation results in
			\begin{equation}\label{eq:exp-1} 
				\begin{split}
					&	iu_T\E_t[e^{iu_T(x''_{t+T}-x_t)}(x'_{t+T}-x''_{t+T})]\\
					&\quad=-\frac12u^2T\Theta_{t,T}(u_T)\si_t\int_{\R^{d'}}(e^{iu_T\ga(t,z)}-1)\si^\ga(t,z)\nu_t(dz)\\
					&\qquad+\frac12iuT^{3/2}\Theta_{t,T}(u_T)\chi^{(2)}_t(u_T).
				\end{split}
			\end{equation}
			
			Next, we evaluate $\E_t[e^{iu_T(x''_{t+T}-x_t)}]$ by expanding
			\begin{align*}
				&	\E_t[e^{iu_T(x''_{t+T}-x_t)}]	=\E_t[e^{iu_T\iint_t^{t+T}\theta(t,z)y(ds,dz)}]+ iu_T\E_t\biggl[e^{iu_T\iint_t^{t+T}\theta(t,z)y(ds,dz)}\\
				&\qquad\times\int_t^{t+T}\biggl( \si^\si_t (W_s-W_t)+\ov\si^\si_t (\ov W_s-\ov W_t)+\iint_t^s\ga^\si(t,z)\wh\mu(dr,dz)\biggr)dW_s\biggr]. 
			\end{align*}
			As in the proof of Lemma~7.5 in \cite{CT22}, one can show that the second term equals
			\begin{equation}\label{eq:exp-2} 
				-\frac{1}{2}iu^3{\textstyle\sqrt{T}}\Theta_{t,T}(u_T)\si_{t}^2\si^\si_{t}-\frac12u^2T\Theta_{t,T}(u_T)\si_t\int_{\R^{d'}}(e^{iu_T\ga(t,z)}-1)\ga^\si(t,z)\nu_t(dz).
			\end{equation}
			The last term to be analyzed is $\E_t[e^{iu_T\iint_t^{t+T}\theta(t,z)y(ds,dz)}]$. To this end, we use the fact that
			\begin{align*}
				&e^{iu_T\iint_t^{t+T} \theta(t,z)y(ds,dz)}\\
				&\quad=\exp\biggl(\iint_t^{t+T} (e^{iu_T\ga(t,z)}-1-iu_T\ga(t,z))(\la(s,z)-\la(t,z))F(dz)ds \biggr)\\
				&\qquad\times \Theta_{t,T}(u_T)\biggl(1+\sum_{k=1}^\infty \int_{\S^{(k)}_{t,T}}\wh\theta_t(u_T,\bz_k)\by_\mm(d\bs_k,d\bz_k)\biggr)
			\end{align*}
			to get
			\begin{align*}
				&\E_t[e^{iu_T\iint_t^{t+T} \theta(t,z)y(ds,dz)}]\\
				&\quad	=\Theta_{t,T}(u_T)+\Theta_{t,T}(u_T)\E_t\biggl[ \biggl(1+\sum_{k=1}^\infty \int_{\S^{(k)}_{t,T}}\wh\theta_t(u_T,\bz_k)\by_\mm(d\bs_k,d\bz_k)\biggr)\\
				&\qquad\times\iint_t^{t+T} (e^{iu_T\ga(t,z)}-1-iu_T\ga(t,z))(\la(s,z)-\la(t,z))F(dz)ds\biggr] +o^\uc(T).
			\end{align*}
			It can be shown that only the term $k=1$ has  a nonnegligible contribution to the last conditional expectation. In computing the covariance between the $y_\mm(ds,dz)$-integral and $\la(s,z)-\la(t,z)$, we can further ``freeze'' characteristics at $t$, which shows that the last conditional expectation above equals
			\begin{equation}\label{eq:exp-3} \frac12iuT^{3/2}\si_t\chi^{(3)}_t(u_T)+\frac12T^2\chi^{(4)}_t(u_T)+o^\uc(T). \end{equation}
			We deduce (3.6) from \eqref{eq:exp-1}, \eqref{eq:exp-2} and \eqref{eq:exp-3}.
		\end{proof}
		
		\begin{proof}[Proof of Lemma 8.1]
			For   (8.1) and (8.2), we refer the reader to Lemma 1 in \cite{T19}. For (8.3), we consider $k> 0$ first. Using the notation following \eqref{eq:not2} and the elementary inequality $\lvert (Y-K)_+ - (X-K)_+\rvert \leq \lvert Y-X\rvert$, where $X,Y,K\in\R$ and $x_+=x\vee 0$, we have that 
			\begin{equation*}
				\lvert O_{s,T}(k+x_s)-\E^\Q_s[(e^{x_s+\ov x_{s,T}}-e^{k+x_s})_+] \rvert	 \leq \E^\Q_s[\lvert e^{x_{s+T}} - e^{x_s+\ov x_{s,T}}\rvert].
			\end{equation*}
			Arguing as in the proof of Lemma~3 of \cite{QT19} and using the last bound in \eqref{eq:remain}, one can show that the right-hand side is bounded by $C_s\E[(x_{s+T}-x_s-\ov x_{s,T})^2]^{1/2}\leq C_s T^{(N+1)/2}$, which in turn is bounded by $C_s\sqrt{\Den T}$. A similar argument shows the same bound for $k<0$. Therefore, it suffices to prove (8.3) with $O_{s,T}(k+x_s)$ replaced by $\ov O_{s,T}(k+x_s) = \E^\Q_s[(e^{x_s+\ov x_{s,T}}-e^{k+x_s})_+]$ and $O_{r,T+(s-r)}(k+x_r)$ replaced by $\ov O_{r,T+(s-r)}(k+x_r)$.

			By the Carr--Madan formula (1.2) (applied to $x_s+\ov x_{s,T}$) and Fourier inversion, we have
			\begin{align*}
				e^{-k-x_s}\ov O_{s,T}(k+x_s)&=\frac{1}{2\pi} \int_\R e^{-iuk}\frac{1-\E^\Q_s[e^{iu\ov x_{s,T}}]}{u^2+iu} du \\
				&= \frac{1}{2\pi} \int_\R e^{-iuk/\sqrt{T}}\frac{1-\E^\Q_s[e^{iu\ov x_{s,T}/\sqrt{T}}]}{u^2/\sqrt{T}+iu} du.
			\end{align*}
			In what follows, we will reuse estimates from the proof of Proposition~7.1, which is why we   consider the case $s=t^n_{i-1}$, $T=T^n_{i-1}$ and $r=t^n_i$ only. The proof, of course, applies to general $s$, $T$ and $r$ as well. By the formula in the previous display and \eqref{eq:ABC},
			\begin{align*}
				&e^{-k-x_{t^n_{i-1}}}\ov O_{t^n_{i-1},T^n_{i-1}}(k+x_{t^n_{i-1}})-e^{-k-x_{t^n_i}}\ov O_{{t^n_i},T^n_i}(k+x_{t^n_i})	\\
				&\quad= -\frac1{2\pi}\int_\R e^{-iu_{T^n_{i-1}}k}\frac{\E^\Q_{t^n_{i-1}}[e^{iu_{T^n_{i-1}}\ov x_{t^n_{i-1},T^n_{i-1}}}]-\E^\Q_{t^n_i}[e^{iu_{T^n_{i-1}}\ov x_{t^n_{i},T^n_i}}]}{u^2/\sqrt{T^n_{i-1}}+iu} du\\
				&\quad=-\frac1{2\pi}\int_\R e^{-iu_{T^n_{i-1}}k}\frac{ A^{n,i}_{t,T}(u)}{u^2/\sqrt{T^n_{i-1}}+iu} du + \frac1{2\pi}\int_\R e^{-iu_{T^n_{i-1}}k}\frac{B^{n,i}_{t,T}(u)}{u^2/\sqrt{T^n_{i-1}}+iu} du.
			\end{align*}
			Denote these two terms by $\wt A^{n,i}_{t,T}$ and $\wt B^{n,i}_{t,T}$. Since $\E^\Q_{t^n_i}[\lvert A^{n,i}_{t,T}(u)\rvert] \leq C_{t^n_i}( \lvert u\rvert \sqrt{\Den}\wedge 1)$, we have
			\begin{align*}
				\E^\Q_{t^n_i}[\lvert  \wt A^{n,i}_{t,T} \rvert ] 	&\leq C_{t^n_i} \int_\R \frac{\lvert u\rvert \sqrt{\Den}\wedge 1}{u^2/\sqrt{T^n_{i-1}}\vee \lvert u\rvert} du\\
				& \leq 2C_{t^n_i}\biggl(\int_0^{\sqrt{T^n_{i-1}}} \sqrt{\Den} du + \int_{\sqrt{T^n_{i-1}}}^{1/\sqrt{\Den}} \frac{\sqrt{\Den T^n_{i-1}}}{u} du + \int_{1/\sqrt{\Den}}^\infty \frac{\sqrt{T^n_{i-1}}}{u^2} du\biggr)\\
				&\leq C_{t^n_i}\sqrt{\Den T}\log \Den^{-1}.
			\end{align*}
			Next, as in   \eqref{eq:Bni}, one can show that
			$
			\E^\Q_{t^n_{i}}[\lvert B^{n,i}_{t,T}(u) - B^{n,i,1}_{t,T}(u) \rvert ] 	\leq C_{t^n_i}(\lvert u\rvert  \sqrt{\Den}\wedge 1)
			$,
			so arguing as above, we conclude that only the contribution of $B^{n,i,1}_{t,T}(u)$ to $\wt B^{n,i}_{t,T}$ needs to be considered further. Because $\E^\Q_{t^n_{i}}[\lvert B^{n,i,1}_{t,T}(u)-B^{n,i,11}_{t,T}(u)-B^{n,i,12}_{t,T}(u)\rvert]\leq C_{t^n_i}(u^2\Den^{3/2}/T\wedge 1) \leq C_{t^n_i}(u^2\Den\wedge 1)\leq C_{t^n_i}(\lvert u\rvert \sqrt{\Den}\wedge 1)$, it actually suffices to analyze the contributions of $B^{n,i,11}_{t,T}(u)$ and $B^{n,i,12}_{t,T}(u)$ to $\wt B^{n,i}_{t,T}$, which we denote by $\wt B^{n,i,11}_{t,T}$ and $\wt B^{n,i,12}_{t,T}$ in the following.
			
			Regarding $\wt B^{n,i,11}_{t,T}$, note that we can remove terms that correspond to $k\geq3$ in the definition of $B^{n,i,11}_{t,T}(u)$. Indeed, this leads to an error of order  $O((u^2\Den/T \wedge 1)(\lvert u\rvert T\wedge 1))$ in $B^{n,i,11}_{t,T}(u)$ and to an error of order 
			\begin{align*}
				&\int_\R \frac{(u^2\Den/T\wedge 1)(uT \wedge1)}{u^2/\sqrt{T}\vee \lvert u\rvert} du\\
				& \quad\leq 2\biggl(\int_0^{\sqrt{T/\Den}} u\Den\sqrt{T} du + \int_{\sqrt{T/\Den}}^{1/T} \frac{T^{3/2}}{u} du + \int_{1/T}^\infty \frac{\sqrt{T}}{u^2} du\biggr)\\
				&\quad\leq CT^{3/2}\log (\sqrt{\Den}/T^{3/2})= O(\sqrt{\Den T}\log \Den^{-1})
			\end{align*}
			in $\wt B^{n,i,11}_{t,T}$  (recall that $\sqrt{\Den}\asymp T$) . As a result, we can substitute
			$$ (\Theta_{t^n_i,\Den}(u_{T^n_{i-1}})-1 )\E^\Q_{t^n_i}\biggl[\exp\biggl(iu_{T^n_{i-1}} \sum_{k=1}^2 \int_{\S^{(k)}_{t^n_{i-1},T^n_{i-1}}} \theta(t^n_i,\bz_k)\by (d\bs_k,d\bz_k)\biggr)  \biggr]$$
			for $B^{n,i,11}_{t,T}(u)$. In fact, this can be further simplified: with similar estimates, one can show that we only need to keep the drift term if $k=1$ and, by Assumption~2, the jump component $y(ds_1,dz_1)$ if $k=1$. Thus, we can replace $\wt B^{n,i,11}_{t,T}$ by $\wh B^{n,i,11}_{t,T}$, which is defined in the same way but uses
			\begin{align*}
				\wh B^{n,i,11}_{t,T}(u)	&= (\Theta_{t^n_i,\Den}(u_{T^n_{i-1}})-1 )\E^\Q_{t^n_i}\biggl[\exp\biggl(iu_{T^n_i}\biggl( \al {T^n_{i-1}}+   \iint_{t^n_{i-1}}^{t+T} \ga(t^n_i,z)\wh\mu(ds,dz)\biggr)\\
				&\quad+iu_{T^n_{i-1}} \int_{t^n_{i-1}}^{t+T} \biggl(\si_{t^n_i}+\iint_{t^n_{i-1}}^{s} \ga^\si(t^n_i,z)\wh\mu(dr,dz)\biggr) dW_s \\
				&\quad+ iu_{T^n_{i-1}} \int_{t^n_{i-1}}^{t+T} \int_{t^n_{i-1}}^s (\si^\si_{t^n_i} dW_r+\ov\si^\si_{t^n_i} d\ov W_r) dW_s\biggr)  \biggr]
			\end{align*}
			instead of    $B^{n,i,11}_{t,T}(u)$. Next,  note that by Assumption~2, we have $\int_\R\lvert\ga(t^n_i,z)\rvert\la(dz)<\infty$, so replacing $\wh\mu$ by $\wh\mu^{n,i}$ in the above display only leads to an error of order $O((u^2\Den/T \wedge 1)(\lvert u\rvert(\sqrt{\Den T} +\Den^{1/4}\sqrt{T})))=O((u^2\Den/T \wedge 1)(\lvert u\rvert T\wedge 1))$, which is negligible as we have seen above. Therefore, we may further replace $\wh B^{n,i,11}_{t,T}$ by $\ov B^{n,i,11}_{t,T}$, which  uses
			\begin{align*}
				\ov B^{n,i,11}_{t,T}(u)	&= (\Theta_{t^n_i,\Den}(u_{T^n_{i-1}})-1 )\E^\Q_{t^n_i}\biggl[\exp\biggl(iu_{T^n_i}\biggl( \al {T^n_{i-1}}+   \iint_{t^n_{i-1}}^{t+T} \ga(t^n_i,z)\wh\mu^{n,i}(ds,dz)\biggr)\\
				&\quad+iu_{T^n_{i-1}} \int_{t^n_{i-1}}^{t+T} \biggl(\si_{t^n_i}+\iint_{t^n_{i-1}}^{s} \ga^\si(t^n_i,z)\wh\mu^{n,i}(dr,dz)\biggr) dW_s \\
				&\quad+ iu_{T^n_{i-1}} \int_{t^n_{i-1}}^{t+T} \int_{t^n_{i-1}}^s (\si^\si_{t^n_i} dW_r+\ov\si^\si_{t^n_i} d\ov W_r) dW_s\biggr)  \biggr]
			\end{align*}
			instead of    $\wh B^{n,i,11}_{t,T}(u)$.

			%
			
			By conditioning on $\wh\mu^{n,i}$, we can use Theorem A.1 of \cite{CT22} to write
			\begin{align*}
				\ov B^{n,i,11}_{t,T}(u)	&= (\Theta_{t^n_i,\Den}(u_{T^n_{i-1}})-1 )\E^\Q_{t^n_i}\biggl[\exp\biggl(iu_{T^n_i}\biggl( \al {T^n_{i-1}}+   \iint_{t^n_{i-1}}^{t+T} \ga(t^n_i,z)\wh\mu^{n,i}(ds,dz)\biggr)\\
				&\quad-\frac12\sum_{j=1}^\infty \biggl(\log(1-2iu_{T^n_{i-1}}\al_j) + 2iu_{T^n_{i-1}}\al_j + \frac{\beta_j^2}{1-2iu_{T^n_{i-1}}\al_j}u^2_{T^n_{i-1}}\biggr)\biggl)\biggr]\\
				&= (\Theta_{t^n_i,\Den}(u_{T^n_{i-1}})-1 )\E^\Q_{t^n_i}\biggl[\exp\biggl(iu_{T^n_i}\biggl( \al {T^n_{i-1}}+   \iint_{t^n_{i-1}}^{t+T} \ga(t^n_i,z)\wh\mu^{n,i}(ds,dz)\biggr)\\
				&\quad-\frac12u^2_{T^n_{i-1}}\sum_{j=1}^\infty \frac{\beta_j^2}{1+4u^2_{T^n_{i-1}}\al^2_j} -iu^3_{T^n_{i-1}}\sum_{j=1}^\infty \frac{\al_j\beta_j^2}{1+4u^2_{T^n_{i-1}}\al^2_j}+ O\biggl(u^2_T\sum_{j=1}^\infty\al_j^2\biggr) \biggl)\biggr],
			\end{align*}
			where the sequences $(\al_j)_{j\geq1}$ and $(\beta_j)_{j\geq1}$  are   random variables (independent of $u$ but dependent on $T$, $n$ and $i$) satisfying
			\begin{align*}
				\sum_{j=1}^\infty\al_j^2	&=\frac12\E^\Q_{t^n_i}\biggl[\biggl(\int_{t^n_{i-1}}^{t+T} \int_{t^n_{i-1}}^s (\si^\si_{t^n_i} dW_r+\ov\si^\si_{t^n_i} d\ov W_r) dW_s\biggr)^2\biggr] = \frac14[(\si^\si_{t^n_i})^2+(\ov\si^\si_{t^n_i})^2](T^n_{i-1})^2, \\
				\sum_{j=1}^\infty\beta_j^2	&=\E^\Q_{t^n_i}\biggl[\biggl(\int_{t^n_{i-1}}^{t+T} \biggl(\si_{t^n_i}+\iint_{t^n_{i-1}}^{s} \ga^\si(t^n_i,z)\wh\mu^{n,i}(dr,dz)\biggr) dW_s \biggr)^2\mathrel{\Big|}\wh\mu^{n,i}\biggr]\\ 
				&= \int_{t^n_{i-1}}^{t+T} \biggl(\si_{t^n_i}+\iint_{t^n_{i-1}}^{s} \ga^\si(t^n_i,z)\wh\mu^{n,i}(dr,dz)\biggr)^2 ds.
			\end{align*}
			Taking absolute values and bounding $\al_j^2 \leq \sum_{j=1}^\infty \al_j^2$, we obtain
			\begin{align*}
				\lvert\ov B^{n,i,11}_{t,T}(u)\rvert	&\leq (1-\Theta_{t^n_i,\Den}(u_{T^n_{i-1}}))\E^\Q_{t^n_i}\biggl[ \exp\biggl(-\frac{u^2}{2(1+u^2[(\si^\si_{t^n_i})^2+(\ov\si^\si_{t^n_i})^2]T^n_{i-1})} \\
				&\quad\times\frac{1}{T^n_{i-1}}\int_{t^n_{i-1}}^{t+T} \biggl(\si_{t^n_i}+\iint_{t^n_{i-1}}^{s} \ga^\si(t^n_i,z)\wh\mu^{n,i}(dr,dz)\biggr)^2 ds + C_{t^n_i}u^2T\biggr)\wedge1\biggr].
			\end{align*}
			Note that
			\begin{align*}
				\E^\Q_{t^n_i}\biggl[\biggl(\sup_{s\in[t^n_{i-1},t+T]}\biggl\lvert\iint_{t^n_{i-1}}^{s} \ga^\si(t^n_i,z)\wh\mu^{n,i}(dr,dz)\biggr\rvert\biggr)^2\biggr]	\leq C_{t^n_i}T
			\end{align*}
			by Doob's inequality. Therefore, distinguishing whether or not we are on the event $A=\{\sup_{s\in[t^n_{i-1},t+T]}\lvert\iint_{t^n_{i-1}}^{s} \ga^\si(t^n_i,z)\wh\mu^{n,i}(dr,dz)\rvert>\si_{t^n_i}/2\}$, we derive the estimate
			\begin{align*}
				\lvert\ov B^{n,i,11}_{t,T}(u)\rvert	&\leq (1-\Theta_{t^n_i,\Den}(u_{T^n_{i-1}}))\Bigl[\Bigl( 4T/\si_{t^n_i}^2 + e^{-\frac14\si_{t^n_i}^2u^2/(1+C_{t^n_i}u^2T) + C_{t^n_i}u^2T}\Bigr)\wedge1\Bigr]\\
				&\leq C_{t^n_i} (u^2\Den/T\wedge1)\Bigl(T+e^{C_{t^n_i}-u^2{\si^2_{t^n_i}}/(4(1+C_{t^n_i}))}\bone_{\{u^2T\leq 1\}} \\
				&\quad+ e^{-\si^2_{t^n_i}/(16T)}\bone_{\{1<u^2T\leq \si^2_{t^n_i}/(16C_{t^n_i}T)\}} + \bone_{\{u^2T> \si^2_{t^n_i}/(16C_{t^n_i}T)\}}\Bigr).
			\end{align*}
			This implies that
			\begin{align*}
				\int_\R \frac{\lvert\ov B^{n,i,11}_{t,T}(u)\rvert}{u^2/\sqrt{T}\vee \lvert u\rvert} du&\leq C_{t^n_i} \biggl(\int_\R \frac{u^2\Den\wedge T}{u^2/\sqrt{T}\vee \lvert u\rvert} du+\int_\R  e^{-c_{t^n_i}(u^2\wedge T^{-1})}\frac{u^2\Den/T\wedge 1}{u^2/\sqrt{T}\vee \lvert u\rvert} du\\
				&\quad + \int_\R   \frac{u^2\Den/T\wedge 1}{u^2/\sqrt{T}\vee \lvert u\rvert}\bone_{\{ \lvert u\rvert > c_{t^n_i}/T \}} du\biggr)\\
				&\leq C_{t^n_i}(\sqrt{\Den}T+\Den/\sqrt{T}+e^{-c_{t^n_i}/T}T + T^{3/2})=O(\sqrt{\Den T}),
			\end{align*}
			where $c$ is a positive $\F$-adapted process that is locally bounded away from $0$. As a result, $\wt B^{n,i,11}_{t,T} = O(\sqrt{\Den T}\log\Den^{-1})$.
			
			Finally, regarding $\wt B^{n,i,12}_{t,T}$, a similar argument shows that we can dismiss terms with $k\geq1$ in the sum $\sum_{k=1}^N$ and afterwards replace $\by(d\bs_k,d\bz_k)$ by $\by^{n,i}(d\bs_k,d\bz_k)$. Upon realizing that the resulting conditional expectation is zero (because the first term is $\calf_{t^n_{i-1}}$-measurable and centered and the second term is independent of $\calf_{t^n_{i-1}}$), we conclude that  $\wt B^{n,i,12}_{t,T} = O(\sqrt{\Den T}\log\Den^{-1})$.
		\end{proof}
		
		\begin{proof}[Proof of Lemma 8.2]
			Let $\cale$ denote the characteristic function of $\ov\eps_{j,t,T}$ (which does not depend on $j$, $t$ or $T$ by Assumption~3) and 
			\begin{equation}\label{eq:beta-2}\begin{split}
					\beta_{j,t,T}(u)&=\frac{2F'(\si^2_{t,T}(u))}{u^2\lvert \call_{t,T}(u)\rvert^2}\biggl[ \Re\bigl(\call_{t,T}(u)\bigr)\Re\biggl(\biggl(\frac{u^2}{T} +i \frac{u}{\sqrt{T}}\biggr)e^{(iu/\sqrt{T}-1)(k_{j-1,t,T}-x_t)}\biggr) \\
					&\quad \qquad+\Im\bigl(\call_{t,T}(u)\bigr)\Im\biggl(\biggl(\frac{u^2}{T} +i \frac{u}{\sqrt{T}}\biggr)e^{(iu/\sqrt{T}-1)(k_{j-1,t,T}-x_t)}\biggr)\biggr]\\
					&\quad\times e^{-x_t}\zeta_{t,1}(k_{j-1,t,T}-x_t)O_{t,T}(k_{j-1,t,T})\delta_{j,t,T}.
			\end{split}\end{equation}
			Then $\eps_{t,T}(u)=\sum_{j=2}^{N_{t,T}} \beta_{j,t,T}(u)\ov\eps_{j-1,t,T}$ and $\E[e^{iU_n\eps_{t,T}(u)}\mid \calf]=\prod_{j=2}^{N_{t,T}} \cale(U_n\beta_{j,t,T}(u))$. Since $\E[\ov\eps_{j,t,T}]=0$ by assumption and $\ov\eps_{j,t,T}$ has finite moments of all orders, we have
			\[ \biggl\lvert\cale(U_n\beta_{j,t,T}(u))-\biggl(1+ \frac{(iU_nm_2\beta_{j,t,T}(u))^2}{2!}\biggr) \biggr\rvert \leq \frac{(U_nm_{3}\beta_{j,t,T}(u))^{3}}{3!}.\]
			By a classical localization argument, we can assume that $x_t$, $\zeta_{t,1}(k)$, $\rho_{t,1}(k)$ (as defined in Assumption~3) and their derivatives with respect to $k$  are uniformly bounded. In this case, using (8.1) and the assumption $\delta/\sqrt{T}\to0$, one has
			\begin{equation}\label{eq:beta-bound} \begin{split}
					\lvert \beta_{j,t,T}(u)\rvert &\leq \begin{cases} C(u)e^{-\lvert k_{j-1,t,T}-x_t\rvert} \delta_{j,t,T} &\text{if } \lvert k_{j-1,t,T}-x_t\rvert>1,\\
						C(u)\delta_{j,t,T}/\lvert k-x_s\rvert &\text{if } \sqrt{T}<\lvert k_{j-1,t,T}-x_t\rvert<1,\\ C(u)	 {\delta_{j,t,T}}/{\sqrt{T}} &\text{if }\lvert k_{j-1,t,T}-x_t\rvert\leq \sqrt{T}\end{cases}\\
					& \leq C(u), \end{split}
			\end{equation}
			where $C(u)$ does not depend on $j$, $t$ or $T$ and is uniformly bounded on compacts in $u$. Since $\lvert\prod_{j=1}^n a_j-\prod_{j=1}^n b_j\rvert \leq (\bigvee_{j=1}^n (\lvert a_j\rvert \vee\lvert b_j\rvert))^{n-1} \sum_{j=1}^n \lvert a_j-b_j\rvert$, we obtain
			\begin{equation}\label{eq:prod}\begin{split}
					&\E[e^{iU_n\eps_{t,T}(u)}\mid \calf]-\prod_{j=2}^{N_{t,T}}\biggl(1+  \frac{(iU_nm_2\beta_{j,t,T}(u))^2}{2}\biggr) \\
					&\quad 	= O^\uc\Biggl(\sum_{j=2}^{N_{t,T}}\frac{(U_nm_{3}\beta_{j,t,T}(u))^{3}}{3!}\Biggr)= O^\uc(\Den^{-3/2}(\delta/\sqrt{T})^{2})=o^\uc(\sqrt{\Den})
			\end{split}\end{equation}
			by (5.33). We can rewrite this as 
			\[ \E[e^{iU_n\eps_{t,T}(u)}\mid \calf]=\prod_{j=2}^{N_{t,T}} \biggl(1+  \frac{(iU_nm_2\beta_{j,t,T}(u))^2}{2}\biggr) + o^\uc(\sqrt{\Den}).\]
			Taking complex logarithm on both sides and applying  a first-order Taylor expansion (both of which are permitted because $\beta_{j,t,T}(u)\to0$ uniformly), we arrive at
			\begin{align*}
				\Log\E[e^{iU_n\eps_{t,T}(u)}\mid \calf]&=\sum_{j=2}^{N_{t,T}}\Log \biggl(1+ \frac{(iU_nm_2\beta_{j,t,T}(u))^2}{2}\biggr) + o^\uc(\sqrt{\Den}) \\
				& =\sum_{j=2}^{N_{t,T}} \frac{(iU_nm_2\beta_{j,t,T}(u))^2}{2}  + o^\uc(\sqrt{\Den})\\
				& =-\frac12m_2^2 U_n^2  \sum_{j=2}^{N_{t,T}} (\beta_{j,t,T}(u))^{2}+ o^\uc(\sqrt{\Den}).
			\end{align*}
			Define $ \beta_{2,j,t,T}(u)$ in the same way as $\beta_{2,t,T}(u)_k$ from (8.4) but with $k$ replaced by $k_{j-1,t,T}$.
			Then, 
			\begin{equation*}
				\sum_{j=2}^{N_{t,T}} (\beta_{j,t,T}(u))^{2}-\int_\R \beta_{2,t,T}(u)_kdk	= I^1_{t,T}(u)+I^2_{t,T}(u) + I^3_{t,T}(u),
			\end{equation*}
			where
			\begin{align*}
				I^1_{t,T}(u)	&=\sum_{j=2}^{N_{t,T}} \int_{k_{j-1,t,T}}^{k_{j,t,T}} [\beta_{2,j,t,T}(u)-\beta_{2,t,T}(u)_k] dk, \\
				I^2_{t,T}(u)	&= -\int_{-\infty}^{\un k_{t,T}} (\beta_{2,t,T}(u)_k dk -\int_{\ov k_{t,T}}^\infty \beta_{2,t,T}(u)_k dk,  \\
				I^3_{t,T}(u)&=\sum_{j=2}^{N_{t,T}} [(\beta_{j,t,T}(u))^{2}-\beta_{2,j,t,T}(u)\delta_{j,t,T}].
			\end{align*}
			Using (5.11), (5.12), (5.33) and (8.1), one can show that $U_n^2I^3_{t,T}(u)= O^\uc(\delta(\delta/\sqrt{T})/\Den)=o^\uc(\sqrt{\Den})$ and $U_n^2I^2_{t,T}(u)=O^\uc(e^{-(\lvert \un k_{t,T}\rvert\wedge \lvert \ov k_{t,T}\rvert)}/\Den)=O^\uc(e^{-(\delta/\sqrt{T})^\iota}/\Den)=o^\uc(\sqrt{\Den})$. Similarly, the previous bounds in conjunction with (8.2) and the differentiability properties of $\rho_{t,1}(k)$ and $\zeta_{t,1}(k)$ with respect to $k$  yield $U_n^2I^1_{t,T}(u)=O^\uc((\delta/\sqrt{T})^2/\Den)=o^\uc(\sqrt{\Den})$, which completes the proof of (8.5).
		\end{proof}
		
		
	\end{appendix}
	
\end{document}